\pdfoutput=1
\documentclass[sigconf]{acmart}
\usepackage{graphicx}
\usepackage{textcomp}
\usepackage[dvipsnames]{xcolor}
\usepackage{amsmath,amsthm,amsfonts}
\usepackage{thm-restate}
\usepackage{bbm}
\usepackage{xspace}
\usepackage{tikz}
\usepackage{color}
\usepackage{hyperref}
\usepackage{physics}
\usepackage[inline]{enumitem}
\usepackage{balance}

\usepackage[noend]{algpseudocode}
\usepackage{algorithm}
\usepackage{subcaption}
\usepackage{booktabs}
\usepackage{multirow}
\usepackage{pdfpages}
\usepackage{cleveref}
\newcommand{\card}[1]{\left|#1\right|}

\newcommand{\rk}[1]{\mathfrak {#1}}

\usepackage{hyperref}

\newtheorem{claim}{Claim}
\newtheorem{remark}{Remark}

\usepackage{tikz}
\usetikzlibrary{positioning, fit, calc}
\usetikzlibrary{svg.path}
\usetikzlibrary{shapes.geometric}
\usetikzlibrary{patterns}
\usetikzlibrary{arrows.meta} %

\setenumerate[1]{label={(\arabic*)}}

\definecolor{blue}{RGB}{0,50,200}
\definecolor{magenta}{RGB}{255,0,255}
\hypersetup{colorlinks,linkcolor=blue,urlcolor=blue,citecolor=magenta}

\usetikzlibrary{positioning}

\newcommand{\prm}{\textsc{Circles}}
\newcommand{\tiereport}{\textsc{Circles\_TieReport}}
\newcommand{\tieshare}{\textsc{Circles\_TieSharing}}
\newcommand{\tiebreak}{\textsc{Circles\_TieBreak}}
\newcommand{\prmo}{\textsc{Ord-Circles}}
\newcommand{\po}{\textsc{Ordering}}

\newcommand{\N}{\mathbb{N}}
\newcommand{\ceil}[1]{\lceil#1\rceil}

\usepackage{etoolbox}

\newcommand{\appsymb}{$\star$}
\newcommand{\appref}[1]{{\hyperref[proof:#1]{\appsymb}}}

\newcommand{\appendixproof}[2]{%
  \gappto{\appendixProofs}
  {
    \subsection{Proof of \Cref{#1}}\label{proof:#1}
    #2
  }
}

\begin{document}

\title{Ranking Opinions with Few States in Population Protocols}

\author{Tom-Lukas Breitkopf}
\orcid{0009-0008-2875-1945}
\affiliation{%
  \institution{%
  Technische Universität Berlin}
  \city{Berlin}
  \country{Germany}
}
\email{t.breitkopf@tu-berlin.de}

\author{Julien Dallot}
\orcid{0009-0008-1286-1373}
\affiliation{%
  \institution{%
  Technische Universität Berlin}
  \city{Berlin}
  \country{Germany}}
\email{julien.dallot@tu-berlin.de}

\author{Antoine El-Hayek}
\orcid{0000-0003-4268-7368}
\affiliation{%
  \institution{Institue of Science and Technology Austria}
  \city{Klosterneuburg}
  \country{Austria}
}
\email{antoine.el-hayek@ist.ac.at}

\author{Stefan Schmid}
\orcid{0000-0002-7798-1711}
\affiliation{%
  \institution{%
  Technische Universität Berlin}
  \city{Berlin}
  \country{Germany}}
\email{stefan.schmid@tu-berlin.de}

\renewcommand{\shortauthors}{Breitkopf, Dallot, El-Hayek and Schmid}

\acmYear{2026}\copyrightyear{2026}
\setcopyright{cc}
\setcctype[4.0]{by}
\acmConference[PODC '26]{ACM Symposium on Principles of Distributed Computing}{July 6--10, 2026}{Egham, United Kingdom}
\acmBooktitle{ACM Symposium on Principles of Distributed Computing (PODC '26), July 6--10, 2026, Egham, United Kingdom}
\acmDOI{10.1145/3796701.3815913}
\acmISBN{979-8-4007-2512-8/26/07}

\begin{abstract}
  Population protocols are a model of distributed computing where $n$ agents, each a simple finite-state machine, interact in pairs to solve a common task against a (adversarial) interaction scheduler.
  This model was intensively studied in recent years; in particular, the problem of relative majority received much attention: Each agent starts with an input opinion (or color) out of~$k$ possibilities, and the goal is for each agent to eventually output the color with the largest support in the population.
  Before our work, the state complexity (the minimum number of states required per agent) was only known to be between $\Omega(k^2)$ and $O(k^{7})$.
  Our main contribution is a population protocol that solves the relative majority problem with $k^3$ states.
  We achieve this result with a new protocol called \textsc{Circles}.
  While prior approaches in the literature relied on duels of agents to find the majority color --- an approach that proved effective for the case with two colors~---~\textsc{Circles} partitions the agents into circular linked lists of decreasing sizes, with the property that no two agents with the same initial color lie in the same circle.
  We show that \textsc{Circles} always correctly computes the desired structure against the most adversarial of schedulers (weakly fair).
  We then show that a trivial extension of \textsc{Circles} solves the relative majority problem.
  We extend our protocol to handle various tie-breaking mechanisms or to support the case where the agents do not share a prior ordering of the colors.
  Finally, we show that a modification of \textsc{Circles} solves the ranking problem with~$2 \cdot k^4$ states, where each agent must output the rank of its initial color in the population.
\end{abstract}

\begin{CCSXML}
<ccs2012>
<concept>
<concept_id>10003752.10003753.10003761.10003763</concept_id>
<concept_desc>Theory of computation~Distributed computing models</concept_desc>
<concept_significance>300</concept_significance>
</concept>
<concept>
<concept_id>10010147.10010919.10010172.10003824</concept_id>
<concept_desc>Computing methodologies~Self-organization</concept_desc>
<concept_significance>300</concept_significance>
</concept>
</ccs2012>
\end{CCSXML}

\ccsdesc[300]{Theory of computation~Distributed computing models}
\ccsdesc[300]{Computing methodologies~Self-organization}

\keywords{Population Protocols, Ranking, Relative Majority, Plurality Consensus}

\maketitle

\section{Introduction}
\label{sec:intro}

Population protocols are a distributed computing model where a population of $n$ finite-state agents interacts in a chaotic manner, updating their state according to a common protocol upon interacting with each other.
Ever since their introduction in 2006 by Angluin et al.~\cite{angluin2006networksoffinitestatesensors}, population protocols have been widely studied and have become a standard setting to model key aspects of distributed computing~\cite{alistarh2024gamedynamicsinpopulationprotocols}.
Initially used as an abstraction for passively mobile finite-state sensor networks \cite{angluin2006networksoffinitestatesensors}, population protocols features intriguing connections to natural processes \cite{cardelli2012cellcycleswitchcomputesapproximatemajority}, and can be used to model certain chemical reactions~\cite{doty2014timingincrns}, dynamics in social groups or animal populations, or even gene regulatory networks~\cite{BBH01}.
The model further features connections to other theoretical models such as Petri nets, vector addition systems, semilinear sets and Presburger arithmetic \cite{angluin2006networksoffinitestatesensors}.
Multiple problems have already been studied in this model, the most illustrative of which are the leader election and majority problem~\cite{elsasser2018recentresultsinpp}.

This paper focuses on two related problems in the population protocol model, \textit{relative majority} and \textit{ranking}.
Both assume that initially every agent stores a color.
Relative majority then asks for each agent to output the most popular color, i.e., the one with the largest initial support in the population.
The ranking problem generalizes relative majority by asking each agent to output the position of its initial color ranked by popularity.
Both problems are fundamental tasks in distributed computing \cite{becchetti2015pluralityconsensus} in order to ensure fault tolerance or to enable conflict resolution.
The majority problem is also relevant outside computer science, with applications in physics, biology, statistics and sociology~\cite{berenbrink2016efficientpluralityconsensus}.
The ranking problem amounts to implementing a sorting algorithm in the population protocol model, which is a basic task in many subfields of computer science.

We are particularly interested in the fundamental question of how many states a protocol requires to solve a given task: the state complexity.
This question is of relevance as in practice, the number of states that can be used is severely limited~\cite{alistarh2018recentalgorithmicadvancesinpp}: tiny sensor nodes have a limited memory~\cite{angluin2006networksoffinitestatesensors}, chemical applications ideally require a small number of chemical species involved in the reactions~\cite{czerner2023lowerboundsonthestatecomplexityofpp}, and in DNA implementations the risk of faulty behavior increases with the number of states~\cite{alistarh2018recentalgorithmicadvancesinpp}.

\subsection{Related work}
\label{ssec:related_work}

The majority problem is one of the most studied problems in the population protocol model, along with leader election~\cite{elsasser2018recentresultsinpp}.
Majority is a special case of relative majority with two colors: the agents can have one of two colors as input and the goal is for each agent to output the color with the larger support in the population.
There exists a simple protocol solving majority with only $4$ states per agent, concurrently found in \cite{doi:10.1137/110823018} and \cite{mertzios2014determiningmajoritynetworkslocal}; this protocol is provably the best possible in terms of state complexity~\cite{10.1145/2767386.2767429}.
A lot of research focuses on the trade-off between time and state complexity~\cite{10.1145/2767386.2767429, alistarh2017timespacetradeoffspopulationprotocols, elsaesser21, alistarh2017spaceoptimalmajoritypopulationprotocols}.
This culminated in a protocol with $O(\log n)$ many states, solving majority in expected $O(n\log n)$ interactions assuming in every step of the execution two agents interact uniformly at random. This protocol by \citet{DBLP:conf/focs/DotyEGSUS21} is optimal space-wise for protocols running in $O(n\log n)$ time.

A first study of multiple colors in a related model was conducted in 2015 by~\citet{salehkaleybar2015distributedvoting}.
They give solutions for both relative majority and ranking.
Their definition of the ranking problem differs from ours, as they require every agent to output the whole ranking, not just the position of its own color in the ranking.
Their algorithms require an exponential number of states: $O(k \cdot 2^{k})$ states per agent for the relative majority problem and $O(k \cdot k!)$ states for the ranking problem, where $k$ is the number of possible colors.
They show that this bound is tight for the ranking problem -- intuitively, an agent of any color ($k$ many) should be able to output any possible ranking ($k!$ many).
We note that the exponential state complexity of this protocol is not only implied by its output requirement, but also by its internal logic. %
The authors conjectured that also the bound on the state complexity of relative majority if tight.
This conjecture was disproved in 2017 by G\k{a}sieniec et al.~\cite{gasieniec2017deterministicppforexactmajorityandpluarlity} who first broke the exponential barrier and gave a protocol that solves the relative majority problem with $O(k^7)$ states (or as frequently cited~$O(k^6)$ states in case the agents only have a binary output indicating whether their initial color is in relative majority).
The protocol orchestrates a distributed tournament between colors: the initial colors of each agent are written in binary form and define a binary tournament tree according to the longest common prefix rule.
The color that wins the tournament is deemed to be the majority color.

The above protocol relies on a hard-coded binary representation of the colors, which implies that the agents share a prior ordering on the colors.
Motivated by biological use cases, Natale and Ramezani~\cite{natale2019necessarymemorycomputeplurality} proposed a modified model where no such prior ordering is known, allowing agents only to memorize colors and to compare them only for equality (a formal definition of a related model where agents are only able to store a single immutable color can be found in~\cite{blondin2023populationprotocolsunordereddata}).
They proposed a protocol using $O(k^{11})$ states in the unordered model by combining the relative majority protocol of G\k{a}sieniec et al. with another protocol to assign a numerical label to each color.
\citet{natale2019necessarymemorycomputeplurality} also show that any protocol solving relative majority requires~$\Omega(k^2)$ states.

Apart from the work listed above concerned with always-correct protocols, there is also research on protocols correctly converging with high probability.
This relaxation allows providing protocols with~$O(k)$ to $O(k+\log n)$ states, achieving different guarantees of validity depending on the initial bias between the majority opinion and the second most popular opinion~\cite{DBLP:conf/soda/BankhamerBBEHKK22,DBLP:conf/podc/BankhamerBBEHKK22,DBLP:conf/podc/AmirABBHKL23,DBLP:conf/podc/El-HayekE025}. The additive bias required is typically $\Omega(\sqrt {n\log n})$ which motivates us in the search for an always-correct protocol.

\newcommand{\tspace}{\rule{0pt}{2.6ex}}
\newcommand{\bspace}{\rule[-0.9ex]{0pt}{0pt}}
\begin{table}[t]
  \centering
  \label{tab:related-work}
  \begin{tabular}{|l|l|c|}
    \hline
    \textbf{Model} & \textbf{Reference} & \textbf{State comp.} \tspace\bspace\\
    \hline
    \multirow{4}{*}{Ordered}
      & \citeauthor{salehkaleybar2015distributedvoting}~\cite{salehkaleybar2015distributedvoting} (2015) & $O(k \cdot 2^k)$ \tspace\\
    \cline{2-3}
      & \citeauthor{gasieniec2017deterministicppforexactmajorityandpluarlity}~\cite{gasieniec2017deterministicppforexactmajorityandpluarlity} (2017) & $O(k^7)$ \tspace\\
    \cline{2-3}
      & \textbf{This work} & $\mathbf{k^3}$ \tspace\\
    \cline{2-3}
      & Best-known lower bound~\cite{natale2019necessarymemorycomputeplurality} & $\Omega(k^2)$ \tspace\\
    \hline
    \multirow{2}{*}{Unordered}
      & \citeauthor{natale2019necessarymemorycomputeplurality}~\cite{natale2019necessarymemorycomputeplurality} (2019) & $O(k^{11})$ \tspace\\
    \cline{2-3}
      & \textbf{This work} & $\mathbf{O(k^4)}$ \tspace\\
    \hline
  \end{tabular}
  \vspace{0.2cm}
  \caption{
    Existing results on the state complexity to solve the relative majority problem with probability $1$ and for $k$ colors.
  }
  \vspace{-0.85cm}
\end{table}

\subsection{Our contribution}

This work presents a new always-correct protocol, \textsc{Circles}, that can be adapted to solve both the relative majority and the ranking problems in the population protocol model.
Unlike previous approaches for this problem~\cite{salehkaleybar2015distributedvoting, gasieniec2017deterministicppforexactmajorityandpluarlity, natale2019necessarymemorycomputeplurality}, our protocol does not rely on pairwise duels between colors but rather on a novel way to structure the agents into circular linked lists, ensuring that the smallest lists contains colors with the greatest support.
This approach enables us to improve upon the state of the art regarding the state complexity for both problems we consider.
For relative majority, we achieve a state complexity of~$O(k^3)$ under a weakly-fair scheduler~(see~\Cref{def:weakly_fair_scheduler}):

\begin{restatable}[The \textsc{Circles} protocol and relative majority]{theorem}{circrelat}
  The \prm{} protocol, which uses $k^2$ many states, denoted $\braket{i}{j}$ for $i,j \in [k]$, stabilizes under a weakly-fair scheduler.
  In the stable configuration, agents of the form $\braket{i}{i}$ exist if and only if $i$ is the majority color.
  Propagating the result to every agent can be achieved
by adding one output variable to the protocol, for a total of $k^3$ states.
\end{restatable}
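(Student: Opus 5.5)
The plan is to read each state $\braket{i}{j}$ as a directed edge $i \to j$ in the \emph{color multigraph}: the bra $i$ is the agent's immutable input color and the ket $j$ is the color of its successor in its circle. Every agent starts as $\braket{i}{i}$, a circle of size one. I expect the transition rules of \prm{} to be pointer exchanges: two agents $\braket{i}{j}$ and $\braket{i'}{j'}$ become $\braket{i}{j'}$ and $\braket{i'}{j}$, which merges or splits the corresponding circles. The proof has three parts: invariants, termination under weak fairness, and a characterization of stable configurations.

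\textbf{Invariants.} First, the number of agents with bra $i$ is always $n_i$, the initial support of color $i$, because bras never change. Second, for every color, the number of edges leaving it equals the number entering it. This holds initially (all self-loops) and a pointer exchange only permutes kets, so the in-degree of each color is unchanged. The multigraph is therefore Eulerian and decomposes into circles. Third, I would show by induction on the rules that the protocol only performs an exchange when it keeps every circle free of repeated colors. In particular, a merge is applied only when the two circles share no color, which is the property stated in the abstract.

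\textbf{Termination.} Let $c_1 \ge c_2 \ge \dots$ be the circle sizes sorted in decreasing order. I would show that every non-silent interaction strictly increases this vector lexicographically: a merge of two circles is an increase, and any rebalancing rule has to be checked to move an agent toward a larger circle. The vector takes finitely many values for fixed $n$. Under a weakly-fair scheduler, any enabled transition is eventually taken. So it suffices to show that whenever a configuration admits a non-silent transition, one is enabled, and that the run reaches a configuration in which all interactions are silent. The potential argument gives exactly this. Silence of all pairwise interactions then yields the stable configuration.

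\textbf{Characterization and output.} In a silent configuration, any two circles $C \supseteq C'$ in the color sense cannot be improved, and any color missing from a circle cannot be inserted. From this I would derive that the color sets of the circles form a chain under inclusion, with nested, decreasing sizes. If $i$ is the unique majority color, it appears in at least $n_i$ circles while every other color appears in at most $n_i - 1$ circles. Hence some circle contains only $i$, and since circles have no repeated colors it is the singleton $\braket{i}{i}$. Conversely, a singleton $\braket{j}{j}$ for a non-majority $j$ could be merged with a circle lacking $j$, and such a circle exists by counting, contradicting silence. I expect this converse, and establishing the chain structure in general, to be the main obstacle. I would attack it by an exchange argument on the two smallest circles.

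Finally, for the $k^3$ bound, each agent gets an extra output variable in $[k]$. Agents $\braket{i}{i}$ set it to $i$, and every other agent copies the output variable of any agent it meets that holds it. By weak fairness, after stabilization this value eventually propagates to every agent. The total state count is $k^2 \cdot k = k^3$.
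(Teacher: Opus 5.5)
There is a genuine gap. You never use the actual transition rule of \prm{}: swap kets exactly when this strictly decreases the smaller of the two weights, where $w(\braket{i}{j})=(j-i)\bmod k$ for $i\neq j$ and $w(\braket{i}{i})=k$. The behaviour you postulate instead does not hold, and in particular your termination potential fails. Take $k=4$ and one agent per color, so circles are unambiguous.
The following swaps are all legal:
\begin{enumerate}
\item $\braket{0}{0},\braket{2}{2}\to\braket{0}{2},\braket{2}{0}$, where the minimum weight drops $4\to2$;
\item $\braket{1}{1},\braket{3}{3}\to\braket{1}{3},\braket{3}{1}$, again $4\to2$;
\item $\braket{2}{0},\braket{3}{1}\to\braket{2}{1},\braket{3}{0}$, where it drops $2\to1$.
\end{enumerate}
They yield the single circle $0\to2\to1\to3\to0$. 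Now $\braket{0}{2},\braket{1}{3}\to\braket{0}{3},\braket{1}{2}$ is also legal ($2\to1$), and it splits this circle into two $2$-circles. So your sorted size vector goes down from $(4)$ to $(2,2)$. The rule never looks at circles, only at two weights. The potential that works is the one the rule hands you. Sort the $n$ weights increasingly. At every swap, the first coordinate that changes strictly decreases, so this vector decreases lexicographically over a finite set. Hence there are only finitely many swaps, under any scheduler.

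The characterization has a second problem. With repeated colors, the braket multigraph has many cycle decompositions, so ``the circles'' are not determined by the configuration. The chain property can then fail even at stability. Take input colors $0,2$ twice and $1,3$ once, so $G_1=\{0,1,2,3\}$ and $G_2=\{0,2\}$. The stable multiset also decomposes into cycles on $\{0,1,2\}$ and $\{0,2,3\}$. Without a chain, your count ($\mu$ lies in $n_\mu$ circles, every other color in fewer) only gives, for each $c\neq\mu$, some circle avoiding $c$. It does not give one circle avoiding all of them, and you leave the chain structure itself unproved.

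The missing idea is to describe the stable multiset of brakets itself: it equals $\bigcup_p f(G_p)$ for the greedy independent sets. The paper proves this by induction on $p$. Suppose $\braket{g_l}{g_{l+1}}$ is missing after removing $f(G_1),\dots,f(G_p)$. The bra/ket counting invariant then gives remaining agents $\braket{g_l}{j}$ and $\braket{i}{g_{l+1}}$ with $i,j\notin(g_l,g_{l+1})_k$, since such a color would lie in $G_{p+1}$ strictly between $g_l$ and $g_{l+1}$. Swapping their kets creates weight $(g_{l+1}-g_l)\bmod k$, which is below both current weights. Weak fairness makes this pair meet after the last swap, a contradiction. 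Under a unique majority, $G_q=\{\mu\}$ is the only singleton, and the ``iff'' follows.

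Your converse direction can be repaired with the real weights: a non-majority $\braket{j}{j}$ can swap with a suitable braket on a cycle through $\mu$ that avoids $j$. The $k^3$ count is fine, provided only agents in a state $\braket{i}{i}$ act as sources of the output, as in the paper.
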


For the ranking problem, we break the exponential barrier~\cite{salehkaleybar2015distributedvoting} and provide a protocol that uses $O(k^4)$ under a globally fair scheduler~(see \Cref{def:globally_fair_scheduler}):

\begin{restatable}{theorem}{thmranking}\label{thm:ranking}
    There is a protocol that solves the ranking problem in case of a globally fair scheduler, and uses $2\cdot k^4$ many states.
\end{restatable}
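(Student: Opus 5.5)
The plan is to build the ranking protocol on top of the \prm{} protocol. The stable structure of \prm{} partitions the agents into circular linked lists of decreasing sizes, and no two agents of the same initial color share a circle. Each agent stores its input color $c\in[k]$ and a \prm{} state $\braket{i}{j}$. The circle containing color $c$ should then encode how many agents of color $c$ exist compared to other colors. Concretely, once \prm{} has stabilized, the circles can be ordered by size. A color $c$ appears in circles $1,\dots,m_c$, where $m_c$ is the multiplicity of $c$. Colors with larger support therefore reach into deeper (smaller) circles. The rank of $c$ is then determined by the set of colors present in the last circle that $c$ belongs to, together with the colors that appear beyond it.

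The key steps, in order, are as follows.

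\textbf{Step 1.} Establish, using the stabilization result for \prm{} (the restated theorem above), that every execution reaches a stable circle structure. In that structure, the colors in the $r$-th circle are exactly the colors with support at least $r$.

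\textbf{Step 2.} Attach to each agent a second pair of $k^2$ variables (the source of the extra factor $k^2$ and of the factor $2$). This pair serves as a tentative rank counter and a phase bit, computed by a secondary process that runs on top of the circle links. An agent whose color $c$ sits in its deepest circle counts how many colors in its own circle also have no deeper occurrence. Ties give equal rank, or ranks are broken by the color order. It also counts how many colors reach deeper. I would implement the counting by circulating a token around the circle: an agent passes its link-neighbor an accumulated count, bounded by $k$. The agent then broadcasts the resulting rank to the other agents of color $c$ through pairwise interactions, keeping the maximum-depth estimate.

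\textbf{Step 3.} Make this secondary computation self-correcting. Because \prm{} may still be rearranging when the counting starts, every change of an agent's \prm{} state resets its rank variables via the phase bit. After \prm{} stabilizes, the counting restarts from a clean state and converges.

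The main obstacle is Step 3, together with the reliance on global fairness. The rank computation must eventually be correct from an arbitrary, possibly stale, assignment of the extra variables. Under a weakly fair scheduler, the adversary could keep circulating stale tokens forever. With global fairness, I would argue as follows. Any configuration from which a correct stable configuration is reachable eventually leads to one. So it suffices to exhibit, from every configuration reachable after \prm{} stabilizes, a finite sequence of interactions that first flushes all stale counters and then performs the token passes and broadcasts correctly. I would show this by an explicit reachability construction. Finally, I would verify closure: once every agent outputs the correct rank, no transition changes any output.
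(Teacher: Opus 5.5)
\textbf{The gap is in Step 2, not Step 3.} Your counting treats each circle as an identifiable linked list along which a token can be passed. In \prm{}, however, a ket $\ket{j}$ names only a color, not an agent. The stable configuration is just the multiset $\bigcup_p f(G_p)$ (Lemma~\ref{lem:majowins}). A token at an agent $\braket{i}{j}$ can only be handed to \emph{some} agent with bra $j$, and there are $x_j$ of these, with different kets, lying in different circles.

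Your Step 1 observation is correct: the rank of $c$ is determined by its deepest circle, and with distinct counts it equals $|G_{x_c}|-1$. The protocol just cannot evaluate it this way. Take $k=4$ and $(x_0,x_1,x_2,x_3)=(1,4,2,3)$. The circles are $\{0,1,2,3\},\{1,2,3\},\{1,3\},\{1\}$, so the agents are:
$\braket{0}{1}$; $\braket{1}{2}$ (twice), $\braket{1}{3}$, $\braket{1}{1}$; $\braket{2}{3}$ (twice); $\braket{3}{0}$, $\braket{3}{1}$ (twice).

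A token leaving $\braket{0}{1}$ may go to $\braket{1}{3}$ and then to $\braket{3}{0}$, returning after $3$ hops. Yet color $0$'s only circle has size $4$ and its rank is $3$. The token may also return after $4$ hops, or be trapped forever at $\braket{1}{1}$ or in the loop $\braket{1}{3},\braket{3}{1}$. So returned counts are neither well defined nor correct, not even their minimum. For the same reason, an agent cannot decide with $O(1)$ extra memory whether it lies in its color's deepest circle, or which colors of ``its'' circle occur deeper. Resetting stale values and invoking global fairness cannot repair this, because wrong counts arise even from a perfectly clean restart after \prm{} has stabilized.

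What is missing is a way to read ranks off the braket multiset without using circle identities. The paper never traverses circles. It puts tentative ranks into the braket, $\bra{i,\rk i}$ and $\ket{\rk j,j}$, and runs \prm{} with weights computed on these ranks. It then uses \Cref{lem:orderisrecognizable}: the stable multiset contains only brakets whose ket rank is the bra rank plus one, or zero, exactly when the tentative ranking is correct. Otherwise a witness is visible in a single pairwise interaction: an agent $\braket{i}{i}$ with $\rk i\neq 0$, or two agents with the same bra and distinct nonzero ket ranks.

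One leader token per color, attached to a ket of that color, repairs such a witness by swapping ranks. Global fairness enters only through showing that the stable configuration $X$ is reachable from every reachable configuration. This design also accounts for $2k^4=k^2\cdot k^2\cdot 2$: bra pair, ket pair, token bit.

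Your state accounting does not close either. A rank counter and a phase bit take $2k$ values, not $k^2$. Adding the accumulated count, a token flag, and the separately stored input color (which is redundant with the bra), your agents need more than $2k^4$ states.
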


Previous results combined several sub-protocols in an intricate way.
In contrast, the \prm{} protocol is surprisingly simple, making it easy to analyse and adapt to different use cases.
In this work we not only use it to solve both the ranking problem and the relative majority problem, but also a variety of modifications of the relative majority problem, dropping the assumption that initially an ordering on the colors exists and handling ties in a number of different ways.

\newcommand{\ang}[1]{\left\langle#1\right\rangle}

\subsection{Organization}

The remainder of this paper is structured as follows.
In Section~\ref{sec:model_problem} we define the population protocol model, the relative majority problem in different versions and the ranking problem.
In Section~\ref{sec:circle} we present our new protocol \textsc{Circles} adapted to solve relative majority and prove its correctness.
In Section \ref{sec:tie} we present how to adapt \textsc{Circles} to solve the relative majority problem while handling ties in different ways.
In Section~\ref{sec:unordered} we show how to adapt \textsc{Circles} to the unordered data model, where the agents do not share a prior ordering of the colors.
In Section~\ref{sec:ranking} we show how to adapt \textsc{Circles} to solve the ranking problem, and in \Cref{sec:future_work}, we discuss the gap between the $O(k^3)$ upper bound and $\Omega(k^2)$ lower bound for the problem.

\section{Model, Problems and Notations}
\label{sec:model_problem}
We will now briefly introduce models and problems used within this work.

\subsection{Models and Problems}

\paragraph{The Population Protocol model}
\label{ssec:model_spp}
Unless stated otherwise, we use the standard notation from~\cite{angluin2006networksoffinitestatesensors}.
A \textit{population} $\mathcal{P}$ consists of a set $A$ of $n$ agents and an irreflexive relation $E \subseteq A \times A$.
It can be seen as a directed interaction graph.
In this work we only consider the complete interaction graph.
When a \textit{population protocol} $\mathcal{A}$ runs in a population $\mathcal{P}$ every agent $a \in A$ is in one state of a finite set of states.
The input function maps an agents input, taken from some finite input set, to an initial state.
In this work we always consider an input set of $k$ colors.
After the initialization, pairs of agents interact.
The agents have distinct roles when they interact, one is the \textit{initiator} and the other is the \textit{responder}.
An interaction causes the initiator and the responder to update their states according to the transition function, which depends only on their previous states. 
Finally, the output of the population protocol is given by an \textit{output function} which maps at any time an agent's state to an output, taken from a finite output set.

A \textit{configuration} captures the ``global state'' of a population by assigning a state to each agent.
Formally, a configuration can be fully described as a multiset of states.
Similarly an input (or output) assignment maps an input (output) to every agent of the population.
An interaction between two agents~$a, b$ in configuration~$S$ results in a new configuration~$S'$, written as:~$S \xrightarrow{(a,b)} S'$.
Configuration $S''$ is \textit{reachable} from $S$ if there is some sequence of interactions, such that a population starting in $S$ undergoing the interaction sequence results in  $S''$.
We write this as~$S\rightarrow S''$.\\

A \textit{computation} of a protocol is an infinite fair sequence of interactions.
Which two agents interact in each step is determined by a \textit{scheduler}.
We introduce two different schedulers formalizing two different notions of fairness as done in other work~\cite{natale2019necessarymemorycomputeplurality,ABBS16}.
\begin{definition}[Weakly Fair Scheduler]
  \label{def:weakly_fair_scheduler}
  In a computation under a \textit{weakly fair scheduler} every pair of agents interacts infinitely often.
\end{definition}
\begin{definition}[Globally Fair Scheduler]
  \label{def:globally_fair_scheduler}
  In a computation under a \textit{globally fair scheduler} for any two configurations $S$ and $S'$ with~$S\rightarrow S'$ it holds that if $S$ occurs infinitely often in the computation, then $S'$ also occurs infinitely often.
\end{definition}

A computation \textit{stabilizes} to some output assignment when no future interaction can change the output of any agent.

\paragraph{The Unordered Data model}
In the unordered data model, we assume that the agents do not agree on a prior ordering of the input colors: The agents can only compare their inputs for equality and memorize colors.
This setting was first used in~\cite{natale2019necessarymemorycomputeplurality} and a related version later formalized in~\cite{blondin2023populationprotocolsunordereddata}.

\paragraph{Problems}
In the literature the relative majority problem is also referred to as plurality, majority consensus or proportionate agreement~\cite{becchetti2015pluralityconsensus}.
The input assignment assigns a color from~$\{0,1,\dots,k-1\}$ to every agent.
A protocol solving the problem must stabilize to an output assignment in which every agent indicates as output the color that has more supporters than any other.
Different versions on how to handle ties will be introduced later.

In the ranking problem, the input assignment gives each agent a color from~$\{0,1,\dots,k-1\}$ as well.
A protocol solving the problem must stabilize to an output assignment in which each agent indicates as output the position of its input color in a ranking of the colors: an agent with the plurality color as its initial color should output $1$, an agent with the second most supported color should output $2$, and so on.

\subsection{Notations}

We explain here the notations used throughout the paper.
Proofs marked with~\appsymb{} are deferred to the appendix.

\paragraph{multisets. }
  For sets $S$ and $T$ we write $S^{T}$ to denote the set of functions $f : T \to S$.
  If $T$ is finite we call the elements of $\mathbb{N}^T$ \emph{multisets} over $T$.
  In this paper, the subset $\subseteq$, the union $\cup$ and the set subtraction $\setminus$ operations are generalized to multisets: if $A, B \in \mathbb{N}^T$ are two multisets on $T$, it holds that $A \subseteq B$ if for all $t \in T$, $A(t) \leq B(t)$, we define $A \cup B$ as the multiset $C$ such that for all $t \in T$, $C(t) = A(t) + B(t)$, we define $A \setminus B$ as the multiset $C$ such that for all $t \in T$, $C(t) = \max(0, A(t) - B(t))$.

\paragraph{remainder. }
  For $x \in \mathbb{Z}$ and $p \in \mathbb{N}^*$, we define $x \bmod p$ as the remainder of the euclidean division of $x$ by $p$. Note that this is a number in $\mathbb N$, not $\mathbb Z/p\mathbb Z$.

\paragraph{ranges. }
  Let $x, y \in \mathbb{N}$ such that $x \le y$.
  $[x, y]$ denotes the set $\{x, x+1, \dots, y-1, y\}$, and $(x, y)$ denotes the set $\{x+1, x+2, \dots, y-2, y-1\}$.
  With~$[y]$ we denote~$\{0,1,\dots,y\}$.

\paragraph{modulo ranges.}
\sloppy
  Let $x, y \in \mathbb{N}$.
  We define modulo ranges:  $[x, y]_p$ denotes the set $\{x \bmod p, (x+1) \bmod p, \dots, (x+(y-x) \bmod p) - 1) \bmod p, (x + (y-x) \bmod p) \bmod p\}$, and $(x, y)_p$ denotes the set $\{(x+1) \bmod p , (x+2) \bmod p, \dots, (x + (y-x) \bmod p) - 2) \bmod p, (x + (y - x) \bmod p) - 1) \bmod p\}$.
  For instance, $[2,7]_{10} $ $= \{2, 3, 4, 5, 6, 7\}$ and $(8, 3)_{10} = \{9, 0, 1, 2\}$.

  We say that a modulo range $[x,y]_p$ is included in another one $[x', y']_p$, denoted $[x,y]_p \subseteq [x',y']_p$ if it is included in the set-theoretic sense, with the extra condition: $x \neq y'$ or $y \neq x'$. This is to avoid the corner case: $[a,a+1]_p \subseteq [a+1,a]_p$, see~\Cref{fig:circle-graduations} for an illustration.

\begin{figure}[h]
  \centering
  \begin{tikzpicture}[scale=.6]
    \def\radius{1.3}
    \draw (0,0) circle (\radius);
    \foreach \i in {0,...,9} {
      \pgfmathsetmacro{\angle}{90 - \i * 36}
      \pgfmathsetmacro{\x}{\radius * 0.75 * cos(\angle)}
      \pgfmathsetmacro{\y}{\radius * 0.75 * sin(\angle)}
      \node at (\x, \y) {\i};
      \pgfmathsetmacro{\xout}{\radius * cos(\angle)}
      \pgfmathsetmacro{\yout}{\radius * sin(\angle)}
      \pgfmathsetmacro{\xin}{(\radius - 0.15) * cos(\angle)}
      \pgfmathsetmacro{\yin}{(\radius - 0.15) * sin(\angle)}
      \draw (\xin, \yin) -- (\xout, \yout);
    }
    \draw[thick, black] (-162:\radius +0.2) arc (-162:-306:\radius +0.2);
    \draw[thick, black] (-162:\radius +0.1) -- (-162:\radius +0.3);
    \draw[thick, black] (-306:\radius +0.1) -- (-306:\radius +0.3);
  \end{tikzpicture}
  \caption{
    A modulo range contains all numbers covered by a clockwise walk from the start to the end of the range around the modular clock.
    This figure depicts $[7, 1]_{10}$.
  }
  \label{fig:circle-graduations}
\end{figure}

\paragraph{braket. }
   We overload the braket notation, frequently used in quantum mechanics, to note ordered pairs. For $i, j \in \N$, we write $\braket{i}{j}$ simply to distinguish the different roles of $i$ and $j$. For agents storing some pair $\braket{i}{j}$, we say they store a braket and refer to their bra and their ket for $i$ and $j$ respectively.

\section{Solving the Relative Majority Problem}
\label{sec:circle}

\subsection{Presentation of the protocol}

In this section we introduce our protocol that solves the relative majority problem and prove its correctness.
We consider the set $[0, k-1]$ of $k$ many colors and present below the \prm{}\footnote{We overload the notation slightly, and also refer by \prm{} the protocol that behaves similarly but does not have an \emph{out} entry. This is useful to use the protocol for other purposes than relative majority.} protocol.

  \label{prot:circle}
  \textbf{States:} The set of states $Q$ contains every triples $(i, j, o) \in [0, k-1]^{3}$.
  In the remainder of this paper we will use the braket notation $\braket{i}{j}$ to refer to the first two numbers of the triple, \emph{bra} refers to $\bra i$ and \emph{ket} refers to $\ket j$, while \emph{out} refers to $o$. 

  \textbf{Input:} each agent is initialized with $\braket{i}{i}$ and $\emph{out} = i$, where $i \in [0, k-1]$ is the input color of the agent.

  \textbf{Output:} return $\emph{out}$

  \textbf{Transition function:} We define \emph{weights} for each braket $\braket{i}{j}$ as follows:
  \begin{equation*}
  	w(\braket{i}{j}) =
  	\begin{cases}
  		k & \textit{if } i = j\\
  		(j - i) \bmod k& \textit{otherwise}
  	\end{cases}
  \end{equation*}
  Two agents $a$ and $b$ that interact perform two successive operations:
\begin{enumerate}
\item
  $a$ and $b$ exchange their kets in case doing so strictly decreases the minimum weight of their two brakets.
\item
  If exactly one of $a$ or $b$ is of the form $\braket{i}$ for some $i \in [0, k-1]$, set $\emph{out}_a=\emph{out}_b=i$.
\end{enumerate}

\medskip
The intuition of the protocol is depicted in Figure~\ref{fig:intuition}.
Let's assume in this paragraph, for the sake of intuition, that only one color is in relative majority.
As shown in Figure~\ref{fig:intuition}, think of a population as stacks of colors with the height of each stack representing the number of agents supporting it.
If one were to remove rows from the stacks bottom to top until only a single row is left, then that row would contain
only the color in relative majority.
This is roughly what our protocol is doing, though it does not actually remove rows. Instead, in a way, it organizes the agents of a row into a cyclic linked list (which we call circle) and 
does not let them escape from it anymore.
The \emph{bra} of an agent represents its original color, while the \emph{ket} of the agent represents the next color in the circle.
The minimizing operation in the transition function ensures that if an agent is linked to another one too far away in the circle, it can always correct itself by meeting the correct \emph{ket}. 
It organizes agents into a circle that is as tightly packed as possible (with respect to the numerical representation of their color). Once no agents can be added anymore, the circle never changes.
The procedure then continues for the remaining agents, until in the final circle only one agent is contained: one holding the majority opinion.

\begin{figure*}[h]
  \begin{subfigure}[t]{0.37\textwidth}
	\begin{minipage}[t][5cm][c]{\linewidth}
	\centering
	\begin{tikzpicture}[plain/.style={}]
		\node [plain]  at (0, 0.25){\textbf{0}};
		\node [plain] at (1, 0.25){\textbf{1}};
		\node [plain] at (2, 0.25){\textbf{2}};
		\node [plain] at (3, 0.25){\textbf{3}};
		\node [plain] at (4, 0.25){\textbf{4}};
		\draw[black] (-.5, .5) -- (4.5, .5);
		
		\pgfmathsetmacro{\yoff}{0.18}
		\node [plain] (o1) at (0, 1 + \yoff) {0};
		\node [plain] (o2) at (0, 1.75 + \yoff) {0};
		\node [plain] (o3) at (0, 2.5 + \yoff) {0};
		\node [plain] (o4) at (0, 3.25 + \yoff) {0};
		\node [plain] (o5) at (0, 4 + \yoff) {0};
		
		\node [plain] (t1) at (1, 1 + \yoff) {1};
		\node [plain] (t2) at (1, 1.75 + \yoff) {1};
		\node [plain] (t3) at (1, 2.5 + \yoff) {1};
		
		\node [plain] (fo1) at (3, 1 + \yoff) {3};
		\node [plain] (fo2) at (3, 1.75 + \yoff) {3};
		\node [plain] (fo3) at (3, 2.5 + \yoff) {3};
		\node [plain] (fo4) at (3, 3.25 + \yoff) {3};

		\node [plain] (fi1) at (4, 1 + \yoff) {4};
		\node [plain] (fi2) at (4, 1.75 + \yoff) {4};
		
		\node [rectangle, draw=black, rounded corners, inner sep=0pt, fit = (o1) (fi1)] {};
		\node [plain] at (-0.7, 1 + \yoff) {\textcolor{black}{\small{$G_1$}}};
		
		\node [rectangle, draw=black, rounded corners, inner sep=0pt, fit = (o2) (fi2)] {};
		\node [plain] at (-0.7, 1.75 + \yoff) {\textcolor{black}{\small{$G_2$}}};
		
		\node [rectangle, draw=black, rounded corners, inner sep=0pt, fit = (o3) (fo3)] {};
		\node [plain] at (-0.7, 2.5 + \yoff) {\textcolor{black}{\small{$G_3$}}};
		
		\node [rectangle, draw=black, rounded corners, inner sep=0pt, fit = (o4) (fo4)] {};
		\node [plain] at (-0.7, 3.25 + \yoff) {\textcolor{black}{\small{$G_4$}}};
		
		\node [rectangle, draw=black, rounded corners, inner sep=0pt, fit = (o5)] {};
		\node [plain] at (-0.7, 4 + \yoff) {\textcolor{black}{\small{$G_5$}}};
	\end{tikzpicture}
	\end{minipage}
	\caption{
          Input colors, partitioned into greedy independent sets $G_1,\dots,G_5$.
          The color in relative majority is the only one remaining in the smallest set (here $0$ in $G_5$).}
  \end{subfigure}\hfill
  \begin{subfigure}[t]{0.30\textwidth}
	\begin{minipage}[t][5cm][c]{\linewidth}
	\centering
	\begin{tikzpicture}[plain/.style={}]
		\node [plain]  at (0, 0.25){\textbf{0}};
		\node [plain] at (1, 0.25){\textbf{1}};
		\node [plain] at (2, 0.25){\textbf{2}};
		\node [plain] at (3, 0.25){\textbf{3}};
		\node [plain] at (4, 0.25){\textbf{4}};
		\draw[black] (-.5, .5) -- (4.5, .5);
		
		\node [plain] (o1) at (0, 1) {$\braket{0}$};
		\node [plain] (o2) at (0, 1.75) {$\braket{0}$};
		\node [plain] (o3) at (0, 2.5) {$\braket{0}$};
		\node [plain] (o4) at (0, 3.25) {$\braket{0}$};
		\node [plain] (o5) at (0, 4) {$\braket{0}$};
		
		\node [plain] (t1) at (1, 1) {$\braket{1}$};
		\node [plain] (t2) at (1, 1.75) {$\braket{1}$};
		\node [plain] (t3) at (1, 2.5) {$\braket{1}$};
		
		\node [plain] (fo1) at (3, 1) {$\braket{3}$};
		\node [plain] (fo2) at (3, 1.75) {$\braket{3}$};
		\node [plain] (fo3) at (3, 2.5) {$\braket{3}$};
		\node [plain] (fo4) at (3, 3.25) {$\braket{3}$};

		\node [plain] (fi1) at (4, 1) {$\braket{4}$};
		\node [plain] (fi2) at (4, 1.75) {$\braket{4}$};

		\foreach \n in {o1,o2,o3,o4,o5,t1,t2,t3,fo1,fo2,fo3,fo4,fi1,fi2}
			\draw[->, -stealth, rounded corners=2pt]
				($(\n.east)+(-3.5pt,0)$)
				-- ($(\n.east)+(-0pt,0)$)
				-- ($(\n.north east)+(-0pt,3pt)$)
				-- ($(\n.north west)+(-1.5pt,3pt)$)
				-- ($(\n.west)+(-1.5pt,0)$)
				-- ($(\n.west)+(3.5pt,0)$);
		
	\end{tikzpicture}
	\end{minipage}
	\caption{States at the start of the protocol's execution: each agent holds a braket $\braket{i}$ equal to its input color~$i$.}
  \end{subfigure}\hfill
  \begin{subfigure}[t]{0.30\textwidth}
	\begin{minipage}[t][5cm][c]{\linewidth}
	\centering
	\begin{tikzpicture}[plain/.style={}]
		\node [plain]  at (0, 0.25){\textbf{0}};
		\node [plain] at (1, 0.25){\textbf{1}};
		\node [plain] at (2, 0.25){\textbf{2}};
		\node [plain] at (3, 0.25){\textbf{3}};
		\node [plain] at (4, 0.25){\textbf{4}};
		\draw[black] (-.5, .5) -- (4.5, .5);
		
		\node [plain] (o1) at (0, 1) {$\braket{0}{1}$};
		\node [plain] (o2) at (0, 1.75) {$\braket{0}{1}$};
		\node [plain] (o3) at (0, 2.5) {$\braket{0}{1}$};
		\node [plain] (o4) at (0, 3.25) {$\braket{0}{3}$};
		\node [plain] (o5) at (0, 4) {$\braket{0}$};
		
		\node [plain] (t1) at (1, 1) {$\braket{1}{3}$};
		\node [plain] (t2) at (1, 1.75) {$\braket{1}{3}$};
		\node [plain] (t3) at (1, 2.5) {$\braket{1}{3}$};
		
		\node [plain] (fo1) at (3, 1) {$\braket{3}{4}$};
		\node [plain] (fo2) at (3, 1.75) {$\braket{3}{4}$};
		\node [plain] (fo3) at (3, 2.5) {$\braket{3}{0}$};
		\node [plain] (fo4) at (3, 3.25) {$\braket{3}{0}$};

		\node [plain] (fi1) at (4, 1) {$\braket{4}{0}$};
		\node [plain] (fi2) at (4, 1.75) {$\braket{4}{0}$};

		\draw[->,-stealth, rounded corners=2pt]
			($(fi1.east)+(-3.5pt,0)$) -- ($(fi1.east)+(2.5pt,0)$)
			-- ($(fi1.north east)+(2.5pt,3pt)$) -- ($(o1.north west)+(-2.5pt,3pt)$)
			-- ($(o1.west)+(-2.5pt,0)$) -- ($(o1.west)+(3.5pt,0)$);
		\draw[->,-stealth, rounded corners=2pt]
			($(fi2.east)+(-3.5pt,0)$) -- ($(fi2.east)+(2.5pt,0)$)
			-- ($(fi2.north east)+(2.5pt,3pt)$) -- ($(o2.north west)+(-2.5pt,3pt)$)
			-- ($(o2.west)+(-2.5pt,0)$) -- ($(o2.west)+(3.5pt,0)$);
		\draw[->,-stealth, rounded corners=2pt]
			($(fo3.east)+(-3.5pt,0)$) -- ($(fo3.east)+(2.5pt,0)$)
			-- ($(fo3.north east)+(2.5pt,3pt)$) -- ($(o3.north west)+(-2.5pt,3pt)$)
			-- ($(o3.west)+(-2.5pt,0)$) -- ($(o3.west)+(3.5pt,0)$);
		\draw[->,-stealth, rounded corners=2pt]
			($(fo4.east)+(-3.5pt,0)$) -- ($(fo4.east)+(2.5pt,0)$)
			-- ($(fo4.north east)+(2.5pt,3pt)$) -- ($(o4.north west)+(-2.5pt,3pt)$)
			-- ($(o4.west)+(-2.5pt,0)$) -- ($(o4.west)+(3.5pt,0)$);
		\draw[->,-stealth, rounded corners=2pt]
			($(o5.east)+(-3.5pt,0)$) -- ($(o5.east)+(2.5pt,0)$)
			-- ($(o5.north east)+(2.5pt,3pt)$) -- ($(o5.north west)+(-2.5pt,3pt)$)
			-- ($(o5.west)+(-2.5pt,0)$) -- ($(o5.west)+(3.5pt,0)$);

		\draw[->, -stealth] ($(o1.east)+(-3.5pt,0)$) -- ($(t1.west)+(3.5pt,0)$);
		\draw[->, -stealth] ($(t1.east)+(-3.5pt,0)$) -- ($(fo1.west)+(3.5pt,0)$);
		\draw[->, -stealth] ($(fo1.east)+(-3.5pt,0)$) -- ($(fi1.west)+(3.5pt,0)$);
		\draw[->, -stealth] ($(o2.east)+(-3.5pt,0)$) -- ($(t2.west)+(3.5pt,0)$);
		\draw[->, -stealth] ($(t2.east)+(-3.5pt,0)$) -- ($(fo2.west)+(3.5pt,0)$);
		\draw[->, -stealth] ($(fo2.east)+(-3.5pt,0)$) -- ($(fi2.west)+(3.5pt,0)$);
		\draw[->, -stealth] ($(o3.east)+(-3.5pt,0)$) -- ($(t3.west)+(3.5pt,0)$);
		\draw[->, -stealth] ($(t3.east)+(-3.5pt,0)$) -- ($(fo3.west)+(3.5pt,0)$);
		\draw[->, -stealth] ($(o4.east)+(-3.5pt,0)$) -- ($(fo4.west)+(3.5pt,0)$);

	\end{tikzpicture}
	\end{minipage}
	\caption{
          States at stabilization: brakets form circles covering the greedy independent sets.
        }
  \end{subfigure}
  \caption{
    Example of \prm{} with $k=5$ colors and $n=14$ agents.
    Agents with the same input color are stacked vertically.
  }
	\label{fig:intuition}
        \label{fig:circles}
        \Description[Bar chart of the input and linked lists that form circles]{The input is represented as a bar chart, where the most common input has the longest bar. Initially, there are as many linked lists as there are agents. At stabilisation, there are as many linked lists as there are agents with the most common input.}
\end{figure*}

\subsection{Proof of Correctness}
\label{sec:protocol_proof}

We prove the protocol's correctness in Theorems \ref{theorem:stabilization} and \ref{theorem:correctness}.
We beforehand introduce Greedy Independent sets, a construction on the input colors used to prove the protocol's correctness, as well as two necessary, fairly easy Lemmas \ref{lemma:majority_color} and \ref{lemma:global_braket_invariant}. This is a formalisation of rows in \Cref{fig:circles}.
  
\begin{definition}[Greedy Independent Sets]
  \label{def:greedy_independent_sets}
  Consider the multiset of input colors to our protocol.
  We partition this multiset into sets $G_1, G_2, \dots, G_q$ as follows: store in $G_1$ as many inputs as possible as long as $G_1$ does not contain two equal colors; then apply the same on the remaining inputs to obtain $G_2$, and so on.
\end{definition}

We begin with this fairly straightforward observation:

\begin{restatable}[\appref{lemma:majority_color}]{lemma}{majoritycolor}
  \label{lemma:majority_color}
  (Majority Color).
  Assume there exists a unique color $\mu$ in relative majority, then it holds that $G_q = \{\mu\}$ and there is no $j \neq \mu$ and $p \in [1, q]$ such that $G_p = \{j\}$.
\end{restatable}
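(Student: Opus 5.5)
The plan is to first pin down the structure of the greedy independent sets exactly, and then read off both claims. Let $c_j$ denote the number of agents with input color $j$, and let $c_{\max} = \max_j c_j$.

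\emph{Key structural claim.} For every $p \ge 1$, the set $G_p$ consists of exactly one copy of each color $j$ with $c_j \ge p$. In particular, the number of nonempty sets is $q = c_{\max}$.

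I will prove this by induction on $p$, with the following invariant: before $G_p$ is formed, the remaining multiset contains exactly $c_j - (p-1)$ copies of each color $j$ with $c_j \ge p-1$, and no other colors.

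The greedy rule puts into $G_p$ as many inputs as possible, subject to $G_p$ containing no two equal colors. So $G_p$ contains exactly one copy of every color still present in the remaining multiset. These are precisely the colors with $c_j - (p-1) \ge 1$, i.e.\ with $c_j \ge p$.

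Removing these elements leaves $c_j - p$ copies of each color $j$ with $c_j \ge p$, which restores the invariant for $p+1$. The process ends exactly when no color remains, which happens after $p = c_{\max}$ steps. Hence $q = c_{\max}$.

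The only subtlety is checking that ``as many as possible'' forces one copy of \emph{every} remaining color. This holds because any missing color could be added without creating a repeat, contradicting maximality.

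\emph{First conclusion: $G_q = \{\mu\}$.} By the claim, $G_q$ contains the colors $j$ with $c_j \ge q = c_{\max}$, i.e.\ the colors attaining the maximum count. Since $\mu$ is the unique color in relative majority, it is the only color with $c_j = c_{\max}$. Therefore $G_q = \{\mu\}$.

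\emph{Second conclusion: no singleton $\{j\}$ with $j \neq \mu$.} By the claim, every $G_p$ with $p \le q$ contains $\mu$, since $c_\mu = c_{\max} \ge p$. So if $G_p = \{j\}$ is a singleton, then $j = \mu$.

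I expect no real obstacle here. The main care point is the induction invariant together with the maximality argument for the greedy step.
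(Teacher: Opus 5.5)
Your proof is correct and follows essentially the same route as the paper's: both rest on the observation that each greedy step takes exactly one copy of every color whose remaining count is nonzero, so $\mu$ lies in every $G_p$ and any color in $G_q$ lies in all the sets. Your explicit characterization $G_p=\{j : c_j\ge p\}$ with $q=c_{\max}$ simply makes precise a step the paper leaves implicit, namely that a color contained in all $q$ sets must have count $q=c_\mu$ and hence tie with $\mu$.
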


\appendixproof{lemma:majority_color}{\begin{proof}
  Each time a set $G_p$ is constructed (following the construction process in Definition \ref{def:greedy_independent_sets}), any color whose count is not zero yet is added into $G_p$ and its count is decremented by one.
  The color $\mu$ appears in the population strictly more than any other, it therefore holds that $\forall p \in [1, q]$, $\mu \in G_p$.
  Let now $i \in [0, k-1]$ be a color contained in a set $G_p$ for some $p \in [1, q]$.
  It holds that $\forall l \le p$, $i \in G_l$ as because color $i$ is available to populated $G_p$, it was also available when $G_l$ was filled earlier.
  It is therefore not possible that a color $j \neq \mu$ is contained in $G_q$ as $j$ would be contained in all sets and thus $j$ would also be in relative majority, contradicting $\mu$ being the unique color in relative majority.
\end{proof}}

We then introduce the main invariant of \prm{}, which stems from the fact that agents only exchange kets.

\begin{restatable}[\appref{lemma:global_braket_invariant}]{lemma}{invariant}
  \label{lemma:global_braket_invariant}
  (Global Braket Invariant).
  In every configuration and for all $i \in [0, k-1]$, the number of bras $\bra{i}$ and the number of kets $\ket{i}$ are equal.
\end{restatable}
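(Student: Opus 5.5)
The invariant follows by induction on the number of interactions, since interactions only ever swap kets between agents and never create or destroy a bra or a ket.

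\emph{Base case.} In the initial configuration every agent with input color $i$ holds $\braket{i}{i}$. Fix a color $i$. The number of bras $\bra{i}$ and the number of kets $\ket{i}$ are then both equal to the number of agents whose input color is $i$.

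\emph{Inductive step.} Assume the two counts agree in a configuration $S$, and let $S \xrightarrow{(a,b)} S'$ for agents $a$ and $b$ holding $\braket{i_a}{j_a}$ and $\braket{i_b}{j_b}$. We inspect the two operations of the transition function.
\begin{enumerate}
\item If the ket exchange occurs, $a$ and $b$ afterwards hold $\braket{i_a}{j_b}$ and $\braket{i_b}{j_a}$. No bra changes, so every bra count is preserved. The multiset of kets held by $\{a,b\}$ is $\{j_a,j_b\}$ both before and after, and all other agents are untouched. Hence every ket count is preserved.
\item The second operation only modifies the \emph{out} component, so it changes neither bras nor kets.
\end{enumerate}

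Consequently, for every color $i$ the number of bras $\bra{i}$ and the number of kets $\ket{i}$ are each unchanged from $S$ to $S'$, so their equality carries over to $S'$. Every reachable configuration arises from the initial one by a finite sequence of interactions, so the claim holds in every configuration.

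No step here is a genuine obstacle. The one point that needs care is that the exchange rule only permutes kets between the two interacting agents and never overwrites one. This is immediate from the protocol definition, since the weight condition decides only \emph{whether} the swap happens, not \emph{what} values it produces.
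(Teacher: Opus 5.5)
Your proof is correct and follows the same route as the paper. The initial configuration gives each agent a braket whose bra and ket coincide, and interactions only swap kets between agents (otherwise touching only the \emph{out} field), so the count of each bra and each ket never changes. Your explicit induction spells out what the paper states in a few lines.
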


\appendixproof{lemma:global_braket_invariant}{\begin{proof}
  Every agent is initialized with braket $\braket{i}{i}$ for some $i \in [0, k-1]$ and the claim initially holds.
  Agents subsequently only ever update their braket, by exchanging kets among each other.
  The overall number of bras and kets in the population therefore does not change during a computation.
\end{proof}}

\begin{restatable}{theorem}{stabilization}
  \label{theorem:stabilization}
  (Stabilization).
  Running \textsc{Circles}, the agents exchange their kets a finite number of times.
\end{restatable}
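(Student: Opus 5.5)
We prove termination with a potential function over configurations that strictly decreases, in a well-founded order, at every ket exchange. Since there are only $n$ agents and $k$ kets, the multiset of weights $\{w(a) : a \in A\}$ takes finitely many values. We order these multisets by comparing their sorted vectors $(w_{(1)} \le w_{(2)} \le \dots \le w_{(n)})$ lexicographically, with smaller weights compared first. This order is a total well-order on a finite set. It therefore suffices to show that every ket exchange makes the sorted weight vector strictly lexicographically smaller.

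Consider an exchange between agents $a,b$ with brakets $\braket{i}{j}$ and $\braket{i'}{j'}$, producing $\braket{i}{j'}$ and $\braket{i'}{j}$. By the transition rule,
\[
\min\bigl(w(\braket{i}{j'}),w(\braket{i'}{j})\bigr) < \min\bigl(w(\braket{i}{j}),w(\braket{i'}{j'})\bigr).
\]
Let $m$ be the old minimum and $m' < m$ the new one. We compare the sorted vectors. The other $n-2$ agents are unchanged. The old vector contains the two values of $a,b$, both $\ge m$, while the new vector contains the value $m'$ together with another value. Let $c$ be the number of unchanged agents with weight $\le m'$.

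In the old vector, position $c+1$ holds the smallest value strictly above everything counted so far. That value is either an unchanged weight $> m'$ or one of $a,b$'s old weights $\ge m > m'$; in either case it is $> m'$. In the new vector, position $c+1$ holds exactly $m'$. The first $c$ entries coincide, being the unchanged weights $\le m'$, or possibly some of them equal to $m'$, which must be handled with care.

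Making this careful is the step I expect to be the main obstacle. The clean way is to compare the counting functions $N_S(t)=|\{a : w(a)\le t\}|$. Lexicographic decrease of the sorted vector is equivalent to $N$ being larger at the first threshold where the two counting functions differ. For $t<m'$ the counts are equal, because neither $a$ nor $b$ contributes, before or after. At $t=m'$ the new count is at least one larger, since the agent achieving $m'$ now contributes and neither agent contributed before. Hence the first difference favors the new configuration, giving a strict lexicographic decrease.

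Because the potential lives in a finite totally ordered set and drops strictly at each exchange, only finitely many exchanges can occur. This gives the theorem regardless of the scheduler. The Global Braket Invariant (\Cref{lemma:global_braket_invariant}) is not even needed here, though it confirms that the state space is fixed.
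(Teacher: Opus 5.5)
Your proof is correct and is essentially the paper's argument: the paper's ordinal potential $g(C)=\omega^{n-1}w_1(C)+\dots+w_n(C)$ encodes exactly the lexicographic order on the sorted weight vector, and the paper likewise argues that the first sorted weight that changes must strictly decrease. Your version is slightly more economical and more careful in two ways: since all weights lie in $[1,k]$, the order is on a finite set, so no ordinals are needed (and you even get an explicit bound on the number of exchanges, namely the number of size-$n$ multisets over $[1,k]$); and your counting-function argument supplies the justification for the step the paper only asserts ``by design of the protocol.''
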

\begin{proof}
  We call $\omega$ the smallest ordinal number greater than all the integers\footnote{For readers unfamiliar with ordinals, we recommend checking \href{https://en.wikipedia.org/wiki/Ordinal_number}{this wikipedia page}, or \href{https://www.youtube.com/watch?v=uWwUpEY4c8o}{this 15-minute video}, which contains all that is needed for our proofs.}.
  We prove the claim by exhibiting a quantity that strictly decreases at each ket exchange.
  Given a configuration $C$, let $w_1(C), w_2(C) \dots w_n(C)$ be the weights of all the brakets sorted in increasing order.
  Define
  \begin{align*}
    g(C) =  \omega^{n-1} \cdot w_1(C) + \omega^{n-2} \cdot w_2(C)  + \dots + \omega \cdot w_{n-1}(C) +  w_{n}(C)
  \end{align*}
  Assume that two agents exchange their kets.
  Let $p$ be the lowest index such that $w_p(C)$ changes before and after the ket exchange.
  By design of the protocol, $w_p(C)$ strictly decreases.
  This implies that $g$ strictly decreases when two agents exchange their kets.
  As an ordinal number cannot decrease infinitely many times, the number of ket exchanges is therefore finite.
\end{proof}

\begin{restatable}{theorem}{correctness}
  \label{theorem:correctness}
  (Correctness).
  Assume that there exists a unique color $\mu$ in relative majority.
  In the \prm{} protocol, all agents eventually output $\mu$ under a weakly fair scheduler.
\end{restatable}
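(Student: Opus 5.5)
The plan is to work in the final configuration, whose existence \Cref{theorem:stabilization} guarantees. After finitely many steps no ket exchange ever happens again. Under a weakly fair scheduler every pair interacts infinitely often, so this final configuration $C^*$ is \emph{stable}: for every pair of agents, swapping their kets does not strictly decrease the minimum of their two weights. I will show that in $C^*$ the agents of the form $\braket{i}{i}$ are exactly some agents with $i=\mu$, and that at least one exists. Output propagation then follows. Every agent interacts infinitely often with some $\braket{\mu}{\mu}$ agent. If it is not itself a loop, exactly one of the two is of the form $\braket{i}{i}$, so its \emph{out} is set to $\mu$. If it is itself a $\braket{\mu}{\mu}$ agent, it gets \emph{out} $=\mu$ by meeting any non-loop agent; the corner case where all agents are $\braket{\mu}{\mu}$ means only color $\mu$ was present, and \emph{out} $=\mu$ from the start. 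Afterwards \emph{out} can only be overwritten with $\mu$, since $\mu$ is the only remaining self-loop color.

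To analyse $C^*$, view each agent $\braket{a}{b}$ as a directed edge $a\to b$ on $\mathbb{Z}/k$, covering the clockwise arc from $a$ to $b$ of length $w(\braket{a}{b})$; a self-loop covers the whole circle with length $k$. By \Cref{lemma:global_braket_invariant}, in-degree equals out-degree at every color, so the non-loop edges decompose into closed walks. Each closed walk winds around the circle an integer number of times, so the total length of non-loop edges is $kW$ for some integer $W$. I will extract two local stability facts.
\begin{enumerate}
\item Two loops $\braket{i}{i}$ and $\braket{j}{j}$ with $i\neq j$ cannot coexist, since swapping gives weights $<k$.
\item If a loop $\braket{i}{i}$ exists, no edge $\braket{a}{b}$ has $i$ strictly inside its arc, since swapping gives $\braket{a}{i}$ of weight $(i-a)\bmod k<(b-a)\bmod k$.
\end{enumerate}

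Suppose a loop at $i$ exists. By the second fact, each winding of a closed walk must stop at $i$. Between consecutive visits to $i$ the walk advances strictly around the circle, so it visits each other color at most once. Hence the number of non-loop edges with bra $i$ equals $W$, and for every $j\neq i$ the count of $j$ is at most $W$. By the first fact $j$ has no loops, so every agent of color $j$ lies on a non-loop edge. Therefore $\mathrm{count}(i)\ge W+1>\mathrm{count}(j)$ for all $j\neq i$, which forces $i=\mu$. This is the direction consistent with \Cref{lemma:majority_color}: loops play the role of the last greedy independent set $G_q=\{\mu\}$.

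The main obstacle is showing that some loop survives in $C^*$. Suppose for contradiction there is none. For two edges $\braket{a}{b}$ and $\braket{c}{d}$, swapping produces $\braket{c}{b}$ of weight smaller than both original weights precisely when $c$ lies strictly inside the arc $(a,b)_k$ and $b$ lies strictly inside $(c,d)_k$, i.e.\ the arcs strictly interleave. So stability forbids strictly interleaving arcs. I would first try to show that a family of pairwise non-interleaving arcs forming closed walks with total winding $W$ covers every point of the circle exactly $W$ times, with each color appearing as a bra at most $W$ times. The intuition is that the arcs can then be cut into $W$ non-overlapping tours, each visiting each color at most once. The whole population is covered by such non-loop edges, so $\mathrm{count}(\mu)\le W$. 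I would then argue that some color must be missing from one tour, i.e.\ lies strictly inside a tour arc. Combining this with $\mathrm{count}(\mu)>\mathrm{count}(j)$ should produce a strictly interleaving pair, or an edge that can swap with an edge based at $\mu$, contradicting stability. I expect this counting and uncrossing step to need the most care, especially around arcs sharing endpoints; the degenerate inclusion convention for modulo ranges was introduced for exactly that case.
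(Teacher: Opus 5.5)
\textbf{Gap: you never prove that a loop survives.} Your reduction to a stable final configuration $C^*$ is correct, and so are the two local facts, the winding count showing that a loop at $i$ forces $x_i\ge W+1>x_j$ (hence $i=\mu$), and the output-propagation step. That half is a more local route than the paper's. What is missing is the step you flag yourself: that $C^*$ contains at least one loop. This is the real content of the theorem, since without it your propagation step has nothing to propagate from. What you give for it is only a plan, and its ingredients do not carry the weight:
\begin{itemize}
\item The bound ``each color is a bra at most $W$ times'' holds in every balanced configuration, because an arc starting at $j$ always covers $j+\tfrac12$. It uses no stability.
\item The decomposition into $W$ tours is not proved.
\item A color lying strictly inside an arc is not contradictory by itself. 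It happens in genuinely stable configurations: in Figure~\ref{fig:intuition}, $\braket{0}{3}$ passes over color $1$. How $x_\mu>x_j$ would turn such a color into an interleaving pair is exactly the missing argument.
\end{itemize}

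\textbf{How the paper handles it.} The paper proves the full characterization of Lemma~\ref{lem:majowins} by induction over the greedy independent sets. Suppose some $\braket{g_l}{g_{l+1}}\in f(G_{r+1})$ were absent from the residual $\mathcal{C}[r+1]$. Then agents $\braket{g_l}{j}$ and $\braket{i}{g_{l+1}}$ exist there with $i,j\notin(g_l,g_{l+1})_k$, because only colors of $G_{r+1}$ survive in the residual. Those two agents would swap kets, contradicting stability. Since $G_q=\{\mu\}$, this yields $\braket{\mu}{\mu}$.

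\textbf{How to close it in your own framework.} One more idea suffices.
\begin{enumerate}
\item Your non-interleaving condition implies that any two non-loop arcs are nested or interior-disjoint.
\item Suppose $C^*$ had no loop. Bra/ket balance makes every point $t+\tfrac12$ covered exactly $W\ge1$ times.
\item The maximal arcs are then pairwise interior-disjoint and cover the circle, so they tile it.
\item There are at least two distinct boundary colors, because every non-loop arc has length at most $k-1$.
\item A boundary color $p$ lies strictly inside no arc, since such an arc would sit inside a maximal arc overlapping a neighbouring tile.
\item Hence all $W$ arcs covering $p+\tfrac12$ start at $p$, so $x_p=W$.
\item Since $x_j\le W$ for every $j$, the maximum count is attained by two colors, contradicting uniqueness of $\mu$.
\end{enumerate}
So the right target is a color other than $\mu$ that no arc passes over, not a color missing from a tour.
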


We define the brakets associated with a Circle:

\begin{definition}[Circle Braket Sets]\label{def:braketsets}
    For a given greedy independent set $G_p$ with $p \in [1,q]$ (Definition~\ref{def:greedy_independent_sets}), let $g_0, g_1, \dots ,g_m$ be the elements of $G_p$ sorted in increasing order and define $$f(G_p) = \{\braket{g_0}{g_1}, \braket{g_1}{g_2}, \dots , \braket{g_m}{g_0}\}$$
\end{definition}

Theorem~\ref{theorem:correctness} follows directly from the following lemma:

\begin{lemma}\label{lem:majowins}
    After Stabilization (Theorem~\ref{theorem:stabilization}), let $\mathcal C$ be the multiset of brakets of the agents. We have that:
    $$\mathcal{C} = \bigcup_{p=1 \dots q} f(G_p)$$
\end{lemma}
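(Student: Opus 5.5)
We characterize the stable configuration by the following property: after stabilization, no two agents $a=\braket{i}{j}$ and $b=\braket{i'}{j'}$ can exchange kets, i.e.\ $\min(w(\braket{i}{j'}),w(\braket{i'}{j}))\ge \min(w(\braket{i}{j}),w(\braket{i'}{j'}))$. Under a weakly fair scheduler every pair meets after the last exchange given by Theorem~\ref{theorem:stabilization}. So this holds for every pair of agents in the final multiset $\mathcal C$. Together with the Global Braket Invariant (Lemma~\ref{lemma:global_braket_invariant}), this is all we will use. Indeed, $\mathcal C$ is a multiset of brakets whose multiset of bras and multiset of kets both equal the input color multiset. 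We view it as a $k$-vertex multigraph on colors (edge $i\to j$ per agent $\braket{i}{j}$) that is balanced: in-degree equals out-degree equals the count $c_i$.

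The proof goes by induction on $p$: peel off $f(G_1)$, then $f(G_2)$, and so on. Let $S=G_1$ be the set of colors present, sorted $g_0<\dots<g_m$. First we show that every color $g_l$ with $m\ge 1$ has at least one agent $\braket{g_l}{g_{l+1}}$ (indices mod $m+1$), i.e.\ with weight equal to the cyclic gap $d_l=(g_{l+1}-g_l)\bmod k$. This is the minimum weight any braket with bra $g_l$ can have, since its ket is a present color. (If $m=0$, all agents are $\braket{g_0}{g_0}$ by the invariant, and we are done.)

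Suppose no agent $\braket{g_l}{g_{l+1}}$ exists. Some agent $a=\braket{g_l}{j}$ exists, with $w(a)>d_l$. Some agent $b=\braket{i'}{g_{l+1}}$ exists too, since kets of $g_{l+1}$ exist, and $i'\neq g_l$. Swapping yields $\braket{g_l}{g_{l+1}}$ of weight $d_l$. So stability forces $w(b)\le d_l$, i.e.\ $b$'s bra lies in the cyclic range strictly between $g_l$ and $g_{l+1}$ (or equals $g_{l+1}$ with weight $k\le d_l$, impossible when $m\ge1$). No present color lies there, a contradiction.

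Hence one copy of each braket of $f(G_1)$ is in $\mathcal C$. Removing them decreases every $c_i$ with $c_i\ge1$ by one, and the remainder $\mathcal C'$ is still balanced with color multiset $\bigcup_{p\ge2}G_p$. It also still has the pairwise no-swap property, being a sub-multiset. Applying the same argument to $\mathcal C'$ extracts $f(G_2)$, and so on, yielding $\mathcal C=\bigcup_p f(G_p)$.

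The main obstacle is the case analysis in the swap argument. The weight function treats $\braket{i}{i}$ as weight $k$ rather than $0$, and the modulo-range subtleties must be handled. In particular, we must check that the witness $b$ really cannot have weight $\le d_l$ when its bra is outside the gap. Here the strict decrease of the minimum weight is what matters, so ties at weight exactly $d_l$ need care. If $w(b)=d_l$ with the ket being $g_{l+1}$, then its bra is $g_l$, contradicting the assumption. This settles the tie case.
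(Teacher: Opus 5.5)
Your proposal is correct and follows essentially the same route as the paper. Both arguments peel off $f(G_1), f(G_2), \dots$ by induction, handle the single-remaining-color case directly, and otherwise get a contradiction because an agent $\braket{g_l}{j}$ and an agent $\braket{i'}{g_{l+1}}$ both have weight exceeding $(g_{l+1}-g_l) \bmod k$, so they would swap kets under weak fairness. Your framing of the remainder as a balanced multiset that inherits the pairwise no-swap property is a cleaner packaging of the paper's counting argument on $\mathcal C[r+1]$.
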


\begin{proof}

 We prove the following predicate $H(r)$ by induction on $r \in [0, q]$:
  \begin{align}
    \bigcup_{p = 1 \dots r} f(G_p) \subseteq \mathcal{C} \tag{$H(r)$}
  \end{align}

  \noindent
  The base case for $r=0$ is trivial.\\
  Let $r \in [0, q-1]$, we assume that $H(r)$ holds and show that $H(r+1)$ also holds.
  We define the subconfiguration $\mathcal{C}[r+1] = \mathcal{C} \setminus \cup_{p = 1 \dots r} f(G_p)$.

  Case 1: $\cup_{p=r+1}^q G_p$ contains only elements from one color.
  Let $i$ be that color.
    
    Then for any other color $j \neq i$, there are at most $r$ many bras $\bra{j}$ and as many kets $\ket{j}$, which are all included in $\cup_{p = 1 \dots r} f(G_p)$. Thus all agents in $\mathcal C[r+1]$ are of the form $\braket{i}$. Since
    $\{\braket{i}\} = f(G_{r+1})$, we have $ \cup_{p = 1 \dots r+1} f(G_p) \subseteq \mathcal{C}$.

  Case 2: There are at least two different colors in $G_{r+1}$.
  
  We note $g_0, g_1, \dots, g_m$ the elements of $G_{r+1}$ sorted in increasing order.
  Let $l \in [0, m]$. To lighten notations, we will mean $l + s \bmod (m+1)$ each time we write $l+s$ in the remainder of this proof.
  We prove that, if there is no agent with braket $\braket{g_l}{g_{l+1}}$ in $C[r+1]$, then there exist two agents whose interaction creates that braket, a contradiction with the stability hypothesis.
  First notice that $\bra{g_l}$ and $\ket{g_{l+1}}$ are in $\mathcal{C}[r+1]$.
  Indeed, note that there are at least $r+1$ many $\bra{g_l}$ and $r+1$ many $\bra{g_{l+1}}$ in $\mathcal C$, as there are at least $r+1$ many agents with color $g_l$ and $g_{l+1}$ initially.
  By Theorem~\ref{lemma:global_braket_invariant}, this means we have at least $r+1$ many $\ket{g_{l+1}}$ in $\mathcal C$.
  Because each $f_p$ for $p \le r$ contains exactly one $\bra{g_l}$ and one $\ket{g_{l+1}}$, we have that both $\bra{g_l}$ and $\ket{g_{l+1}}$ are in $\mathcal{C}[r+1]$.

  Assuming by contradiction that there is no agent with braket $\braket{g_l}{g_{l+1}}$ in $\mathcal{C}[r+1]$, then there exists an agent with braket $\braket{g_l}{j}$ and an agent with braket $\braket{i}{g_{l+1}}$ in $\mathcal{C}[r+1]$ for some $i$ and $j$.
  We show that those two agents exchange their kets if they interact.

  \begin{claim}\label{claim:notininterval}
      $i, j \notin (g_l, g_{l+1})_m$
  \end{claim}
  \begin{proof}
      By contradiction, assume that $i$ is in $(g_l, g_{l+1})_m$. 
      A $\bra{i}$ in $\mathcal C[r+1]$ indicates that $i$ had initially at least $r+1$ agents supporting it, as by contradiction if it was not the case, all the bras $\bra{i}$ would have been in $\cup_{p=1}^r f(G_p)$.
      By construction of $G_{r+1}$, we must have that $i$ is in $G_{r+1}$, and thus, $g_l$ and $g_{l+1}$ are not consecutive in the ordered list of $G_{r+1}$, a contradiction.
      The case $j \in (g_l, g_{l+1})_m$ is symmetric.
  \end{proof}
  If $i \neq g_{l+1}$, it holds by Claim~\ref
{claim:notininterval} that
  \begin{multline*}
    w(\braket{g_l}{g_{l+1}})= (g_{l+1} - g_l) \bmod k  <  (g_{l+1} - i) \bmod k \\ = w(\braket{i}{g_{l+1}})
  \end{multline*}
  Otherwise, if $i=g_{l+1}$:
  \begin{align*}
    w(\braket{g_l}{g_{l+1}})= (g_{l+1} - g_l) \bmod k  <  k  = w(\braket{i}{g_{l+1}})
  \end{align*}

  Similarly, if $j \neq g_l$, it holds by Claim~\ref
{claim:notininterval} that
  \begin{align*}
    w(\braket{g_l}{g_{l+1}}) = (g_{l+1} - g_l) \bmod k  <  (j - g_{l}) \bmod k   = w(\braket{g_l}{j})
  \end{align*}
  Otherwise, if $j=g_{l}$:
  \begin{align*}
    w(\braket{g_l}{g_{l+1}})= (g_{l+1} - g_l) \bmod k  <  k  = w(\braket{g_l}{j})
  \end{align*}
  
  Which proves that exchanging kets between $\braket{g_l}{j}$ and $\braket{i}{g_{l+1}}$ reduces the minimum weight.  
  An interaction between the two agents eventually happens as the scheduler is weakly fair, this interaction would therefore trigger a ket exchange, which is in contradiction with the stability hypothesis: we deduce that $\braket{g_l}{g_{l+1}} \in \mathcal{C}[r+1]$, which implies $f(G_{r+1}) \subseteq \mathcal{C}[r+1]$ and therefore $H(r+1)$ holds.\\
  We proved by induction that $H(q)$ holds:
  \begin{align*}
    \bigcup_{p = 1 \dots q} f(G_p) \subseteq \mathcal{C}
  \end{align*}
  As $|\cup_{p = 1 \dots q} f(G_p)| = |\mathcal{C}|$ we can rewrite $H(q)$ as an equality and the claim holds.
\end{proof}

\begin{proof}[Proof of Theorem~\ref{theorem:correctness}]
    By Lemma~\ref{lem:majowins} and Lemma~\ref{lemma:majority_color}, after Stabilization (Theorem~\ref{theorem:stabilization}), since we assumed that there is only one majority color, there exists at least one agent in braket $\braket{\mu}$ and none in braket $\braket{j}$ for $j \neq \mu$.
    The agent(s) with braket $\braket{\mu}{\mu}$ will transmit their output color $\mu$ to the rest of the population and the claim follows.
    
The claim on the number of states needed is straightforward, as $3$ numbers between $0$ and $k-1$ are stored.
In the case where agents only need to output whether or not their color is the majority one, an extra bit needs to be stored to record that fact, and the $\braket{i}$ agents need to flip it accordingly when meeting other agents.
\end{proof}

This yields the following theorem:

\circrelat*

Note that before this last proof, at no point did we use the fact that there is only one majority color, and thus, we can use all the lemmas to analyze what happens in the case of ties, which we do in the following sections.

\subsection{A note on the time complexity}
While this paper focuses on the space complexity of the problem, it is worthwhile talking briefly about the time complexity of the problem, which is typically done under the \emph{random scheduler}: where for each interaction, two agents are chosen uniformly at random.
Before talking about the \prm{} protocol, let's look at an intermediate result: Assume you have $\ell$ pairs of tokens on some agents that need to meet, in any order. Then in any interaction, the probability that one such pair is chosen is $\frac {2\ell} {n(n-1)}$, and thus the expected time until one such pair being chosen is $\frac {n(n-1)}{2\ell}$. Then there are only $\ell-1$ many pairs to interact. Thus, the total expected time for all pairs to interact is $\sum_{\lambda =1}^\ell \frac {n(n-1)}{2\lambda} =\Theta(n^2\log \ell)$.

Now consider the \prm{} protocol. Assume for the moment that all opinions initially have the same number of agents. Then there are $n$ bra-ket pairs that need to meet before stabilisation, in any order. By the result above, this takes $O(n^2\log n)$ time.

If we remove the assumption that all opinions initially have the same number of agents, then order matters. Indeed, the circles with more agents need to have stabilised before circles with fewer agents can start pairing up. There are at most $k$ many different circles, and thus, the total running time is at most $O(kn^2\log n)$.

\section{Tie Handling Mechanisms}
\label{sec:tie}
In this section, we describe three different ways to handle ties in the relative majority problem.

\subsection{Reporting Ties}
\label{subsec:tie_reporting}

We present the \tiereport{} protocol which extends the \prm{} protocol and enables it to report if there is more than one color in relative majority, i.e. if there is a tie.
\tiereport{} solves a version of the relative majority problem where, in case of a tie, all the agents must return a specific value to indicate that there is a tie.
This feature was present in the paper from G\k{a}sieniec et al.~\cite{gasieniec2017deterministicppforexactmajorityandpluarlity} where the agents could keep track of a potential tie with the label zero.
In case there is no tie, all the agents must still return the majority color.

We add to the agents' states a variable tie\_state on top of the states of \prm{}.
tie\_state can have three values: \emph{no\_tie}, \emph{tie} or \emph{tie\_spreader}.
At the beginning, all the agents are at \emph{no\_tie}.
Without interfering, tie\_state evolves in parallel of \prm{} and follows three rules.

\begin{enumerate}
\item
  \textbf{tie detection. }
  An agent $\braket{i}{i}$ that decays into $\braket{i}{j}$ with $i \neq j$ goes to \emph{tie\_spreader}.
\item
  \textbf{tie spreading. }
  An agent at \emph{tie\_spreader} sets to \emph{tie} any encountered agent $\braket{i}{j}$ such that $i \neq j$.
\item
  \textbf{tie correction. }
  An agent $\braket{i}{i}$ sets any encountered agent to \emph{no\_tie}.
\end{enumerate}

The output changes as follows:
\begin{itemize}
    \item Agents in \emph{no\_tie} output the winning color stored in $out$ as usual (by default behavior of \prm{}).
    \item Agents in \emph{tie} or \emph{tie-spreader} output \emph{``tie''}.
\end{itemize}

\begin{restatable}[\appref{thm:tiereport}]{theorem}{thmtiereport}\label{thm:tiereport}
    \tiereport{} solves the relative majority problem, where all agents output the color of the majority if the majority is strict.
    If there is a tie, agents do not output a color, but rather report that there is a tie. It requires $3k^3$ many states, which can be reduced to $4k^2$ if agents are not required to output the majority color, but only whether their color wins or loses.
\end{restatable}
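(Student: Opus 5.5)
The plan is to run the analysis of \prm{} unchanged, since tie\_state never influences ket exchanges. Theorem~\ref{theorem:stabilization} then still gives a finite number of ket exchanges. After that point the multiset of brakets is $\bigcup_p f(G_p)$ by Lemma~\ref{lem:majowins}, which never used uniqueness of the majority. So it suffices to study the tie\_state dynamics on this frozen braket configuration and split into two cases according to whether agents of the form $\braket{i}{i}$ exist at stabilization.

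\textbf{Case 1: strict majority.} By Lemma~\ref{lemma:majority_color}, $G_q=\{\mu\}$, so at least one agent $\braket{\mu}{\mu}$ exists forever. No other agent is of the form $\braket{j}{j}$, so no decays happen any more.
\begin{itemize}
\item Rule (1) only fires on a decay, so no new \emph{tie\_spreader} is ever created.
\item A $\braket{\mu}{\mu}$ agent is never changed by rule (2), since that rule only targets agents with $i\neq j$. It is also never set to \emph{tie\_spreader}, because it does not decay.
\item Some $\braket{\mu}{\mu}$ agent might still be a \emph{tie\_spreader} from an earlier decay of another agent of the same color, but that state was assigned to the decaying agent, not to it. I will double-check this, since agents $\braket{\mu}{\mu}$ themselves may keep whatever state they had; I expect them to be in \emph{no\_tie}. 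If they could be spreaders, I would argue that each other $\braket{\mu}{\mu}$ resets them via rule (3). If $\mu$ has only one such agent, I instead need rule (3) to also clear the spreader flag on the $\braket{i}{i}$ agent itself; I would check whether the protocol intends $\braket{i}{i}$ agents to stay \emph{no\_tie}.
\item Existing spreaders are finitely many. Each of them meets a $\braket{\mu}{\mu}$ agent under weak fairness and is reset to \emph{no\_tie}, after which no agent can become \emph{tie} again.
\end{itemize}
Together with the \emph{out} propagation from Theorem~\ref{theorem:correctness}, every agent outputs $\mu$.

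\textbf{Case 2: tie.} At least two colors attain the maximum count, so $G_q$ contains at least two colors, and every $G_p$ does too. Hence at stabilization no agent is $\braket{i}{i}$. Every agent started as some $\braket{i}{i}$ and now is not, so every agent decayed at least once and became \emph{tie\_spreader} at that moment. I have to check the ordering here, since later rule (3) interactions could have reset it. Once all $\braket{i}{i}$ agents are gone, rule (3) never fires again. Consider an agent's last decay: after it, rule (3) can only come from remaining $\braket{j}{j}$ agents, which themselves eventually decay. So after the last decay in the whole population, the last decaying agent is a \emph{tie\_spreader} and can never be reset, because rule (3) has no source any more. By weak fairness it meets every other agent, all of which are $\braket{i}{j}$ with $i\neq j$, and sets them to \emph{tie}. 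Spreaders stay spreaders, so all agents output ``tie''.

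\textbf{Main obstacle.} The delicate step is exactly this ordering argument: showing at least one spreader survives forever in the tie case, and that no stale spreader survives in the strict case. The argument is that the last decay event in the run produces an un-resettable spreader.

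\textbf{State count.} A factor $3$ on top of the $k^3$ states of \prm{} gives $3k^3$. If agents only need to output win/lose, the \emph{out} entry is replaced by one bit, giving $k^2\cdot 2$. The $\braket{i}{i}$ agents set that bit as in Theorem~\ref{theorem:correctness}, and the tie states collapse with it: in the tie case every agent loses, so \emph{tie} and \emph{lose} can share a value. Counting the bra, ket, a bit, and the three tie values with one merge yields at most $4k^2$.
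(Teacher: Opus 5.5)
Your plan follows the paper's proof. You freeze the brakets with Theorem~\ref{theorem:stabilization} and Lemma~\ref{lem:majowins}. In the strict case a surviving $\braket{\mu}{\mu}$ agent resets everyone, and in the tie case the last decay leaves a spreader that nothing can reset. Your tie case and your $3k^3$ count are fine.

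\textbf{The gap.} The gap is the strict-case step you left open, and it cannot be closed for the protocol as written. Your remark that a stale \emph{tie\_spreader} flag ``was assigned to the decaying agent, not to it'' misses that the decaying agent can itself later recombine into $\braket{\mu}{\mu}$. \tiereport{} never clears the flag on recombination, unlike the token destructor rule of \tieshare{}. Here is a concrete run with $k=3$ and inputs $0,0,1,2$:
\begin{enumerate}
\item One $\braket{0}{0}$ meets $\braket{1}{1}$, giving $\braket{0}{1},\braket{1}{0}$.
\item The other $\braket{0}{0}$ meets $\braket{2}{2}$, giving $\braket{0}{2},\braket{2}{0}$. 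This agent decays and becomes a \emph{tie\_spreader}.
\item $\braket{1}{0}$ meets $\braket{0}{2}$. The minimum weight drops from $2$ to $1$, so they become $\braket{1}{2}$ and $\braket{0}{0}$.
\end{enumerate}
The configuration $\{\braket{0}{1},\braket{0}{0},\braket{1}{2},\braket{2}{0}\}$ is already stable. Its only $\braket{i}{i}$ agent is the former spreader from step 2. Rule (3) needs a \emph{different} $\braket{i}{i}$ agent, and rules (1), (2) never act on it. So this agent outputs ``tie'' forever, although $0$ is a strict majority.

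This is exactly the sub-case you flagged: a single surviving $\braket{\mu}{\mu}$, i.e.\ majority margin one. Your expectation that such agents are at \emph{no\_tie} is false. The paper's one-line argument has the same blind spot, since it only reasons about the agents that the $\braket{\mu}{\mu}$ agent resets, never about that agent's own state.

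\textbf{The repair.} The fix is the one you suspected. Let an agent that becomes $\braket{i}{i}$ through a ket exchange reset itself to \emph{no\_tie}, applied after rules (1)--(3). Then ``every $\braket{i}{i}$ agent is at \emph{no\_tie}'' becomes an invariant:
\begin{itemize}
\item it holds initially;
\item rule (1) only affects agents leaving that form;
\item rule (2) never targets such agents.
\end{itemize}
After stabilization the $\braket{\mu}{\mu}$ agents are therefore clean. There is also no longer a clash between rules (2) and (3) on their partners. Your strict-case argument then goes through verbatim. The tie case is unaffected, because no $\braket{i}{i}$ agent exists after the last decay.

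\textbf{The $4k^2$ count.} This bound comes from letting agents at \emph{tie} or \emph{tie\_spreader} forget the win/lose bit. The bit is rewritten when a $\braket{\mu}{\mu}$ agent resets them, which gives four values per braket. Merging \emph{tie} with \emph{lose}, as you propose, would instead make a tie report indistinguishable from losing.
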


\appendixproof{thm:tiereport}{
\thmtiereport*

The result follows from Theorem~\ref{theorem:correctness} together with the following lemma.

\begin{lemma}
  \label{lem:tie_report}
  If there is no tie in the input colors then all agents eventually are at \emph{no\_tie}.
  If there is a tie in the input colors then all agents eventually are at \emph{tie} or \emph{tie\_spreader}.
\end{lemma}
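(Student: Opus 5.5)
The plan is to analyse only the part of the execution after the \prm{} component has stabilized, which happens by Theorem~\ref{theorem:stabilization}. The tie\_state variable never influences brakets, so after some finite time $T$ the multiset of brakets is fixed. By Lemma~\ref{lem:majowins} it equals $\bigcup_p f(G_p)$. The structure of this last set determines which of the two claims we must prove.

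\textbf{No tie.} If the majority is unique, Lemma~\ref{lemma:majority_color} gives $G_q=\{\mu\}$, and no other $G_p$ is a singleton. Hence after $T$ there is at least one agent $\braket{\mu}{\mu}$, and every other agent holds a braket $\braket{i}{j}$ with $i\neq j$.

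\begin{enumerate}
\item Rule (1), tie detection, can fire only when an agent actually decays from $\braket{i}{i}$ to $\braket{i}{j}$. Since there are no ket exchanges after $T$, it never fires again.
\item Every \emph{tie\_spreader} agent present after $T$ holds $\braket{i}{j}$ with $i\neq j$, because the brakets of the form $\braket{i}{i}$ are exactly the $\braket{\mu}{\mu}$ agents. Such an agent has no rule that changes its own tie\_state, except being reset by a $\braket{\mu}{\mu}$ agent under rule (3).
\item Weak fairness ensures that the $\braket{\mu}{\mu}$ agent meets every other agent infinitely often.
\end{enumerate}

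So every agent is eventually set to \emph{no\_tie}, including the former spreaders. After that, nothing can raise it again: rule (1) no longer fires, and rule (2) needs a spreader, of which none remain. One detail must be checked: the $\braket{\mu}{\mu}$ agent itself must be \emph{no\_tie}. It never decayed after $T$. Its own state can only be set to \emph{tie} by rule (2), which applies only to agents with $i\neq j$, and it can be reset by another $\braket{\mu}{\mu}$ agent or already was. I will argue that $\braket{\mu}{\mu}$ agents are never targeted by rule (2). If one started at \emph{no\_tie} and never decayed, it stays \emph{no\_tie}.

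\textbf{Tie.} This is the main obstacle. With at least two colors of maximal support, every color present in the last set $G_q$ belongs to all $G_p$, and $G_q$ contains at least two colors. Moreover no $G_p$ is a singleton, because singletons can only appear at the end. So after $T$ every agent holds $\braket{i}{j}$ with $i\neq j$. Rule (3) therefore never fires after $T$, and the fairness argument reduces to a single point: at least one \emph{tie\_spreader} must exist after $T$.

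To get it I will argue as follows.
\begin{enumerate}
\item Each agent starts as $\braket{i}{i}$ and, after $T$, holds some $\braket{i}{j}$ with $i\neq j$, so it decayed at some point.
\item Consider an agent $a$ whose \emph{last} decay from a diagonal braket occurs latest among all agents.
\item At that moment $a$ becomes \emph{tie\_spreader}.
\item Afterwards, $a$ can only be reset by an agent of the form $\braket{i}{i}$. I must check whether an agent can re-acquire a diagonal braket via a ket exchange. If it can, I use the latest such decay event instead, which is still well defined since there are finitely many exchanges.
\end{enumerate}

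The delicate point is that, after $a$'s last decay, diagonal agents may still exist and reset $a$ before they themselves decay. I will handle this by choosing $a$ to be the agent whose decay is the globally last moment at which any diagonal braket disappears. After that moment no diagonal agents exist, so $a$ remains \emph{tie\_spreader} forever.

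Given this persistent spreader, weak fairness makes it meet every agent $\braket{i}{j}$ with $i\neq j$, that is, every agent, and set it to \emph{tie}. No rule ever moves an agent out of \emph{tie} or \emph{tie\_spreader} after this point, so all agents eventually are at \emph{tie} or \emph{tie\_spreader}, as claimed.
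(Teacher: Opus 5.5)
Your route is the paper's route: let \prm{} stabilize; with no tie, a $\braket{\mu}{\mu}$ agent resets everyone; with a tie, the last disappearance of a diagonal braket leaves a spreader that is never reset. The tie case is fine. The no-tie case has a genuine gap at exactly the detail you flagged. You only treat $\braket{\mu}{\mu}$ agents that never decayed, and your claim that every \emph{tie\_spreader} present after $T$ is off-diagonal does not follow. An agent can decay (becoming \emph{tie\_spreader}) and later re-acquire a diagonal braket through a ket exchange, keeping its tie\_state. If it is the only diagonal agent at stabilization, nothing ever resets it:
rule (1) no longer fires, rule (2) never targets diagonal agents, and rule (3) needs a second diagonal agent.

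Concretely, take $k=3$ with agents $a,b$ of color $0$, $c$ of color $1$, $d$ of color $2$, and schedule:
\begin{enumerate}
\item $(a,d)$: $\braket{0}{0},\braket{2}{2}\to\braket{0}{2},\braket{2}{0}$ (minimum weight $3\to1$), so $a$ becomes \emph{tie\_spreader};
\item $(b,c)$: $\braket{0}{0},\braket{1}{1}\to\braket{0}{1},\braket{1}{0}$;
\item $(a,c)$: $\braket{0}{2},\braket{1}{0}\to\braket{0}{0},\braket{1}{2}$ (minimum weight $2\to1$).
\end{enumerate}
The brakets are now $\braket{0}{0},\braket{0}{1},\braket{1}{2},\braket{2}{0}=f(G_1)\cup f(G_2)$, and one checks that no pair can exchange any more. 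Agent $a$ is the unique diagonal agent and is still at \emph{tie\_spreader} (or \emph{tie}), so it outputs ``tie'' forever although $0$ is a strict majority. Worse, as a diagonal spreader it triggers rules (2) and (3) at once on every off-diagonal partner.

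So no better argument can close this step: the first claim of the statement is false for the protocol as literally specified. The paper's own proof has the same hole, since it only says the $\braket{\mu}{\mu}$ agent sets all \emph{other} agents to \emph{no\_tie}. The repair is a protocol change. For example, an agent that recombines into $\braket{i}{i}$ resets itself to \emph{no\_tie}, which is the analogue of the token destructor rule in \tieshare{}. With that rule every diagonal agent is permanently at \emph{no\_tie}, and your no-tie argument then goes through essentially verbatim.

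A smaller point concerns the tie case. ``$a$ remains \emph{tie\_spreader} forever'' presumes that rule (2) never demotes a spreader to \emph{tie}. Read literally, two spreaders that meet demote each other, and then all spreaders can vanish while agents earlier reset to \emph{no\_tie} remain. You should state this reading explicitly; the paper's proof uses it tacitly as well.
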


\begin{proof}
  We first assume that there is no tie.
  The only rules that sets an agent to \emph{tie\_spreader} is the first rule; this rule eventually stops triggering since all agents stop exchanging their ket past a certain time (Theorem \ref{theorem:stabilization}).
  As there is no tie, there will be a remaining agent in state $\braket{\mu}{\mu}$ past a certain time and that agent eventually sets all other agents to \emph{no\_tie}.\\
  We now assume that there is a tie.
  Past a certain time, no agent will be in state $\braket{i}{i}$ for any $i$.
  There hence exists an interaction where the last agent of the form $\braket{i}{i}$ decays into $\braket{i}{j}$ with $i \neq j$.
  When this happens that agent $\braket{i}{j}$ is at \emph{tie\_spreader} and will set all other agents at \emph{no\_tie} to \emph{tie}.
\end{proof}
}

\subsection{Sharing Ties}
\label{subsec:sharing_win}

In this subsection we would like that, even in case of a tie, all agents with a winning color output their color.
Agents with a losing color output any winning color.

The \tieshare{} protocol is defined as follows:
From the \prm{}, we keep everything apart from item (2) of the transition function (the one that updates the \emph{out} of agents). We add a variable ``status'' that can have values \emph{winner\_spreader}, \emph{winner} or \emph{loser}.
All agents start at \emph{winner}.
The variable status evolves in parallel of \prm{} and follows the following rules, where all brakets below are considered after the application of the transition rule (1) of \prm{}.

\begin{enumerate}
\item
  \label{rule:decay}
  \textbf{token generation: }An agent that exchanged kets to become $\braket{i}{j}$ with $i \neq j$ is set to \emph{winner\_spreader}.
\item 
    \textbf{token destructor: }An agent $\braket{i}{j}$ that recombines into $\braket{i}{i}$ is set to $\emph{winner}$.
\item
  \label{rule:destroy}
  \textbf{token destruction: }An agent with braket $\braket{i}{i}$ sets agents at $\braket{i}{j}$ to \emph{winner}, otherwise sets agents at $\braket{i ^{\prime}}{j}$ with $i' \notin \{i,j\}$ to \emph{loser}.
\item
  \label{rule:circle}
  \textbf{token multiplication: }If $a$ with $\braket{i}{j}$ at \emph{winner\_spreader} meets $b$ at \emph{loser} with $\braket{j}{j ^{\prime}}$, set $b$ to \emph{winner\_spreader}.

  \item
  \label{rule:up}\textbf{token ascent: }
  If an agent $a$ with $\braket{i}{j}$ at \emph{winner\_spreader} meets an agent $b$ with $\braket{i ^{\prime}}{j ^{\prime}}$ such that $[i,j]_k \subsetneq [i', j']_k$, set $b$ to \emph{winner\_spreader} and $a$ to \emph{winner} if $i=i'$ and to \emph{loser} otherwise.
  \item 
  \textbf{token simplification: } If an agent $a$ with $\braket{i}{j}$ at \emph{winner\_spreader} meets an agent $b$ at \emph{winner\_spreader} with $\braket{i ^{\prime}}{j ^{\prime}}$ such that $[i,j]_k \subseteq [i', j']_k$, set $a$ to \emph{winner} if $i=i'$ and to \emph{loser} otherwise.
\item
  \label{rule:spread}
  \textbf{information spreading: }If an agent $a$ with $\braket{i}{j}$ at \emph{winner\_spreader} meets an agent $b$ with $\braket{i}{j ^{\prime}}$, set $b$ to \emph{winner}. If $b$ is of the form $\braket{i'}{j'}$ with $i'\neq i$, and $[i', j']_k \subsetneq [i,j]_k$, set $b$ to \emph{loser}.
\end{enumerate}

In parallel, the \emph{out} value of agents is also updated:
\begin{enumerate}[label=(\roman*)]
\item If an agent $a$ with $\braket{i}$ meets an agent $b$ of the form $\braket{i'}{j'}$ with $i'\neq j'$, set \emph{out$_b$} to $i$.
    \item If an agent $a$ with $\braket{i}{j}$ at \emph{winner\_spreader} meets an agent $b$ of the form $\braket{i'}{j'}$ with $[i', j']_k \subseteq [i,j]_k$, set \emph{out$_b$} to $i$.
\end{enumerate}

We say that an agent has a token if it is at \emph{winner\_spreader}, and doesn't have a token if it it is at either \emph{winner} or \emph{loser}.

\begin{restatable}[\appref{thm:tiesharing}]{theorem}{tiesharethm}\label{thm:tiesharing}
  Let $M = \mu_0, \mu_1, \dots \mu_{r-1}$ be the colors in relative majority.
  In the \tieshare{} protocol, eventually all agents with initial color in $M$ are at \emph{winner\_spreader} or \emph{winner} while all other agents are at \emph{loser}.
  Moreover, the output of every \emph{winner} or \emph{winner\_spreader} is its initial color, while the output of \emph{loser} agents is one of the winning colors.

  The protocol uses $3k^3$ many states. This can be dropped to $3k^2$ if we only require each agent to only output whether their color win or lose.
\end{restatable}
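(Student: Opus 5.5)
We split the run into the \prm{} phase and the token phase that follows it. By Theorem~\ref{theorem:stabilization}, after finitely many steps no ket exchange happens any more. By Lemma~\ref{lem:majowins}, the multiset of brakets is then exactly $\bigcup_p f(G_p)$.

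\textbf{Final configuration.} Let $r=|M|$. If $r=1$, the last greedy set is $G_q=\{\mu_0\}$ (Lemma~\ref{lemma:majority_color}), so some agent holds $\braket{\mu_0}{\mu_0}$. If $r\ge 2$, the last set is $G_q=M$, the sets $G_p$ are nested, and $G_q$ yields a circle $f(G_q)$ of brakets $\braket{\mu_l}{\mu_{l+1}}$. Since $M\subseteq G_p$ for every $p$, each such arc $[\mu_l,\mu_{l+1}]_k$ is a union of consecutive arcs of every other circle, i.e.\ it covers them. Agents whose bra is in $M$ thus sit on arcs that are either minimal arcs of the last circle or sub-arcs starting at some $\mu_l$. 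From that point on, rules (1) and (2) never fire again, and we study only rules (3)--(7) on this fixed braket configuration.

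\textbf{Case $r=1$.} Rule (3) applied by the $\braket{\mu_0}{\mu_0}$ agent, under weak fairness, sets every other $\braket{\mu_0}{j}$ to \emph{winner} and every other agent to \emph{loser}. A \emph{loser} or \emph{winner} cannot become \emph{winner\_spreader} without meeting a token. I would show tokens die out: each token eventually meets the $\braket{\mu_0}{\mu_0}$ agent and is destroyed, and the ways of creating tokens (rules (4), (5)) only act on tokens already present. To rule out infinite loops, I would use a potential such as the number of tokens plus the arc length. Ascent moves a token to a strictly larger arc, and arcs are bounded by the full circle $\braket{\mu_0}{\mu_0}$. Multiplication is the subtle part. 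The plan is to argue that once a token is destroyed by rule (3) it is not regenerated, because multiplication needs a loser $\braket{j}{j'}$ adjacent to a token, and tokens eventually vanish. Output rule (i) then gives every non-$\braket{\mu}{\mu}$ agent $out=\mu_0$. Agents of color $\mu_0$ should keep their own output, which I would ensure by noting that rule (i) writes $\mu_0$ to them anyway.

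\textbf{Case $r\ge 2$.} Here there are no $\braket{i}{i}$ agents, so rules (3) and (i) never fire. I would show that tokens (some exist, created when the last $\braket{i}{i}$ decayed) ascend by rule (5) until they sit on maximal arcs. Those are exactly the arcs of the last circle $\braket{\mu_l}{\mu_{l+1}}$, which nothing strictly contains. Simplification (rule (6)) deduplicates tokens on comparable arcs. Multiplication (rule (4)) then propagates tokens around the top circle from $\braket{\mu_l}{\mu_{l+1}}$ to a loser $\braket{\mu_{l+1}}{\cdot}$. Eventually every top-circle agent, or a stable set covering all arcs, holds tokens. Rule (7) sets every agent with bra $\mu_l$ to \emph{winner} and every agent with a strictly contained arc and a different bra to \emph{loser}. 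Rule (ii) sets each agent's $out$ to the bra of a covering token, which equals its own color for winners and some $\mu_l$ for losers.

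\textbf{Main obstacle.} The main obstacle is termination and stability of the token dynamics. Tokens can be created (4), moved (5) and destroyed (6), so an adversarial weakly fair scheduler might cycle: a \emph{winner} could be flipped back to \emph{loser}, or a loser on the top circle could keep oscillating. I would attack this with a lexicographic potential: first the multiset of arcs held by tokens, ordered by inclusion and pushed upward, then the number of incorrectly labelled agents. A token on a maximal arc is never destroyed except by an equal arc, so at least one survives per top arc. Once all top-circle agents are tokens, no rule can demote a correct winner, since only rules (6) and (7) demote and they preserve bra equality. The state count is immediate: $k^2$ brakets, times $k$ for $out$, times $3$ for the status, gives $3k^3$, or $3k^2$ without $out$.
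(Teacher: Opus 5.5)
Your description of the stabilized braket configuration and of the overall mechanism matches the paper: ascent to the arcs $\braket{\mu_l}{\mu_{l+1}}$, multiplication around the top circle, information spreading, and rule (ii) for outputs. The $r\ge 2$ case, however, has a genuine gap at the step where multiplication ``propagates tokens around the top circle from $\braket{\mu_l}{\mu_{l+1}}$ to a loser $\braket{\mu_{l+1}}{\cdot}$''. Rule (4) only fires on a target that is at \emph{loser} and has bra $\mu_{l+1}$, and you never show that such an agent exists. This is not automatic. Every agent starts at \emph{winner}, and rules (5)--(7) keep turning agents that share a spreader's bra into \emph{winner}. Nothing in your argument excludes that, after stabilization, all agents with bra $\mu_{l+1}$ are at \emph{winner} while the arc $[\mu_{l+1},\mu_{l+2}]_k$ carries no token. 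In that situation no rule would ever put a token on $\braket{\mu_{l+1}}{\mu_{l+2}}$. Nothing would then force the non-majority agents of that arc to \emph{loser} or give them a winning \emph{out}. Your lexicographic potential addresses cycling, not this liveness question, and ``at least one survives per top arc'' already presupposes that every top arc was reached.

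The missing idea is the paper's invariant (Claim~\ref{claim:invariant}). Fix $s$ and let $\tau$ be the last interaction in which an agent with bra $\mu_s$ exchanges kets. By token generation that agent is then at \emph{winner\_spreader}. From $\tau$ on, there is always an agent $\braket{\mu_s}{v}$ with $[\mu_s,v]_k\subseteq[\mu_s,\mu_{s+1}]_k$ at \emph{winner\_spreader} or \emph{loser}. One checks this rule by rule: whenever such an agent is turned into \emph{winner}, a token stays on (or is handed to) an agent with the same bra. The exception is rule (3), and it cannot fire here. For $r>1$, any $\braket{\mu_s}{\mu_s}$ present after $\tau$ would still have to exchange kets later, since no such braket survives at stabilization. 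With this invariant, a token on $\braket{\mu_s}{\mu_{s+1}}$ always finds either a token already in the next arc, which ascends, or a \emph{loser} with bra $\mu_{s+1}$ to multiply into. By induction every top arc eventually holds a token, and from there the spreading and output steps proceed as in your sketch.

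Separately, your $r=1$ die-out argument is circular as written: tokens are ``not regenerated'' because ``tokens eventually vanish''. Also, the potential ``number of tokens plus arc length'' is not monotone under multiplication. The paper's $r=1$ case itself relies only on rules (3) and (i).
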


\appendixproof{thm:tiesharing}{
\tiesharethm*
To prove this theorem, we split it into two cases, depending on whether $r =1$ or $r>1$.  In the rest of this section, we simplify the notation and write $\mu_{s}$ for $\mu_{s \bmod r}$

\begin{lemma}\label{lem:ifoneocolor}
    If $r=1$, all agents with initial color in $M$ are at \emph{winner} while all other agents are at \emph{loser}.
\end{lemma}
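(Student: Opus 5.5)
We work in the stable regime given by Theorem~\ref{theorem:stabilization}: after finitely many steps no ket exchange ever happens again. By Lemma~\ref{lem:majowins} and Lemma~\ref{lemma:majority_color}, since $r=1$ the final multiset of brakets is $\bigcup_p f(G_p)$. Here $G_q=\{\mu_0\}$ and no other $G_p$ is a singleton. Hence every agent of form $\braket{i}{i}$ is a $\braket{\mu_0}{\mu_0}$, there is at least one such agent, and every other agent holds some $\braket{i}{j}$ with $i\neq j$. The braket of each agent is fixed from then on, so it suffices to study how the status variable evolves with frozen brakets.

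\textbf{Step 1: tokens are no longer created.} The token generation rule only fires on a ket exchange, so after stabilization no new \emph{winner\_spreader} is created by it. The other rules that produce \emph{winner\_spreader} are token multiplication and token ascent. Our plan is to show that the total number of tokens eventually reaches zero. Fix a stable agent $a$ at $\braket{\mu_0}{\mu_0}$. By the token destruction rule, when $a$ meets a token holder $\braket{i}{j}$ with $i\neq j$, that agent is set to \emph{winner} if $i=\mu_0$ and to \emph{loser} otherwise; in both cases its token is destroyed.

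The obstacle is that multiplication and ascent can recreate tokens. Token multiplication turns a \emph{loser} into a spreader, and token ascent moves a token to an agent with a strictly larger range $[i',j']_k$. We would handle this with a potential argument. Ascent strictly enlarges the range, so one token can ascend only a bounded number of times ($\le k$ nestings). Multiplication, however, can increase the number of tokens.

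Under a weak scheduler we cannot force an order of interactions. Instead we argue by contradiction: suppose tokens exist infinitely often. Then there is some agent $b$ that is a token holder infinitely often. Each time $b$ holds a token, the pair $(a,b)$ eventually interacts, so $b$ loses its token infinitely often. Hence $b$ must also regain a token infinitely often, which requires the multiplication or ascent rules to fire infinitely often. We would then trace backwards: every token was created from an earlier token via a chain of multiplications and ascents. We would combine this with the fact that $a$ meets every agent infinitely often, and look for a lexicographic potential, such as the number of tokens at each range size, that forbids infinite regeneration. This is the step we expect to be hardest. If the potential argument fails, the fallback is to show directly that the tokens satisfy a well-founded structure and die out.

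\textbf{Step 2: statuses settle once tokens are gone.} With no tokens left, only token destruction changes statuses; token destructor is also a ket-triggered rule. Since $a$ meets every agent infinitely often:
\begin{itemize}
\item agents $\braket{\mu_0}{j}$ are set to \emph{winner};
\item agents $\braket{i'}{j}$ with $i'\neq\mu_0$ are set to \emph{loser};
\item other $\braket{\mu_0}{\mu_0}$ agents are never modified and keep their initial \emph{winner} status, since they never become spreaders after stabilization.
\end{itemize}
No remaining rule changes these values. Therefore all agents with initial color $\mu_0$ are at \emph{winner} and all other agents are at \emph{loser}, which is the claim.
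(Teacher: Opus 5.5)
There is a genuine gap, and it is the one you flag yourself: Step 1 is never proved, and Step 2 depends on it. You are right that this step is needed. Unless only one color is present, the last ket exchange leaves at least one non-diagonal agent, and token generation makes it a \emph{winner\_spreader}. Surviving tokens can then overwrite, via multiplication, ascent, simplification and information spreading, statuses that token destruction has already set. The paper's proof is essentially your Step 2 alone and does not address this interference, so you cannot borrow the missing piece from it. But what you offer in its place is a search plan, not a proof. No potential built from token counts per range size can work, because a multiplication adds a token and removes none. ``Tracing backwards'' also does not exclude infinite regeneration. The scenario you have not ruled out is wrap-around: a chain of multiplications $\braket{\mu_0}{g_1}\to\braket{g_1}{g_2}\to\cdots\to\braket{g_m}{\mu_0}\to\braket{\mu_0}{g_1}$ could circulate forever if agents with bra $\mu_0$ could keep returning to \emph{loser}. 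Weak fairness alone does not stop the scheduler from letting each token multiply before it meets a diagonal agent.

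The missing idea is structural. Since $\mu_0\in G_p$ for every $p$, the arcs $[i,j]_k$ of $\bigcup_p f(G_p)$ form a laminar family in which $\mu_0$ is a breakpoint at every level. So any arc strictly containing some $[\mu_0,j]_k$ also starts at $\mu_0$; the excluded corner case of $\subseteq$ handles $[\mu_0+1,\mu_0]_k$. Consequently, after stabilization no rule can set a non-diagonal agent with bra $\mu_0$ to \emph{loser}. Once each such agent has met a $\braket{\mu_0}{\mu_0}$ agent, multiplication never targets bra $\mu_0$ again. From then on, measure positions clockwise from $\mu_0$, with a ket $\mu_0$ counting as position $k$. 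A multiplication creates a spreader whose arc ends strictly further than its creator's. An ascent moves the token to an arc ending no earlier and strictly longer. Hence spreaders are only created at a strictly larger key in a finite order. Induction on this key, together with the fact that every spreader eventually meets a diagonal agent and loses its token, shows that tokens die out.

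Step 2 also needs repair. A diagonal agent's status is changed by its partners, not by it becoming a spreader. With the literal convention $[\mu_0,\mu_0]_k=\{\mu_0\}\subsetneq[g,\mu_0]_k$, information spreading from a spreader $\braket{g}{\mu_0}$ with $g\neq\mu_0$ sets a diagonal agent to \emph{loser}, unless token destruction is applied first in that same interaction. If there is a single diagonal agent, nothing would restore it. You must fix how simultaneously applicable rules are resolved (the paper does not) and check that diagonal agents indeed remain at \emph{winner}. The induction above also needs this, since a diagonal agent at \emph{loser} can be turned into a spreader by multiplication.
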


\begin{proof}
    Eventually the only agents with brakets $\braket{i}{i}$ are $\braket{\mu_0}{\mu_0}$, by Lemmas~\ref{lem:majowins} and~\ref{lemma:majority_color}.
  Those agents are at \emph{winner} from either the initial configuration or the \emph{token destructor} rule.
  Moreover, those agents will set all agents $\braket{\mu_0}{j}$ to winner and all agents $\braket{i}{j}$ with $i \neq \mu_0$ to loser, by the \emph{token destruction} rule.
  The \emph{out} of all agents is set to $\mu_0$ by rule (i).
\end{proof}

The idea for the case $r>1$ relies heavily on the following observation:

\begin{claim}
After Stabilization (Theorem~\ref{theorem:stabilization}), every braket $\braket{i}{j}$ held by an agent satisfies that there exists an $s$ such that $[i,j]_k \subseteq [\mu_s, \mu_{s+1}]_k$.   
\end{claim}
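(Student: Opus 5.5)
The plan is to read the stable configuration off Lemma~\ref{lem:majowins}, which says the multiset of brakets equals $\bigcup_{p} f(G_p)$. It then suffices to show that for every $p\in[1,q]$ and every braket $\braket{g_l}{g_{l+1}}\in f(G_p)$ there is an $s$ with $[g_l,g_{l+1}]_k\subseteq[\mu_s,\mu_{s+1}]_k$. Here $\mu_0<\mu_1<\dots<\mu_{r-1}$ are the majority colors sorted increasingly, and indices are read modulo $r$.

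The first step is the key structural fact: every majority color lies in every greedy independent set. Adapting the argument of Lemma~\ref{lemma:majority_color}, each $G_p$ takes one copy of every color whose remaining count is still positive. A majority color has maximal count, so it is never exhausted before the last set, and hence $M\subseteq G_p$ for all $p$. Moreover, $G_q$ contains exactly the colors of maximal count, so $G_q=M$.

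Next, fix $G_p$ with sorted elements $g_0<\dots<g_m$. Since $M\subseteq G_p$, the cyclic order of $G_p$ refines the cyclic order of $M$: going clockwise from $g_l$ to $g_{l+1}$ passes through no other element of $G_p$, and in particular through no majority color strictly inside. I then take $\mu_s$ to be the last majority color met when walking counterclockwise from $g_l$, including $g_l$ itself if $g_l\in M$. Then $\mu_{s+1}$ is the first majority color met clockwise strictly after $\mu_s$. Because $(g_l,g_{l+1})_k$ contains no $\mu$, the clockwise walk $\mu_s\to g_l\to g_{l+1}$ meets no majority color strictly before $g_{l+1}$. So $g_{l+1}$ lies in $(\mu_s,\mu_{s+1}]_k$, and therefore $[g_l,g_{l+1}]_k\subseteq[\mu_s,\mu_{s+1}]_k$ as sets.

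Finally, I have to check the side condition in the definition of inclusion of modulo ranges and the degenerate cases. If $r>1$, two distinct endpoints $\mu_s\neq\mu_{s+1}$ exist. The corner case $g_l=\mu_{s+1}$ together with $g_{l+1}=\mu_s$ can only occur when $r=2$ and $G_p=M$, and then I simply choose the other index $s$. Brakets of the form $\braket{i}{i}$ occur only when $|G_p|=1$, which by $M\subseteq G_p$ forces $r=1$. In that case the claim is read with $\mu_{s+1}=\mu_s$, so that $[\mu_0,\mu_0]_k$ is the full circle; this case is handled separately by Lemma~\ref{lem:ifoneocolor} anyway.

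I expect the main obstacle to be this bookkeeping with modulo ranges and wrap-around: making the choice of $s$ rigorous for both a $\braket{g_m}{g_0}$ braket and the corner-case exclusion. The combinatorial content itself is light once $M\subseteq G_p$ is established.
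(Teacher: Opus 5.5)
Your proposal is correct and follows the paper's route. The paper simply declares the claim immediate from \Cref{lem:majowins} and the definitions of greedy independent sets and circle braket sets; you supply exactly the omitted details ($M\subseteq G_p$ for every $p$, so cyclically consecutive elements of $G_p$ never straddle a majority color) and rightly set aside $r=1$, which the paper handles via Lemma~\ref{lem:ifoneocolor}. One harmless slip: the corner case $g_l=\mu_{s+1}$, $g_{l+1}=\mu_s$ does not require $G_p=M$ (e.g.\ $k=5$, $M=\{0,4\}$, $G_p=\{0,2,4\}$, braket $\braket{4}{0}$ with $s=0$), but your rule of taking $\mu_s=g_l$ whenever $g_l\in M$ already excludes it for $r>1$, so nothing is lost.
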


This claim is immediate by the definitions of the Greedy independent sets (Definition~\ref{def:greedy_independent_sets}), the corresponding Circle Braket Sets (Definition~\ref{def:braketsets}) and Lemma~\ref{lem:majowins}.

We thus show below that starting at some point in time, there exists a token in every interval $[\mu_s, \mu_{s+1}]_k$, there exists an agent $\braket{\mu_s}{\mu_{s+1}}$ that holds a token, and can spread the information for all agents in that interval.

We start by proving the following lemma. It relies on our definition of Greedy independent sets (Definition~\ref{def:greedy_independent_sets})  and the corresponding Circle Braket Sets (Definition~\ref{def:braketsets}).

\begin{lemma}\label{lem:ifmorecolors}
    Let $r>1$. Let $T(t)$ be the set of agents that have a token after interaction $t \in \mathbb N$.
    There exists an $\ell\in \mathbb N$ and a set of agents $T$ such that for all $t \ge \ell$, $T(t)= T$.
    Moreover, the multiset of brakets of all agents in $T$ is equal to $f(G_q)$. 
\end{lemma}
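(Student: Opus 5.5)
The plan is to work entirely after the stabilization point of Theorem~\ref{theorem:stabilization}. From then on, every agent's braket is fixed, so only the status rules act, and the token generation and token destructor rules never fire again.

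\textbf{Structure of the stable configuration.} By Lemma~\ref{lem:majowins}, the multiset of brakets is $\bigcup_p f(G_p)$. Since the colors in $M$ have the maximal count, every $G_p$ contains all of $M$, and $G_q = M$ exactly. Because $r>1$, every $G_p$ has at least two colors, so no agent holds a braket $\braket{i}{i}$. Hence the token destruction rule never fires either. By the Claim above, every braket interval lies inside some $[\mu_s,\mu_{s+1}]_k$.

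The interval $[\mu_s,\mu_{s+1}]_k$ itself is realized by exactly one agent, namely the one holding $\braket{\mu_s}{\mu_{s+1}}\in f(G_q)$. Indeed, every other braket comes from some $G_p$ with $p<q$, connects two consecutive elements of $G_p \supseteq M$, and is therefore strictly contained in some $[\mu_s,\mu_{s+1}]_k$. Call these $r$ agents the \emph{maximal agents}. They carry exactly the multiset $f(G_q)$, so the lemma reduces to showing that eventually $T(t)$ equals the set of maximal agents.

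\textbf{Tokens never die out.} The count of tokens can only change as follows:
\begin{itemize}
\item token ascent moves a token rather than destroying it;
\item token simplification removes $a$'s token only while $b$ keeps one;
\item token multiplication only adds tokens.
\end{itemize}
So the number of tokens is non-increasing only through simplification, and it never reaches $0$. At least one token exists at stabilization: with $r>1$ every initial $\braket{i}{i}$ must decay at some point, and the last decay produces a \emph{winner\_spreader}.

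\textbf{Maximal agents acquire and keep tokens.} Consider a token on a non-maximal agent in interval $s$. By weak fairness it meets the maximal agent of interval $s$, and the ascent rule transfers the token to that maximal agent. Once a maximal agent holds a token, it keeps it forever. No bracket interval strictly contains $[\mu_s,\mu_{s+1}]_k$, so ascent cannot take its token. Simplification would need another token on an interval containing $[\mu_s,\mu_{s+1}]_k$, and the maximal agent is the only one with that interval.

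Conversely, any token on a non-maximal agent of interval $s$ is eventually removed once the maximal agent of interval $s$ holds a token: either by ascent or simplification when the two meet. Such a token can only be recreated by token multiplication, and multiplication needs a \emph{loser} target. Agents of interval $s$ whose bra is not $\mu_s$ are turned into losers by information spreading, and the first element of each circle gets its status from the \emph{winner} rules.

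I would show that after all $r$ maximal tokens exist, multiplication fires only finitely often. The potential is the number of non-maximal agents that are losers adjacent, via the ket--bra link $\braket{i}{j}\to\braket{j}{j'}$, to a token. Since the kets are frozen and the number of non-maximal tokens then only decreases under simplification and ascent, this should give the eventual fixed set $T$ of maximal agents.

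\textbf{Main obstacle.} I expect the hardest step to be proving that \emph{every} interval $s$, not just the one containing the initial surviving token, eventually receives a token. The natural route is multiplication along the circle $f(G_q)$: the maximal token $\braket{\mu_s}{\mu_{s+1}}$ meets $\braket{\mu_{s+1}}{\mu_{s+2}}$. This only works if that agent is a \emph{loser} at that moment. I would try to show by contradiction that an interval without any token eventually has all its agents whose bra is $\mu_{s+1}$ at loser status. The argument would go through the information-spreading and ascent rules applied from neighboring intervals, together with induction around the circle $\mu_0,\dots,\mu_{r-1}$, and if necessary a finer case analysis of the status each agent held at the moment of its last ket exchange.
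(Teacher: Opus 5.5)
Your proposal has a genuine gap, and it is the step you yourself flag as the main obstacle. You never show that every interval $[\mu_s,\mu_{s+1}]_k$, and not only the one the surviving token ascends into, eventually receives a token, and the route you sketch cannot work. After stabilization, suppose no agent with bra $\mu_{s+1}$ holds a token. Then the statuses of all $\bra{\mu_{s+1}}$ agents stay frozen until one of them acquires a token, for three reasons. First, token destruction needs a $\braket{i}{i}$, and none exists anymore. Second, information spreading can set them to \emph{winner} only from a token with the same bra, and there is none. Its \emph{loser} branch needs strict inclusion in a token interval with a different bra, which is impossible for an interval starting at the majority color $\mu_{s+1}$, given the inclusion convention for modulo ranges. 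Third, ascent and simplification only demote agents that themselves hold a token. So you cannot show that all $\bra{\mu_{s+1}}$ agents eventually become losers. Even the weaker fact you actually need, that at least one of them is a \emph{loser}, cannot come from post-stabilization dynamics; it must come from the history before stabilization.

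This is the role of the paper's invariant (Claim~\ref{claim:invariant}). Since $r>1$, no $\braket{\mu_s}{\mu_s}$ survives, so some $\bra{\mu_s}$ agent exchanges kets. At the last such exchange, token generation makes it a \emph{winner\_spreader} holding its final braket $\braket{\mu_s}{v}$ with $[\mu_s,v]_k\subseteq[\mu_s,\mu_{s+1}]_k$. A rule-by-rule check then shows that some $\bra{\mu_s}$ agent stays at \emph{winner\_spreader} or \emph{loser} from then on. The reason is that such an agent can be moved to \emph{winner} only by interacting with a token holder of the same bra $\mu_s$, which then takes over as the witness. Hence, if no $\bra{\mu_{s+1}}$ agent holds a token, a \emph{loser} $\braket{\mu_{s+1}}{v}$ exists. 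The token on $\braket{\mu_s}{\mu_{s+1}}$ multiplies onto it, and ascent lifts the new token to $\braket{\mu_{s+1}}{\mu_{s+2}}$; induction around $\mu_0,\dots,\mu_{r-1}$ then covers every interval.

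Two further points need correcting. The braket $\braket{\mu_s}{\mu_{s+1}}$ is not held by a unique agent whenever $G_p=M$ for some $p<q$; two colors with two agents each already gives two copies of $\braket{0}{1}$. So ``a maximal agent keeps its token forever'' fails as stated. What actually leaves one token per interval, and hence the multiset $f(G_q)$, is token simplification merging tokens on identical brakets. Your termination argument for token multiplication is also only sketched, since multiplication itself creates non-maximal tokens, and ascent and information spreading create new losers. The paper instead uses that information spreading eventually sets all $\bra{\mu_s}$ agents to \emph{winner}, which blocks further multiplication from the maximal tokens.
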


\begin{proof}
  Let $t$ be the interaction where the brackets of the agents stop changing (Theorem \ref{theorem:correctness}).
  The population contains agents with brakets $\braket{\mu_0}{\mu_1}$$, \braket{\mu_1}{\mu_2},$$ \dots, \braket{\mu_{r-1}}{\mu_0}$, by Lemma~\ref{lem:majowins}.
   
  We show that starting at $\ell$, we have the following invariant:
  \begin{claim}\label{claim:invariant}
      For any $t'\ge \ell$, we have that for any $s \in [1,r]$, there exists a $v\in [0,k-1]$ with $[\mu_s,v]_k \subseteq [\mu_s, \mu_{s+1}]_k$, and such that there exists an agent with braket $\braket{\mu_s}{v}$ either at \emph{winner\_spreader} or \emph{loser}.
  \end{claim}

  \begin{proof}
      Let us fix an $s$ and look at the last point in time $\tau$ where a $\bra{\mu_s}$ exchanged kets. 
      By \emph{token generation}, a token is thus created.
      We then prove the claim by induction. Assume that after interaction $\tau' \ge \tau$, we have that there exists a $v\in [0,k-1]$ with $[\mu_s,v]_k \subseteq [\mu_s, \mu_{s+1}]_k$, and such that there exists an agent with braket $\braket{\mu_s}{v}$ either at \emph{winner\_spreader} or \emph{loser}. We prove this holds as well after interaction $\tau'+1$.

      If the interaction did not involve a braket $\braket{\mu_s}{v}$ at either \emph{loser} or \emph{winner\_spreader}, the induction is immediate. We thus assume that such an agent is involved in the interaction.

     If the interaction $\tau'+1$ is either \emph{token generation, }or\emph{ token destructor}, the results holds trivially, as that same agent is set to either \emph{winner\_spreader} or \emph{loser} (It cannot be set at \emph{winner} using \emph{token destructor}, as that would imply the existence of a $\braket{\mu_s}$, which is incompatible with the fact that $r>1$ and the definition of $\tau$).

     If interaction $\tau'+1$ is a \emph{token destruction}, agent in $\braket{\mu_s}{v}$ must have been set to \emph{loser}, as the alternative (being set to \emph{winner} would mean that $\braket{\mu_s}{v}$ met $\braket{\mu_s}$, a contradiction to $\tau$ being the last point in time where $\bra{\mu_s}$ exchanged kets, as we know by Lemma~\ref{lem:majowins} that $\braket{\mu_s}$ is not in the final braket multiset). This maintains the invariant.

     If interaction $\tau'+1$ is a \emph{token ascent}, then either agent in $\braket{\mu_s}{v}$ must have been set to \emph{loser},
     or the agent it interacted with had a token and was in a state of the form $\braket{\mu_s}{ j}$ for some $j \in [0,k-1]$. 
     In that case the invariant holds after $\tau'+1$, because either $\braket{\mu_s}{v}$ is at \emph{loser}, or $\braket{\mu_s}{j}$ being at \emph{winner\_spreading}.   
     
     If interaction $\tau'+1$ is a \emph{token simplification}, then either agent in $\braket{\mu_s}{v}$ must have been set to \emph{loser},
     or the agent it interacted with had a token and was in a state of the form $\braket{\mu_s}{ j}$ for some $j \in [0,k-1]$. 
     In that case the invariant holds after $\tau'+1$, because either $\braket{\mu_s}{v}$ is at \emph{loser}, or $\braket{\mu_s}{j}$ being at \emph{winner\_spreading}.

     If interaction $\tau'+1$ is an \emph{information spreading}, either $\braket{\mu_s}{v}$ is set to \emph{loser}, which maintains the invariant, or it met some braket $\braket{\mu_s}{j}$ for some $j\in [0,k-1]$ at \emph{winner\_spreader}. In that case the invariant holds after $\tau'+1$, not necessarily due to $\braket{\mu_s}{v}$, but rather $\braket{\mu_s}{j}$ being at \emph{winner\_spreading}.      
  \end{proof}

  At time $t$, the population also contains at least one agent at \emph{winner\_spreader}, which was created by the last ket exchange.
  This token will ascend until it becomes attached to a $\braket{\mu_s}{\mu_{s+1}}$ for some $s \in [0,k-1]$.
In case there is no token associated to a $\braket{\mu_{s+1}}{j}$, by Claim~\ref{claim:invariant}, there is at least one agent in $\braket{\mu_{s+1}}{j}$ at \emph{loser}, and the $\braket{\mu_s}{\mu_{s+1}}$ agent at \emph{winner\_spreader} will generate a token at it by \emph{token multiplication}, which will then ascend as well.
We can thus show inductively that after some point, for all $s$, there exists a token in one of the $\braket{\mu_s}{\mu_{s+1}}$, using Claim~\ref{claim:invariant}, \emph{token multiplication} and \emph{token ascent}. 
(Note that \emph{token destruction} cannot happen anymore as we are after Stabilization (Theorem~\ref{theorem:stabilization}), while \emph{token simplification} ensures at least one token remains.
\emph{Information spreading} will then ensure that all agents in $\braket{\mu_s}{j}$ for some $s \in [1,r], j \in [0,k-1]$ are set to \emph{winner}, blocking any further \emph{token multiplication} to happen. \emph{Token simplification} then ensures that there is exactly one token associated to a $\braket{\mu_s}{\mu_{s+1}}$ for each $s$. 
\end{proof}

\begin{proof}[Proof of Theorem~\ref{thm:tiesharing}]
    The case for $r=1$ is handled by Lemma~\ref{lem:ifoneocolor}. 
    The case for $r\ge 2$ is mostly handled by Lemma~\ref{lem:ifmorecolors}, where we conclude by noting that agents in $T$ set all other agents to either \emph{winner} or \emph{loser} correctly using \emph{information\_spreading}, and update their \emph{out} accordingly (the key point here is that after Stabilization (Theorem~\ref{theorem:stabilization}), every braket $\braket{i}{j}$ satisfies that there exists an $s$ such that $[i,j]_k \subseteq [\mu_s, \mu_{s+1}]_k$).

    The claim on the number of states is straightforward, as each agent stores a braket that can have $k^2$ values, a status that can have three values, and eventually an output with $k$ possible values.
\end{proof}
}

\subsection{Breaking ties}
\label{subsec:tie_breaking}
Instead of reporting or sharing ties it is also possible to break them, meaning in case of a tie all agents agree on one of the colors in relative majority. 
We adapt \prm{} by adding a single bit $tie\_break\_token$ (short $tbt$) and by running additional steps in parallel to \prm{}. We call the resulting protocol \tiebreak{}.
We say that an agent $a$ carries the tie-break token if $tbt_a = 1$, while agents with $tbt = 0$ do not. 
The idea is as follows: whenever two agents exchange their kets, a tie-breaking token is generated.
When two tokens meet, one token is destroyed. 
The token is always passed along to the braket with the largest weight. 
The idea we rely on is that in case of a tie the braket with largest weight is formed of the bra and the ket of two majority colors. 
In case there is no tie, as before an agent with $\braket{i}{i}$ for the unique majority color $i \in [0,k-1]$ exists, obtaining the maximum weight of $k$.

More precisely the \tiebreak{} protocol is obtained from the \prm{} protocol by executing the following steps in sequence after applying the transition function of \prm{}. Note that agents may apply multiple steps (e.g.~\ref{prmtb:destroy} and~\ref{prmtb:pass}) in a single interaction.
\begin{enumerate}
	\item\label{prmtb:generate}\textbf{token generation: } If the agents exchange their kets (due to \prm{}) and neither one currently holds a tie-break token, then one of them generates a token (sets $tbt$ from 0 to 1).
	\item\label{prmtb:destroy}\textbf{token destruction: } If both agents hold a tie-break token, then one of them destroys its token (sets $tbt$ from 1 to 0).
	\item\label{prmtb:pass}\textbf{token transfer: } If exactly one agent holds a tie-break token, then they pass the token to the one with the larger weight. In case the weight of the agents is identical, then they pass the token to the one with the larger bra (to pass a token from $a$ to $b$, set $tbt_a$ from 0 or 1 to 0 and $tbt_b$ from 0 or 1 to 1).
\end{enumerate}
After applying the rules above, the output of agents is set:
\begin{itemize}
    \item If agent $a$ has the token, set \emph{out}$_a=$\emph{out}$_b= i_a$, where $a$ has braket $\braket{i_a}{j_a}$.
\end{itemize}

In case we do not specify which of the agents should generate or destroy a token, it does not matter for the correctness of the protocol, so w.l.o.g. we chose the initiator.

\begin{restatable}[\appref{thm:tiebreak}]{theorem}{thmtiebreak}\label{thm:tiebreak}
    \tiebreak{} solves the relative majority problem against a weakly-fair scheduler.
    If there is a tie, agents all output the same color, one of the majority ones.

    It requires $2k^3$ many states, which can be reduced to $3k^2$ if agents are not required to output the majority color, but only whether their color wins or loses.
\end{restatable}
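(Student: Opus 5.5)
We prove \Cref{thm:tiebreak} in three stages: first the brakets stabilize, then the number of tokens stabilizes to one, and finally that token settles on an agent whose bra is a majority color. By Theorem~\ref{theorem:stabilization}, there is an interaction $t_0$ after which no ket exchanges occur. Hence, by Lemma~\ref{lem:majowins}, the multiset of brakets equals $\bigcup_p f(G_p)$ from then on. After $t_0$ the \emph{token generation} rule never fires, so the number of tokens is non-increasing. It is also at least one as soon as any exchange has happened, since \emph{token destruction} only removes one of two tokens. If no exchange ever happens, all agents are $\braket{i}{i}$ with a single color, and the claim is trivial via the \emph{out} rule of \prm{}; we therefore assume at least one token exists.

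\textbf{Number of tokens.} After $t_0$, a weakly fair scheduler eventually lets any two token holders interact, and that interaction destroys one of the two tokens. Because the count never increases, it eventually reaches exactly $1$ and stays there.

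\textbf{Position of the token.} Consider the total order on the brakets present after $t_0$ given by the pair (weight, bra), compared lexicographically. Every \emph{token transfer} moves the token weakly upward in this order, so it eventually reaches a maximal braket and never leaves it. Weak fairness guarantees that the holder meets an agent holding a maximal braket. It remains to identify these maximal brakets.
\begin{itemize}
\item If there is no tie, Lemma~\ref{lemma:majority_color} gives $G_q=\{\mu\}$, so $\braket{\mu}{\mu}$ is present with weight $k$. No other $\braket{i}{i}$ survives, so every other braket has weight strictly less than $k$.
\item If there is a tie, the last greedy set is $G_q=M=\{\mu_0<\dots<\mu_{r-1}\}$ with $r\ge2$, and every $G_p$ contains $M$ (as in the proof of Lemma~\ref{lemma:majority_color}). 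By Definition~\ref{def:braketsets}, each braket $\braket{i}{j}$ in $f(G_p)$ satisfies $[i,j]_k\subseteq[\mu_s,\mu_{s+1}]_k$ for some $s$. Its weight is therefore at most $(\mu_{s+1}-\mu_s)\bmod k$, with equality exactly for $\braket{\mu_s}{\mu_{s+1}}$. All brakets of weight $k$ are gone, so the maximum weight is attained by some $\braket{\mu_s}{\mu_{s+1}}$. Ties in weight are broken by the larger bra, so the maximal element's bra is the majority color $\mu_s$, which is fixed for a fixed configuration.
\end{itemize}

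In both cases the unique token eventually rests forever on an agent with bra $\mu^*\in M$. The output rule then sets the holder's \emph{out}, and that of everyone it meets, to $\mu^*$. Weak fairness makes all agents eventually output $\mu^*$, and no agent ever outputs anything else afterwards.

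The main obstacle is the tie case of the position argument. A braket $\braket{\mu_s}{j}$ with $j\neq\mu_{s+1}$ and equal weight could in principle occur, so the weight ordering must be combined with the bra tie-break. One must also confirm that within an arc the only braket of full arc-weight is $\braket{\mu_s}{\mu_{s+1}}$, which follows from the containment claim. A further subtlety is that the bra tie-break compares bras $\mu_s$ of different arcs of equal length; this is harmless, since all such bras are majority colors.

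\textbf{State count.} The braket contributes $k^2$ states, \emph{out} contributes $k$, and $tbt$ contributes $2$, giving $2k^3$. In the win/lose variant, a status bit replaces \emph{out}, giving the stated $3k^2$ bound.
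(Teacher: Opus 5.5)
Your overall route matches the paper's: freeze the brakets via \Cref{theorem:stabilization} and \Cref{lem:majowins}, show that a token survives and settles on the maximal braket in the (weight, bra) order, and identify that braket's bra as a majority color. Your tie-case identification is sharper than the paper's. You show that every maximum-weight braket is some $\braket{\mu_s}{\mu_{s+1}}$, so the bra tie-break cannot select a non-majority bra; the paper's ``the maximum weight is obtained in $f(G_q)$'' leaves this implicit. There are, however, two places where the argument as written does not go through.

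\textbf{Token count and token position.} The token-count step applies weak fairness to token holders as if they were fixed agents. Weak fairness only guarantees that each fixed pair of agents meets infinitely often. Between the moment you pick two holders and the moment they meet, either token may have been transferred, so ``any two token holders eventually interact'' does not follow. The same objection applies to ``the holder meets an agent holding a maximal braket.''

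Your phrase ``weakly upward'' makes this worse. If a token could move between agents with identical brakets, a scheduler could shuttle it (or two tokens) among such agents forever, so that it never meets the required partner while holding the token.

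The fix uses your own ordering. A transfer always goes to a strictly larger (weight, bra) pair, since equal weight and equal bra force equal ket, so no transfer occurs between identical brakets. Hence every interaction after $t_0$ that changes some $tbt$ strictly decreases, lexicographically, the pair (number of tokens, total rank deficit of the tokens). So tokens become stationary after finitely many such events. Only then does weak fairness yield destruction and migration to the maximum. This is the order the paper uses: migration first, then destruction among the maximal agents. Note also that uniqueness of the token is not needed, since all maximal agents hold the same braket.

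\textbf{The $3k^2$ bound.} The reduction to $3k^2$ is asserted, not justified: a braket, the $tbt$ bit and a status bit give $4k^2$ states. You need two further steps. First, specify how the bit is updated: a token holder sets it to ``win'' for itself and for every agent with its bra, and to ``lose'' for all others. Second, observe that therefore a token holder's bit is always ``win'', so the combination (token, lose) never occurs. This is what brings the count down to $3k^2$.
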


\appendixproof{thm:tiebreak}{
\thmtiebreak*

The protocol stores three numbers between 0 and $k-1$ and the $tbt$ bit, giving $2k^3$ possible states. It is possible to remove the memory entry $out$ and add a single bit, which stores only whether and agent is in relative majority or not. For that an agent holding the tie-break token sets this bit to true for itself an all agents it meets with the same bra and to false for all other agents.
This requires $3k^2$ states (because any agent holding a tie-break token has the output bit to 1).
The overall claim follows from the following lemma.

\begin{lemma}
    \label{lem:tie_breaking}
    After some number of interactions all agents store the same color in \emph{out}, which is part of a (the) color in relative majority.
\end{lemma}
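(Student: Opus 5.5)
We prove Lemma~\ref{lem:tie_breaking} by following the tie-break tokens after the brakets have stabilized, and we split into the cases with and without a tie.

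\textbf{Token count.} By Theorem~\ref{theorem:stabilization}, there is an interaction $t_0$ after which no ket exchange happens. The multiset of brakets is then fixed and, by Lemma~\ref{lem:majowins}, equals $\bigcup_p f(G_p)$. After $t_0$, token generation never fires, so the number of tokens never increases. We first check that at least one token exists after $t_0$. The tokens form a nonincreasing count that is only decreased by destruction, and destruction leaves one token behind. So if some token ever existed, at least one exists forever. If there was never any ket exchange, every agent is $\braket{i}{i}$, which means all agents share a single color. This degenerate case is handled directly: all agents have $out = i$ from the start. Otherwise the first ket exchange generated a token, unless a token was already present, and either way tokens persist. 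Under weak fairness, any two token holders eventually meet, so destruction eventually leaves exactly one token holder. We call the token's position its carrier.

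\textbf{Where the token settles.} Transfer moves the token to the agent with larger weight, breaking weight ties by larger bra. Define $b^*$ as the agent maximizing the pair (weight, bra) lexicographically. Once a unique token exists, weak fairness guarantees that the carrier eventually meets an agent attaining this maximum, and from then on the token stays on an agent with the maximal value. Agents with the same braket are indistinguishable, so the token may still move among them, but its bra is fixed at some value $i^*$.

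\textbf{Identifying $i^*$.} If there is a unique majority color $\mu$, Lemma~\ref{lemma:majority_color} and Lemma~\ref{lem:majowins} show that the only brakets of weight $k$ are $\braket{\mu}{\mu}$, so $i^* = \mu$. If $r > 1$ colors tie, $G_q = M = \{\mu_0, \dots, \mu_{r-1}\}$ and no braket has weight $k$. Every stable braket $\braket{i}{j}$ with $i \neq j$ satisfies $[i,j]_k \subseteq [\mu_s, \mu_{s+1}]_k$ for some $s$, by the same observation used in the proof of Theorem~\ref{thm:tiesharing}. Hence its weight is at most $w(\braket{\mu_s}{\mu_{s+1}})$, with equality only if $i = \mu_s$. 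The maximum weight is therefore attained by some $\braket{\mu_s}{\mu_{s+1}}$, and any braket attaining it has a majority bra. Ties in weight are resolved by the larger bra, which is again a majority color, so $i^* \in M$.

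\textbf{Output and the main obstacle.} Once the token's bra is fixed at $i^*$, the output rule makes every agent that meets the carrier set $out = i^*$, and the carrier itself also holds $i^*$. No other agent ever writes $out$, because only the carrier does. By weak fairness, all agents eventually hold $out = i^*$.

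The main obstacle is the third step: showing that, in the tie case, the maximal-weight braket has a majority bra. This needs the interval-nesting structure of the stable circles, and in particular the fact that the top circle $f(G_q)$ uses exactly the majority colors and partitions the modular clock into the arcs $[\mu_s, \mu_{s+1}]_k$.
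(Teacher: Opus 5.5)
Your proof is correct and follows essentially the same route as the paper: stabilization of the brakets, persistence of at least one tie-break token after the last ket exchange (with the single-color corner case handled separately), migration of the token to an agent maximizing (weight, bra), and identification of that bra through Lemma~\ref{lem:majowins} and Lemma~\ref{lemma:majority_color}. Your tie-case argument via the containment $[i,j]_k \subseteq [\mu_s,\mu_{s+1}]_k$ is more explicit than the paper's one-line appeal to $G_1 \supseteq \dots \supseteq G_q$. It shows directly that every braket attaining the maximum weight has a majority bra, so the bra tie-break cannot send the token to a non-majority color.
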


\begin{proof}
	By Theorem~\ref{theorem:stabilization} after some number of interactions $t_1$, the braket of every agent remains unchanged. In the corner case that $t_1 = 0$, meaning agents never exchange kets, only a single color is present in the population. In this case no tie-break token is ever generated and $out$ of all agents contains the unique color by initialization, producing the correct output.

	Otherwise in $t_1$ the last exchange of kets between agents occurs.
    Therefore a tie-break token is generated in $t_1$. This means at least one such token exists after $t_1$. It is, however, possible for multiple tokens to exist at this point, generated during previous interactions.
	After $t_1$ no more tie-break tokens are generated, because step~\ref{prmtb:generate} is never executed again.
	Step~\ref{prmtb:destroy} destroys one token whenever two tokens meet, decreasing the overall number of tokens. Since, however, two tokens are required for it to be executed, the number of tokens can never be decreased from 1 to 0, meaning at least one token exists in every computation step after $t_1$.
	After $t_1$ the weight of agents does not change anymore. The token transfer in combination with the destruction step then ensures that all tokens are eventually passed to the agent with maximum weight and among them to the one with the largest bra. If multiple agents with maximum weight and maximum bra exist, then the destruction step still ensures that eventually only one of them holds the tie-break token. This means after some number $t_2 \geq t_1$ of interactions, among all agents of maximum weight a unique agent $a$ exists that holds the tie-break token. Due to the output propagation, after some number $t_3 \geq t_2$ of interactions, all agents store $\bra{i_a}$ as output.

	All that is left to show is that $\bra{i_a}$ represents a color in relative majority. Associate with the population sets $G_1, G_2, \dots, G_q$ as in Definition~\ref{def:greedy_independent_sets}.
    By Lemma~\ref{lem:majowins} after $t_1$ the population consists of agents with brakets $\bigcup_{p \in [q]} f(G_p)$.
	Clearly $G_q$ contains all colors in relative majority. In case $f(G_q)$ contains a single element, then this element has the largest possible weight $k$.
    Otherwise, the maximum weight is obtained in $f(G_q)$ because $G_1 \supseteq G_2 \supseteq \dots \supseteq G_q$.
	Therefore $\bra{i_a}$ represents a color in relative majority, which concludes the proof.
\end{proof}
}

\section{Unordered setting}
\label{sec:unordered}
The \prm{} protocol, which we introduce in this work to solve relative majority, relies on numerical representations of the colors in order to compute a distance between them.
In this section we describe how to adapt \prm{} to the unordered setting, in which agents are only able to compare colors for equality and to memorize them.
We use the overall idea proposed by \citet{natale2019necessarymemorycomputeplurality} which is to combine our protocol with an ordering protocol that assigns to all agents of some initial color from  color set $C$ the same numeric label from label set $K=[k-1]$.
We, however, use another ordering protocol, that is more memory-efficient.

\subsection{The \po{} protocol}
The protocol reuses some ideas proposed by~\citet{natale2019necessarymemorycomputeplurality} and combines them with a modified version of the protocol proposed by \citet{shukai2012selfstabilizingleaderelection}, solving (a variant of) the leader election problem.
Initially all agents receive label 0.
\po{} elects one leader per color among the agents initially supporting that color (by using the asymmetry of interactions).
Whenever two leaders with different colors meet (recall that the agents can check for equality between colors), then one of them increments its label by one modulo~$k$ in case they have the same label. Non-leaders simply copy the label of their leader. Eventually there is a unique leader per color and all leaders store a different label, that they then distribute to the non-leaders.

\begin{restatable}[\appref{thrm:ordering}]{theorem}{ordering}
	\label{thrm:ordering}
	\po{} solves the ordering problem with $O(k^2)$ states.
\end{restatable}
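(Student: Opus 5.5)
The plan is to analyse \po{} in two phases: leader election within each color, then resolution of label collisions among the surviving leaders. Propagation of the final labels to non-leaders comes last.

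\textbf{State count.} Each agent stores its (immutable) input color, a label in $K=[k-1]$ and a leader bit. The color is a single memorized color out of $k$, so the total is $2k\cdot k = O(k^2)$ states. If the construction needs a few extra flags, these only contribute constant factors.

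\textbf{Phase 1: one leader per color.} Initially every agent is a leader. When two leaders with equal colors meet, the responder becomes a non-leader. The number of leaders of a fixed color $c$ never increases, and it can never drop from $1$ to $0$, since a demotion requires two leaders of color $c$. While two leaders of color $c$ exist, weak fairness makes them interact, which strictly decreases the count. Hence after finitely many interactions each color present in the input has exactly one leader, and this leader never changes afterwards.

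\textbf{Phase 2: distinct labels among leaders.} From now on there are $k'\le k$ leaders, pairwise of distinct colors. Labels only change when two leaders with equal labels meet. The key observation is that a configuration in which all leaders hold distinct labels is absorbing. So it suffices to show that such a configuration is eventually reached.

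I would first try a potential argument:
\begin{enumerate}
\item Consider the multiset of labels occupied by leaders, read cyclically modulo $k$.
\item Show that each increment either reduces the number of collisions or moves a colliding leader strictly forward into a block of occupied labels.
\item Since $k'\le k$ there is always a free label. So a leader cannot travel more than $k$ steps around the cycle without passing through a free slot.
\end{enumerate}
This bounds the number of increments in the spirit of \citet{shukai2012selfstabilizingleaderelection}.

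This step is the main obstacle. A weakly fair adversary controls both the order of interactions and the initiator/responder roles, so it may try to build cycles of collisions. If the potential argument fails, I would fall back on two points. First, from every configuration a distinct-label configuration is reachable: repeatedly let the same leader of a colliding pair increment until it lands on a free label, which must exist because $k'\le k$. Second, by a suitable choice of which leader increments, or by invoking global fairness, this reachable configuration is eventually attained.

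\textbf{Phase 3: propagation.} Once leaders are unique with stable distinct labels, each non-leader of color $c$ copies the label of the leader of color $c$ upon meeting it. Weak fairness guarantees this meeting happens, and afterwards no label of any agent changes. Thus all agents of the same input color hold the same label from $K$, and distinct colors hold distinct labels, which is exactly the ordering problem.
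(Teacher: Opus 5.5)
\textbf{Phase 2 has a genuine gap.} Your three-phase decomposition matches the paper's: leader election per color, then a potential argument for label collisions among the surviving leaders, then propagation to non-leaders. Phase 1, Phase 3 and the $O(k^2)$ state count are fine. The problem is Phase 2, which is where the content of the theorem lies: you never establish that collisions stop under a weakly fair scheduler.

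Your step 2 is correct, but step 3 does not prove termination as stated. A leader that has moved into a free label can later be joined there by another leader. If it is the initiator of their meeting, it is forced to move again. So ``passing through a free slot'' does not bound how far a leader travels. Nothing you have said prevents free labels from reappearing behind a moving collision either, so an excess leader could keep chasing a free slot around the cycle.

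The missing idea is the following invariant. Once leaders are stable, a label held by some leader stays held by some leader forever. Leaders' labels change only through increments. An increment needs two leaders on the same label $d$, and only one of them leaves $d$. Hence the set of free labels only shrinks.

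With this invariant the argument closes, whatever the scheduler's choice of initiators. The paper uses the potential $R(S)=\sum_{d}(k^2+g_d(S))\,n_d(S)$. Here $n_d$ is the number of excess leaders on label $d$, and $g_d$ is the cyclic distance from $d$ to the nearest free label. Every increment strictly decreases $R$:
\begin{itemize}
\item Moving into a free label lowers $\sum_d n_d$ by one. The resulting drop of at least $k^2$ outweighs the increase of at most $k(k-1)$ in the $g$-terms.
\item Moving into an occupied label changes no occupancy. It transfers one unit of excess from $d$ to $d+1$, where $g_{d+1}=g_d-1$, so $R$ drops by exactly one.
\end{itemize}
A per-leader version of your step 3 also works once you have the invariant. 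Every label a leader has ever held is permanently occupied. After $k-1$ increments, all $k$ labels are occupied by at most $k$ leaders, so no collision remains. Hence each leader increments at most $k-1$ times.

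Either way the number of increments is finite. Weak fairness then forces any remaining colliding pair to interact and increment, which is a contradiction.

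Your fallback does not repair the gap. Which leader increments is fixed by the protocol and the scheduler, not by you. Appealing to global fairness proves a weaker statement than the one the paper establishes and later uses for \prmo{} against a weakly fair scheduler. The ``cycles of collisions'' you worry about are exactly what a strictly decreasing potential rules out; reachability of a collision-free configuration alone does not.
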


Intuitively the proof of correctness associates a potential function with every configuration that measures how far away the configuration is from containing a valid ordering.
It decreases by a lot in case a leader chooses a previously unused label and by a little in case a leader gets closer to an unused label, until it eventually reaches 0 and an ordering is established.

\appendixproof{thrm:ordering}{
\ordering*

Formally we define the protocol as follows.\\

\textbf{Memory organization: }
The state of an agent $a$ encodes the following information:
\begin{enumerate*}
	\item $ld_a$, a single bit indicating whether $a$ is a leader
	\item $ic_a$ the initial color of $a$ (in an arbitrary representation)
	\item $d_a$ the label of $ic_a$ in binary representation using $\ceil{\log_2 k}$ bits
\end{enumerate*}

\textbf{Initialization: }
Initially every agents sets $ld = 1$, $d = 0$, and $ic$ is initialized with the agents initial color.

\textbf{Transition function: }
The transition function that updates the states of two agents $a, b \in A$ is given by the following pseudocode.
\begin{algorithmic}[1]
	\If {$ic_a = ic_b$}
		\If {$ld_a = 1$ and $ld_b = 1$}
			\State $ld_a \gets 0$
		\EndIf
		\If {$ld_a = 0$ and $ld_b = 1$ and $d_a \neq d_b$}
			\State $d_a \gets d_b$
		\EndIf
	\Else
		\If{$d_a = d_b$ and $ld_a = 1$ and $ld_b = 1$}
			\State $d_a \gets (d_a + 1)\ mod\ k$
		\EndIf
	\EndIf
\end{algorithmic}

\textbf{Output: }
The output of agent $a$ can be read directly from $d_a$.\\

We show different properties of the protocol, that we will then use to argue its correctness. For any color $c \in C$, let $supp(c)$ denote the set of agents initially supporting color $c$.
First we consider the leader election process.

\begin{lemma}
	\label{lem:pou_unique_leader}
	After some number of interactions $t_1$ each non-empty set $supp(c)$ of agents contains exactly one leader. After $t_1$ the leader bit of all agents will not change. 
\end{lemma}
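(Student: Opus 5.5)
We prove the lemma by restricting attention to a single color class $supp(c)$ and tracking the leader bits of its agents. Two facts about the transition function drive the argument. First, the leader bit $ld$ of an agent is modified only in line 3, which sets it to 0 when two leaders with the same initial color meet. No rule ever sets $ld$ back to 1, so within $supp(c)$ the number of leaders is non-increasing. Second, line 3 only fires when both agents are leaders of the same color, and it demotes only the initiator. The responder stays a leader, so the number of leaders in $supp(c)$ drops by exactly one and never reaches 0. Since every agent starts with $ld=1$ and $supp(c)$ is non-empty, $supp(c)$ has at least one leader at every step.

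Next we show that the count eventually reaches exactly one. Suppose that, from some time onward, $supp(c)$ contains at least two leaders $a$ and $b$ forever. Since the count is non-increasing and bounded below, it is eventually constant. Once it is constant, line 3 never fires again within $supp(c)$, so the set of leaders is fixed. Under the weakly fair scheduler, $a$ and $b$ interact infinitely often; at the first such interaction after stabilization, line 3 fires and demotes one of them. This contradicts the count being constant. Hence after finitely many interactions, $t_c$, the set $supp(c)$ contains exactly one leader.

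Take $t_1 = \max_c t_c$, a maximum over the at most $k$ non-empty color classes. After $t_1$ every non-empty $supp(c)$ contains exactly one leader, so no two leaders of the same color remain and line 3 can never fire again. No other line changes $ld$, so all leader bits are frozen after $t_1$.

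The only subtle point is the monotonicity argument that yields a finite $t_c$ and excludes reaching 0 leaders. Both follow from reading the code: $ld$ appears only in line 3, and line 3 leaves the responder as a leader.
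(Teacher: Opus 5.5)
Your proof is correct and follows the same route as the paper: leader bits only go from $1$ to $0$ via line 3, which demotes exactly one of two same-color leaders, so the per-color leader count is non-increasing, never reaches $0$, and is eventually forced down to $1$. The paper leaves the fairness step implicit in ``as long as there are at least two leaders of color $c$, $\#_c$ will be decremented''. You make it explicit: once the count stabilizes the leader set is fixed, so two remaining leaders must eventually meet under weak fairness.
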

\begin{proof}
	The proof is highly similar to that of Lemma 1 in \cite{natale2019necessarymemorycomputeplurality}. Some additional argumentation similar to that of proof of Lemma 7 in \cite{shukai2012selfstabilizingleaderelection} is added.
	At no point does \po{} change a leader bit from 0 to 1.
	The number of leaders can subsequently never increase.
	Let~$\#_c$ denote the number of leaders with color $c$ for some~$c \in C$.
	When two leaders of color $c$ interact, one of them is set to be a non-leader, due to line 3 of the transition function, decreasing $\#_c$ by exactly one.
	As long as there are at least two leaders of color $c$, $\#_c$ will be decremented in that way.
	Two agents are required to clear a leader bit and reduce $\#_c$.
	Therefore no interaction reduces $\#_c$ from 1 to 0 as soon as there is only a single agent left. Therefore there is exactly one leader for color $c$, whose leader bit will never be set to zero. This holds for any $c \in C$, meaning there is exactly one leader per color.
    
	No leader bit of a leader can be set to 0 after that point and no leader bit of a non-leader can be set to 1.
	Therefore no leader bit changes anymore.
\end{proof}

A configuration containing a unique leader for each non-empty set $supp(c)$ of agents, in which the leader bit of no agent will ever change again, is said to be \textit{leader stable}.
Let $L \subseteq A$ be the set of leaders in a leader stable configuration.
A configuration is said to contain a \textit{valid ordering} if it is leader stable and for any two leaders $a, b \in L$, $d_a \neq d_b$.
Let $m_d(S)$ denote the number of leaders with different colors having label $d$ in some leader stable configuration $S$. A label with $m_d(S) = 0$ in configuration $S$ is called unoccupied, unused or free.
Labels with $m_d(S) > 0$ are called used or occupied.
Let $n_d(S) = max(0, m_d(S) - 1)$.
Intuitively $n_d$ describes the number of ``excess leaders'' having label $d \in K$.
For any $d \in K$ and any configuration $S$ let:
\begin{align*}
	g_d(S) &= \min_{\substack{h \in K \\ m_h(S) = 0}} (h - d) \mod k\\
	R_d(S) &= (k^2 + g_d(S)) \cdot n_d(S)
\end{align*}
In case no $h \in K$ with $m_h(S) = 0$ exists, define $g_d(S) = 0$. For any configuration $S$ further define:
\begin{equation*}
	R(S) = \sum\limits_{d = 0}^{k-1} R_d(S)
\end{equation*}
Intuitively $R(S)$ expresses ``how far away'' configuration $S$ is from containing a valid ordering.
For that $R_d(S)$ combines the values $n_d(S)$, expressing how many excess leaders are currently using label $d$, with the value $g_d(S)$, measuring the distance of label $d$ to the next unoccupied label.
$R_d(S)$ can therefore be thought of as (an upper bound on) the number of computation steps required by any interaction sequence starting in configuration $S$ to ensure all excess leaders currently using label $d$ are assigned some unused label.
The factor $k^2$ takes into account that whenever a leader occupies a previously unoccupied label, the distance for all other labels to a free label increases.
To readers familiar with ordinals it might help to think of the first uncountable ordinal $\omega$ instead of $k^2$, expressing that an arbitrary increase in $g_d$ is compensated by a decrement by one in $n_d$.

Step by step we now show that $R$ will decrease until a configuration $S$ with $R(S) = 0$ is reached, at which point a valid ordering exists and $R$ will never increase again. First we show that $R(S) = 0$ corresponds to a valid ordering in $S$.
\begin{lemma}
	\label{lem:pou_valid_ordering_Req0}
	$R(S) = 0$ for a leader stable configuration $S\ \iff$ There is a valid ordering in configuration $S$.
\end{lemma}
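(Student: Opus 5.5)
The plan is to prove the two directions separately, using only the definitions of $n_d$, $g_d$ and $R_d$ together with the fact that in a leader stable configuration $S$ the leaders $L$ are in bijection with the non-empty colors.

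\emph{($\Leftarrow$)} Assume $S$ contains a valid ordering. Then any two leaders carry different labels, so $m_d(S)\le 1$ for every $d\in K$. Hence $n_d(S)=\max(0,m_d(S)-1)=0$ for every $d$. Since $g_d(S)\ge 0$ by definition (it is a remainder modulo $k$, or $0$ when no free label exists), we get $R_d(S)=(k^2+g_d(S))\cdot 0=0$ for all $d$, and therefore $R(S)=0$.

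\emph{($\Rightarrow$)} Assume $R(S)=0$ for a leader stable $S$. Every summand $R_d(S)$ is a product of the nonnegative quantity $n_d(S)$ with the factor $k^2+g_d(S)\ge k^2>0$. Each summand therefore vanishes, which forces $n_d(S)=0$, i.e.\ $m_d(S)\le 1$ for all $d$. Consequently no label is shared by two leaders. Since $S$ is leader stable by hypothesis, the configuration satisfies exactly the definition of containing a valid ordering.

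The only point needing care is confirming that the factor $k^2+g_d(S)$ is strictly positive, including in the corner case where no free label exists and $g_d(S)$ is set to $0$. With that settled the equivalence is immediate, and the argument requires neither the transition function nor Lemma~\ref{lem:pou_unique_leader} beyond the leader stability assumption already built into the statement.
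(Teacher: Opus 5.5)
Your proposal is correct and follows essentially the same argument as the paper's proof. In both, $(\Rightarrow)$ uses nonnegativity of $n_d$, $g_d$ and positivity of $k^2$ to force every $n_d(S)=0$. In both, $(\Leftarrow)$ uses distinct leader labels to get $m_d(S)\le 1$, hence $n_d(S)=0$ and $R(S)=0$.
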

\begin{proof}
	($\implies$) Assume $R(S) = 0$ for some leader stable configuration $S$. By definition of $n_d$ and $g_d$ it holds that $n_d(S) \geq 0$ and $g_d(S) \geq 0$ $\forall d \in K$. The number of colors $k \geq 1$ and thus also $k^2$ is a positive constant. Therefore $R(S) = 0$ implies $n_d(S) = 0\ \forall d \in K$. For any label $d$ there are therefore zero or one leaders having label $d$. This implies for any two leaders $a,b \in L, d_a \neq d_b$. $S$ therefore contains a valid ordering.\\
	($\impliedby$) Assume $S$ contains a valid ordering. By definition $S$ is leader stable. As it contains a valid ordering, there are no two agents $a, b \in L$, s.t. $d_a = d_b$. This means $m_d(S) < 2\ \forall d \in K$, implying $n_d(S) = 0\ \forall d \in K$ and thus $R(S) = 0$.
\end{proof}

We show next that as long as the value of $R$ is larger than zero there is some edge in the interaction graph that a scheduler might select, such that the value of $R$ decreases. This is because as long as $R$ is non-zero there are two leaders using the same label. Once they interact one of them selects a new label. Either it chooses one that no other leader is using or it at least gets closer to such an unoccupied label. Both options reduce $R$.
\begin{lemma}
	\label{lem:pou_reducing_R}
	For any leader stable configuration $S$ with $R(S) > 0$, there exists an interaction $(a,b)$, such that for configuration $S'$ with $S\xrightarrow{(a,b)}S'$ it holds that $R(S') < R(S)$.
\end{lemma}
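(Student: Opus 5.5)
The plan is to pick an interaction between two leaders that share a label and check, by a short case analysis, that $R$ strictly drops.

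\textbf{Choosing the interaction.} Since $R(S)>0$ and all factors $k^2+g_d(S)$ are positive, there is a label $d$ with $n_d(S)>0$, i.e.\ $m_d(S)\ge 2$. So there are two leaders $a,b\in L$ with different initial colors and $d_a=d_b=d$. I take the interaction $(a,b)$ with $a$ as initiator. By the transition function (the \emph{else} branch), only $d_a$ changes, to $d':=(d+1)\bmod k$. Because $S$ is leader stable, $S'$ is still leader stable with the same leader set, so $R(S')$ is well defined. Non-leaders do not enter $R$.

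\textbf{A free label exists.} There is one leader per present color, hence at most $k$ leaders. Two of them share label $d$, so by pigeonhole some label is unoccupied. Thus $g_h(S)$ is given by the minimum formula (not the default $0$). Since $d$ is occupied, $g_d(S)\ge 1$.

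\textbf{Case 1: $d'$ is occupied in $S$.} The set of free labels is the same in $S$ and $S'$, so $g_h(S')=g_h(S)$ for all $h$. Only $n_d$ (down by one) and $n_{d'}$ (up by one) change, so
\[
R(S')-R(S) = g_{d'}(S)-g_d(S).
\]
Both $d$ and $d'$ are occupied, so the nearest free label clockwise from $d$ lies strictly beyond $d'$. Hence $g_{d'}(S)=g_d(S)-1$, and $R$ decreases by exactly $1$.

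\textbf{Case 2: $d'$ is free in $S$.} Now $m_{d'}$ goes from $0$ to $1$, so $n_{d'}$ stays $0$, and $n_d$ drops by one. The term $R_d$ therefore loses at least $k^2+g_d(S')\ge k^2$ (up to the change in $g_d$, handled next). However, label $d'$ is no longer free, so the $g_h$ values may increase for every label $h$.

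This is the step I expect to be the main obstacle: bounding that increase. The plan is to write
\[
R(S')-R(S)\le -k^2+\sum_h \bigl(g_h(S')-g_h(S)\bigr)\,n_h(S'),
\]
and then use two bounds:
\begin{itemize}
\item $0\le g_h\le k-1$ for every $h$, so each increase $g_h(S')-g_h(S)$ is at most $k-1$;
\item $\sum_h n_h(S')\le |L|-1\le k-1$, since the excess counts sum to at most the number of leaders minus one.
\end{itemize}
Together these give a total increase of at most $(k-1)^2<k^2$, so $R(S')<R(S)$.

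The point needing care is the bookkeeping of $g_d$ itself inside the $R_d$ term. I would handle it by folding it into the same sum, evaluating $n_d$ at its new value $n_d(S)-1$, which the bound above already covers.
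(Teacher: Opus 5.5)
Your proof is correct and follows the paper's argument: an interaction between two leaders sharing a label $d$, a case split on whether $(d+1)\bmod k$ is occupied in $S$, an exact decrease of $1$ in the occupied case (via $g_d(S)=g_{d+1}(S)+1$), and in the free case the loss of one $k^2$ term outweighing the possible growth of the $g_h$. Weighting the $g$-increases by $n_h(S')$ to get $(k-1)^2<k^2$ is a slightly tighter version of the paper's crude $k(k-1)$ estimate, and it also covers the corner case where no free label remains in $S'$, since then every $g_h(S')=0$.
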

\begin{proof}
	For $R(S) > 0$ to hold there must be some $d \in K$, such that $n_d(S) > 0$. This implies $m_d(S) \geq 2$, meaning there are at least two leaders $a$ and $b$ using label $d$.
	Consider the interaction between $a$ and $b$. In $S$ they both share label $d$. The interaction causes $a$ to update its label to $(d + 1)\ \mod\ k$ (for a shorter notation the $\mod\ k$ operation will be suppressed in the rest of the proof). Therefore $n_{d}(S') = n_{d}(S) - 1$, because in $S'$ one leader less uses $d$ than in $S$.
	There are two possible cases that can arise. Either the new label of $a$ was unoccupied, meaning $m_{d+1}(S) = 0$ or it was already occupied, meaning  $m_{d+1}(S) > 0$.
    
	For convenience define the sum of excess leaders:
	\begin{equation*}
		P(S) = \sum\limits_{d' = 0}^{k-1} n_{d'}(S)
	\end{equation*}
	
	Case $m_{d+1}(S) = 0$:
	Let $R(S) - R(S') = \Delta R$ be the difference between $R(S)$ and $R(S')$.  Express $\Delta R$ as $\Delta R = \Delta R^+ - \Delta R^-$ with $\Delta R^+ \geq 0 $ being the sum of all terms that decrease the value of $R(S')$ compared to $R(S)$, meaning they increase the difference between them, and with  $\Delta R^- \geq 0$ being the sum of all therms that increase the value of $R(S')$ compared to $R(S)$, meaning they decrease the difference.
    
	The assumption $m_{d+1}(S) = 0$ implies $m_{d+1}(S') = 1$ and therefore $n_{d+1}(S) = n_{d+1}(S') = 0$. In $S$ no agent was using label $d+1$ and in configuration $S'$ only agent $a$ is using it. Therefore in both $S$ and $S'$ there are no excess leaders in label $d$. As mentioned above $n_{d}(S') = n_{d}(S) - 1$. The value $n_{d''}$ for all other labels $d'' \in K$ remains unaffected by the interaction. Therefore $P(S') = P(S) - 1$, meaning the overall number of leaders that must select a different label decreases by one. This implies $\Delta R^+ \geq k^2$, because $k^2$ is added to the sum one time less and the difference between $R(S)$ and $R(S')$ is therefore increased by a value of at least $k^2$.
    
	Because $d+1$ is not a free label anymore in $S'$, for every label $d' \in K$ the value of $g_{d'}(S')$ might increase compared to the value of $g_{d'}(S)$. Considering the range of $g_{d'}$ the value can increase by at most $k-1$.
	With $k$ many leaders this means $\Delta R^- \leq k(k-1) = k^2 - k$, because at most $k$ times the value of $k-1$ is added to the sum\footnote{This is very crude estimate, as obviously not for all $k$ leaders the distance can be increased by $k-1$ at the same time. A tighter analysis would allow to choose a constant smaller than $k^2$ in the definition of $R(S)$ but would otherwise have no other consequences for the proof.},
	decreasing the difference in $R(S)$ and $R(S')$ by at most $k^2-k$.
    
	It now therefore holds that $\Delta R = \Delta R^+ - \Delta R^- \geq k^2 - (k^2 - k) = k > 1$. Therefore $R(S) > R(S')$.\\
	
	Case $m_{d+1}(S) > 0$: In this case it holds that $n_{d+1}(S') = n_{d+1}(S) + 1$, meaning in $S$ there already was at least one agent using label $d+1$ and in configuration $S'$ there is now one more. Together with $n_{d}(S') = n_{d}(S) - 1$ this means $P(S') = P(S)$, meaning the number of leaders that must select a new label remains unchanged.
    
	We now argue that $\forall d' \in K: g_{d'}(S) = g_{d'}(S')$, meaning the distance to the next free label does not change for any label. It holds that $m_d(S) \geq 2$ and $m_d(S') \geq 1$. Also it holds that $m_{d+1}(S) \geq 1$ and $m_{d+1}(S') \geq 2$. Neither of the labels $d$ and $d+1$ was unoccupied in $S$ and neither is in $S'$. For all other labels $d' \in K \setminus \{d, d+1\}: m_{d'}(S) = m_{d'}(S')$. Interaction $(a,b)$ thus did not lead to any previously occupied label being unoccupied or to any unoccupied label being occupied, therefore it did not change the distance to the next free label for any $d \in K$.
    
	Because $R(S) > 0$ and because there are $k$ many labels and at most $k$ many leaders, there must be some free label.
	The fact that $m_{d}(S) > 0$ and $m_{d+1}(S) > 0$ implies $g_d(S) = g_{d+1}(S) + 1$. This means that because both labels $d$ and $d+1$ are used in $S$, the next free label is one step further away for $d$ than it is for $d + 1$. Define $p = g_d(S)$. Then $g_{d+1}(S)  = p - 1$ and as argued before $g_d(S') = p$ and $g_{d+1}(S') = p -1$.
    
	Now consider the following two terms:
	\begin{align*}
          R_{d}(S') &+ R_{d+1}(S')\\
                    &= n_d(S')(k^2 + p) + n_{d+1}(S')(k^2 + p - 1)\\
											&= (n_d(S) - 1)(k^2 + p) + (n_{d+1}(S) + 1)(k^2 + p -1)\\
		R_{d}(S) + R_{d+1}(S) & = n_d(S)(k^2 + p)  + n_{d+1}(S)(k^2 + p - 1)
	\end{align*}
	Subtracting $R_{d}(S') + R_{d+1}(S')$ from $	R_{d}(S) + R_{d+1}(S)$ gives:
	\begin{multline*}
			R_{d}(S) + R_{d+1}(S) - (R_{d}(S') + R_{d+1}(S'))
			\\=\ k^2 + p - (k^2 + p -1)
			=\ 1
	\end{multline*}
	It further holds that $\forall d' \in K \setminus \{d, d+1\}: R_{d'}(S) = R_{d'}(S')$, because $d'$ stays unaffected by the interaction.
	It overall follows that $R(S') < R(S)$.
\end{proof}

As long as $R$ is non-zero, there is some interaction that decreases $R$. We now show that for all configurations there is never any interaction that increases $R$.
\begin{lemma}
	\label{lem:pou_nonincreasing_R}
	For any leader stable configuration $S$ and interaction $S \xrightarrow{(a,b)} S'$ it holds that $R(S') \leq R(S)$.
\end{lemma}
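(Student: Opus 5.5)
The plan is a case analysis on which line of the transition function fires in the interaction $(a,b)$. The key observation is that $R(S)$ depends only on the quantities $m_d(S)$ for $d \in K$. Since $S$ is leader stable, there is exactly one leader per color, and by Lemma~\ref{lem:pou_unique_leader} no leader bit ever changes again. Hence $m_d(S)$ is simply the number of leaders whose label is $d$. So $R$ can only change if some \emph{leader} changes its label. It therefore suffices to examine each line of the pseudocode and check whether it can modify the label $d$ of a leader.

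\textbf{Case 1: same color, $ic_a = ic_b$.} Line~3 requires two leaders of the same color. Leader stability rules this out, so line~3 never fires. Line~6 only modifies $d_a$ when $ld_a = 0$, i.e.\ it changes the label of a non-leader. Such a change does not affect any $m_d$, hence no $n_d$ or $g_d$ either. So in this case $R(S') = R(S)$.

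\textbf{Case 2: different colors, $ic_a \neq ic_b$.} Either line~10 does not fire, in which case $S' = S$ and $R$ is unchanged. Or it fires, which requires $ld_a = ld_b = 1$ and $d_a = d_b = d$. Then $a$ and $b$ are two leaders (of different colors) sharing label $d$, so $m_d(S) \ge 2$. Hence $n_d(S) > 0$ and in particular $R(S) > 0$. The transition then sets $d_a \gets d+1$, which is exactly the interaction analyzed in the proof of Lemma~\ref{lem:pou_reducing_R}.

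I would point out that the argument there does not use any property of the chosen pair beyond ``two leaders with equal label $d$, one of which moves to $d+1$''. Its two subcases, $m_{d+1}(S) = 0$ and $m_{d+1}(S) > 0$, therefore apply verbatim to our $a,b$ and give $R(S') < R(S)$. The only auxiliary fact that proof needed was the existence of a free label, derived from $R(S) > 0$, and we have just established $R(S) > 0$.

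Combining the cases, every interaction either leaves $R$ unchanged or strictly decreases it, so $R(S') \le R(S)$. The one step needing care is justifying the reuse of Lemma~\ref{lem:pou_reducing_R}'s computation for an \emph{arbitrary} firing of line~10 rather than for the specific pair whose existence that lemma asserts. I would handle this by restating that the two-case computation is a statement about any such pair. I would also double-check that non-leaders never enter the definition of $m_d$, so that line~6 is genuinely irrelevant to $R$.
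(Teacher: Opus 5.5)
Your proof is correct and follows the paper's argument essentially verbatim. It uses the same split into $ic_a = ic_b$, where only a non-leader's label can change so $R$ is unchanged, and $ic_a \neq ic_b$, where either nothing changes or line~10 fires and the two-subcase computation from the proof of Lemma~\ref{lem:pou_reducing_R} gives $R(S') < R(S)$; your explicit remark that this computation applies to \emph{any} pair of equal-label leaders of different colors, not just the pair whose existence that lemma asserts, is exactly what the paper's one-line appeal to Lemma~\ref{lem:pou_reducing_R} tacitly relies on.
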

\begin{proof}
	Consider any operation of the transition function that may be executed on the interaction $(a, b)$.\\
	Case $ic_a = ic_b$: Only the label of a non-leader may be changed in this case. Non-leaders do not influence the value of $R$ in any way, implying $R(S) = R(S')$.\\
	Case $ic_a \neq ic_b$: If $d_b \neq d_a$ or $ld_a = 0$ or $ld_b = 0$, then the state of $a$ and $b$ remains unchanged, meaning $R(S) = R(S')$.
	Otherwise $d_a$ will be changed to $d_a' = (d_a + 1)\ mod\ k$.  By Lemma~\ref{lem:pou_reducing_R} this implies $R(S') < R(S)$.
\end{proof}

With all the above lemmata defined, we show correctness of \po{}.
\ordering*
\begin{proof}
	After some number of interactions $t_1$ \po{} reaches a leader stable configuration $S_{t_1}$ according to Lemma~\ref{lem:pou_unique_leader}. At that point $R(S_{t_1})$ can be defined. Obviously $R(S_{t_1})$ is a finite value.
	We show by induction over all possible values of $R(S_{t_1})$, that eventually a configuration $S_{t_2}$ will be reached with $R(S_{t_2}) = 0$.\\
	
	\textit{Induction base:} $R(S_{t_1}) = 0$. In this case $t_1 = t_2$ and the claim obviously holds.\\
	
	\textit{Induction hypothesis:} \po{} eventually reaches a configuration $S_{t_2}$ with $R(S_{t_2}) = 0$, when at $t_1$, $R(S_{t_1}) = x \geq 0$.\\
	
	\textit{Induction step:} To show: \po{} eventually reaches a configuration $S_{t_2}$ with $R(S_{t_2}) = 0$, when at $t_1$, $R(S_{t_1}) = x + 1$.\\
	According to Lemma~\ref{lem:pou_reducing_R} there exists some interaction $(a, b)$, such that $S_{t_1} \xrightarrow{(a, b)} S_{t_1+1}$ and $R(S_{t_1+1}) < R(S_{t_1})$. By Lemma~\ref{lem:pou_nonincreasing_R} no interaction any scheduler might select increases $R$. Looking at the proof of Lemma~\ref{lem:pou_nonincreasing_R} more closely it clearly also follows that any interaction that does not decrease $R$, does not change the state of any leader. Therefore no matter how often a scheduler may select any interaction that does not decrease $R$, leading to some configuration $S_{t_1}'$, the interaction $(a, b)$ still produces a configuration $S_{t_1 + 1}'$ such that $S_{t_1}' \xrightarrow{(a, b)} S_{t_1 + 1}'$ and $R(S_{t_1 + 1}') < R(S_{t_1}') = R(S_{t_1})$. A weakly fair scheduler (as well as a random or a globally fair one) will therefore eventually either select interaction $(a, b)$ or another one reducing $R$. Therefore eventually $R$ will be decreased and by induction hypothesis a configuration $S_{t_2}$ with $R(S_{t_2}) = 0$ will be reached.\\
	
	Obviously $R$ can never get negative. Using Lemma~\ref{lem:pou_nonincreasing_R} again therefore after $t_2$, $R$ will never increase and thus remain zero. In $S_{t_2}$ there is a valid ordering on the leaders, according to Lemma~\ref{lem:pou_valid_ordering_Req0}. After $t_2$ the only state changes that will occur are those of non-leaders interacting with the leaders of their color, causing the non-leader to copy the label of the leader, until finally in some configuration $S_{t_3}$, all agents of the same color agree on the same binary label, which is different from the labels of all other colors.
    
	\po{} stores the initial color of an agent, for which there are $k$ options at most, the label associated with that color, again having at most $k$ options, and a single bit indicating whether the agent is a leader. This gives $2k^2 \in O(k^2)$ possible states.
\end{proof}
}

\subsection{The \prmo{} protocol}
The output of \po{} is used as input to \prm{}. To reiterate: \po{} computes a numeric \textit{label} from set $K=[k-1]$ for an agents initial \textit{color} from (the arbitrary) set $C$.
That label is stored as the agents \textit{bra}, which functions as input to \prm{}.
One important property that is required for the correctness of \prm{} is that it maintains the invariant from Lemma~\ref{lemma:global_braket_invariant}, so for every bra $\bra{i}$ in the population for some $i \in K$, there needs to be a matching ket $\ket{i}$ in the population as well.
Care must therefore be taken when overwriting the bra of an agent with a new label.
Whenever \po{} wants to update an agents bra, we therefore make sure that it is done in a way that maintains the invariant.
We say that an agent is \textit{consistent} in case its braket is $\braket{i}{i}$ for $i \in K$ and \textit{inconsistent} otherwise.
Before applying changes to an agent's bra, we first wait for the agent to become consistent.
Then its braket can be re-initialized to a new value $\braket{j}{j}$ and Invariant~\ref{lemma:global_braket_invariant} still holds.
The idea of waiting for agents to become consistent before transmitting changes of the ordering protocol to the relative majority protocol also originates from~\citet{natale2019necessarymemorycomputeplurality}.

In order to remember what changes \po{} wants to apply to the bra of an agent until that agent becomes consistent, we introduce an \textit{ordering state}.
The ordering state stores whether \po{} wants to increment the bra, copy it from the leader or leave it unchanged.
For that it can take one of the values $I$, $CL$, $CLW$ or $U$ (for increment, copy leader, copy leader wait or unchanged).
The two separate states $CL$ and $CLW$ are required to make the following distinction: an agent in ordering state $CLW$ wants to copy the label of its leader, but first has to wait to become consistent, while an agent in ordering state $CL$ is already consistent and now only has to wait to meet its leader in order to copy its label.
Agents in ordering state $U$ will be called \textit{stable}, because they do not want to make any changes to their bra.
Agents in $I$ or $CLW$ will be called \textit{unstable} to indicate they want to change their bra, but first need to become consistent.
Finally agents in $CL$ we call \textit{pending}, highlighting that they are already consistent and only waiting to meet their leader.

Formally the \prmo{} protocol which solves relative majority in the unordered setting works as follows.\\

\textbf{Memory organization and initialization: }
Agents store all memory entries related to \prm{}: $\braket{i}{j}$, $out$ and depending on the use case bits to break, share or report ties.
Additionally there are memory entries $ic$  to store the initial color of an agent (in some representation) and the leader bit $ld$ required for \po{}. 
Agents also store an \textit{ordering state} $os \in \{U, I, CL, CLW\}$. Initially $i = j = 0$,  $os = U$ and $ld = 1$. Both $ic$ and $out$ are initialized with the input color of the agent.

\textbf{Transition function: }
When we say an agent with braket $\braket{i}{i}$ re-initializes for some value $i'$, then it sets its braket to $\braket{i'}{i'}$, $os = U$, $out = ic$ and re-initializes all variables related to reporting or breaking ties as described in the respective sections.
An interaction between agents $a$ and $b$, with brakets $\braket{i_a}{j_a}$ and $\braket{i_b}{j_b}$, results in state updates by performing the following steps in sequence:
\begin{enumerate}
	\item \label{prmo:ordering} In case $a$ and $b$ are leaders of the same color, then $a$ stops being a leader, meaning it sets $ld_a = 0$. In case $b$ is the leader of $a$, $i_a \neq i_b$ and $a$ is not pending, then $a$ sets $os_a = CLW$. In case $a$ and $b$ are leaders of different colors, have the same bra and $b$ is stable, then $a$ sets $os_a = I$. This step corresponds to \po{}.
	\item \label{prmo:copy_leader} If $a$ is pending, $a$ and $b$ have the same initial color $ic \in C$ and $b$ is a leader, then $a$ re-initializes its label with value $i_b$. Pending agents do otherwise not interact in any of the other steps.
	\item \label{prmo:become_consistent} If $a$ is unstable and inconsistent, and $j_b = i_a$, then $a$ and $b$ exchange their $ket$, making $a$ consistent.
	\item \label{prmo:become_stable} If $a$ is unstable and consistent, then in case $os_a = I$,  $a$ re-initializes for the value $(i_a + 1) \mod k$. In case $os_a = CLW$, then $a$ sets $os_a = CL$.
	\item \label{prmo:majority} $a$ and $b$ interact according to \prm{}.
\end{enumerate}

In order for agents to output elements from the set of colors $C$ instead of elements from the set of labels $K$, manipulate \prm{} as follows: Whenever some agent $a$ sets $out_a = \bra{i_b}$ for some agent $b$ (potentially $a = b$), instead set $out_a = ic_b$.\\

\begin{restatable}[\appref{theorem:unordered}]{theorem}{thmunordered}
	\label{theorem:unordered}
	\prmo{} solves relative majority in the unordered setting against a weakly-fair scheduler using $O(k^4)$ states. This can be reduced to $O(k^3)$ many states if every agent is only required to report whether their color wins or loses.
\end{restatable}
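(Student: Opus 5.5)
The proof splits into three phases: the ordering layer settles, the braket layer reaches a clean restart, and from there \prmo{} behaves like \prm{}.

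\textbf{Phase 1: the ordering layer stabilizes.} The leader bits and leader labels in \prmo{} follow exactly the dynamics of \po{}. Step~\ref{prmo:ordering} changes $ld$ as in \po{}, and a leader increments its label only by re-initializing. The one difference is the delay: a leader in state $I$ must first become consistent. I would argue this delay is always finite and then reuse the potential $R$ from the proof of Theorem~\ref{thrm:ordering}. The key fact is that an unstable inconsistent agent $a$ holding $\braket{i_a}{j_a}$ with $i_a\neq j_a$ eventually becomes consistent. By Lemma~\ref{lemma:global_braket_invariant} some agent $b$ carries $\ket{i_a}$; weak fairness makes $a$ meet $b$, and step~\ref{prmo:become_consistent} then fires. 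That $b$ may have changed before they meet does not matter: some carrier of $\ket{i_a}$ always exists while $a$ is inconsistent, so I would use a potential-style argument that counts agents in $I$ or $CLW$ inconsistent states. The re-initialization $\braket{i}{i}\mapsto\braket{i'}{i'}$ keeps the invariant, and step~\ref{prmo:become_consistent} is a ket exchange, so Lemma~\ref{lemma:global_braket_invariant} continues to hold throughout. Leader bits stabilize by Lemma~\ref{lem:pou_unique_leader}, and the leaders' committed labels then make $R$ non-increasing, with a strict decrease available whenever $R>0$ (Lemmas~\ref{lem:pou_reducing_R} and~\ref{lem:pou_nonincreasing_R}). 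The condition ``$b$ is stable'' ensures that two colliding leaders do not both schedule increments on stale labels.

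\textbf{Phase 2: every agent inherits a correct label.} After $R$ reaches $0$, the remaining work is non-leaders copying labels. A non-leader whose bra differs from its leader's label goes to $CLW$, becomes consistent, moves to $CL$, and re-initializes with its leader's label on meeting the leader. It is then stable and never fires again. This gives a time $t^\ast$ after which no re-initialization occurs. At $t^\ast$ the multiset of bras equals the multiset of labels of the initial colors, which is a relabelling of the input. By the invariant, the kets form the same multiset.

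\textbf{Phase 3: \prm{} from an arbitrary braket configuration.} This is the main obstacle. The configuration at $t^\ast$ is not the all-$\braket{i}{i}$ start state. However, the proofs of Theorem~\ref{theorem:stabilization} and Lemma~\ref{lem:majowins} use only two facts: the bra multiset is fixed and the invariant of Lemma~\ref{lemma:global_braket_invariant} holds. The ordinal argument works from any configuration, and the induction in Lemma~\ref{lem:majowins} inspects only the stable configuration. So I would check that both arguments go through from any configuration satisfying the invariant, with the greedy independent sets taken over labels. The final brakets are then $\bigcup_p f(G_p)$. Lemma~\ref{lemma:majority_color} identifies the unique $\braket{\ell}{\ell}$ agents, and they spread $ic$ as $out$. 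The tie variants are handled the same way, because their variables are re-initialized on each re-initialization.

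\textbf{State count.} An agent stores $ic$ ($k$ values), the bra and ket ($k^2$ values), $out$ ($k$ values), plus $ld$ and $os$ (constant). This gives $O(k^4)$ states. If $out$ is replaced by a win/lose bit, the count drops to $O(k^3)$.
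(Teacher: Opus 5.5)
Your Phases 2 and 3 and the state count agree with the paper. Phase 3 is in fact more explicit than the paper about why \prm{} still converges from a non-initial configuration that satisfies the invariant. The gap is in Phase 1: the key fact that an unstable inconsistent agent $a$ eventually becomes consistent is not established by what you wrote, and everything else depends on it. The inference ``some carrier of $\ket{i_a}$ always exists while $a$ is inconsistent, and weak fairness makes $a$ meet it'' is invalid. Weak fairness only says that each \emph{fixed} pair meets infinitely often. The carriers of $\ket{i_a}$ keep changing, through \prm{} ket exchanges in step~\ref{prmo:majority}, through step~\ref{prmo:become_consistent} exchanges of other unstable agents, and through re-initializations. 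A weakly fair scheduler can let $a$ meet each agent only right after that agent has passed $\ket{i_a}$ on. Your potential counting inconsistent agents in $I$ or $CLW$ does not help either. It is not monotone, since agents keep entering $I$ and $CLW$, and it decreases only through the very event you are trying to force.

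The missing idea is to show that the kets eventually stop moving, independently of whether $a$ ever becomes consistent, and only then invoke weak fairness. The paper does this in three steps:
\begin{enumerate}
\item[(i)] After leader stability, every ordering state changes only finitely often. A label once held by a stable leader stays held, because sending a leader to $I$ requires a stable leader with the same bra, and the interaction leaves that leader untouched. Hence each leader enters $I$ at most $k-1$ times, and a non-leader re-enters $CLW$ only after its leader relabels.
\item[(ii)] Once ordering states are frozen, only step~\ref{prmo:majority} can fire, since every other step changes some $os$. A step-\ref{prmo:become_consistent} exchange for an agent other than $a$ is immediately followed by step~\ref{prmo:become_stable}. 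So the ordinal argument of Theorem~\ref{theorem:stabilization} freezes all brakets.
\item[(iii)] The set of non-pending carriers of $\ket{i_a}$ is then fixed and nonempty. By Lemma~\ref{lemma:global_braket_invariant} some carrier has a bra different from $i_a$, so it is inconsistent and hence not pending. Weak fairness then forces the meeting.
\end{enumerate}

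Your own potential $R$ could replace (i). When a leader re-initializes, a stable leader still holds its old label, so the move is exactly a \po{} collision increment and strictly decreases $R$. This bounds the number of relabelings without any fairness assumption. But it must be used in that order. As written, Phase~1 is circular: convergence of the ordering layer needs the key fact, and the key fact needs the ordering layer to stop disturbing the kets.
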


The proof of correctness essentially argues that eventually an ordering among the colors is established and that afterwards \prm{} runs on a population correctly representing the initial color distribution with the bra and ket of agents.

\appendixproof{theorem:unordered}{
\thmunordered*
We will first show properties of \prmo{}, before ultimately arguing for its correctness. On a high level the proof works as follows. Every operation the ordering step intends to make on an agent's bra is eventually carried out and corresponds precisely to ordering protocol \po{}. The steps are carried out after a finite number of interactions and in a way that preserves correctness of sub-protocol \prm{}. Once the ordering converges only \prm{} is run and the overall process stabilizes.

We subdivide this argument further into smaller parts. First we show that \prmo{} preserves \prm{}'s invariant. This is crucial to later argue for the overall convergence of \prmo{}

\begin{lemma}
	\label{lemma:prmo_invariant}
	\prmo{} maintains Invariant~\ref{lemma:global_braket_invariant}, meaning in \prmo{} for every $i \in K$, there are as many bras $\bra{i}$ in the population, as there are kets $\ket{i}$.
\end{lemma}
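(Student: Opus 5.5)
The plan is induction over the sequence of interactions. We show that every single interaction of \prmo{} preserves, for every label $i \in K$, the equality between the number of bras $\bra{i}$ and the number of kets $\ket{i}$. The base case is immediate: every agent is initialized with braket $\braket{0}{0}$, so for $i=0$ both counts equal $n$, and for every other label both counts are zero.

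For the inductive step, we go through the five steps of the transition function and classify each by how it touches brakets. Steps~\ref{prmo:ordering} and the ordering-state update in step~\ref{prmo:become_stable} (setting $os_a = CL$) modify only $ld$ and $os$, so they leave all brakets unchanged. Steps~\ref{prmo:become_consistent} and~\ref{prmo:majority} change brakets only by swapping the kets of $a$ and $b$. Such a swap leaves the multiset of bras untouched and permutes the multiset of kets, so it preserves the invariant exactly as in the proof of Lemma~\ref{lemma:global_braket_invariant}.

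The remaining operations are the re-initializations: in step~\ref{prmo:copy_leader} to the value $i_b$, and in step~\ref{prmo:become_stable} to the value $(i_a+1) \bmod k$. These are the only places where a bra is overwritten, and checking them is the main point of the proof. The key observation is that re-initialization is only ever applied to a consistent agent, i.e.\ one whose braket is $\braket{i}{i}$.

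\begin{itemize}
\item In step~\ref{prmo:become_stable}, this holds by the explicit guard that $a$ is ``unstable and consistent''.
\item In step~\ref{prmo:copy_leader}, the agent is pending, i.e.\ $os_a = CL$. We need the auxiliary claim that every pending agent is consistent. The only way to enter $CL$ is via step~\ref{prmo:become_stable}, which requires consistency at that moment. While pending, an agent does not take part in any other step (the explicit clause of step~\ref{prmo:copy_leader}, which we read as also excluding it from the ket exchanges of steps~\ref{prmo:become_consistent} and~\ref{prmo:majority}), so its ket cannot change until it re-initializes. Step~\ref{prmo:ordering} also does not move it out of $CL$, since the $CLW$ assignment is guarded by ``$a$ is not pending''.
\end{itemize}

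Replacing $\braket{i}{i}$ by $\braket{i'}{i'}$ removes one bra $\bra{i}$ and one ket $\ket{i}$, then adds one bra $\bra{i'}$ and one ket $\ket{i'}$. Hence for every label the difference between the bra count and the ket count is unchanged.

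Since each step preserves the equality, so does their sequential composition within one interaction, and the induction closes. The main obstacle is making the pending-implies-consistent invariant rigorous, in particular confirming that no step can alter the ket of an agent in state $CL$. If the reading of ``pending agents do not otherwise interact'' as also freezing them as a responder in ket exchanges turned out to be too weak, we would strengthen the induction hypothesis to the conjunction of the braket invariant and ``$os_a = CL \Rightarrow a$ consistent''. We would then check that conjunction step by step, treating separately the cases where the pending agent is the initiator and where it is the responder.
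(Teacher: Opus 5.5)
Your proposal is correct and takes essentially the same route as the paper. Steps that only touch $ld$ and $os$, or that only swap kets, leave the bra and ket counts unchanged. The two re-initializations each replace one matching braket $\braket{i}{i}$ by another matching braket. The paper justifies this with the same auxiliary fact you use: $CL$ is entered only under consistency, and pending agents take part in no ket exchange, so pending agents stay consistent. Your explicit induction, and the fallback of adding ``$os_a = CL \Rightarrow a$ consistent'' to the hypothesis, only make rigorous what the paper states informally.
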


\begin{proof}
	Step~\ref{prmo:ordering} does not change the braket of any agent and therefore does not influence the invariant. Step~\ref{prmo:majority} updates the braket of agents according to \prm{}, which preserves the invariant. Left to consider are therefore steps~\ref{prmo:copy_leader}, \ref{prmo:become_stable} and~\ref{prmo:become_consistent}.
	
	Step~\ref{prmo:copy_leader} re-initializes pending agent $a$.
	Any pending agent is consistent, because of the following.
	Step~\ref{prmo:become_stable} is the only way for an agent to be put into ordering state $CL$. For that is must be consistent and in ordering state $CLW$. Only non-leaders are put into ordering state $CLW$ in step~\ref{prmo:ordering}. An agent in ordering state $CL$, being a non-leader, only interacts using step~\ref{prmo:copy_leader}, which re-initializes it. Therefore as long it is pending, it remains consistent. 
	The braket of $a$ therefore has the form $\braket{i_a}{i_a}$ for some $i_a$. After re-initializing $a$ stores $\braket{i_b}{i_b}$. Step~\ref{prmo:copy_leader} therefore removes a matching braket pair and adds a matching braket pair to the population. This clearly preserves Invariant~\ref{lemma:global_braket_invariant}.
	
	Step~\ref{prmo:become_stable} re-initializes consistent agent $a$ with braket $\braket{i_a}{i_a}$ to $\braket{(i_a + 1) \mod k}{(i_a + 1) \mod k}$. As before one matching braket pair is removed and one added, preserving the invariant.
	
	Step~\ref{prmo:become_consistent} exchanges the kets of two agents, obviously preserving the Invariant. The overall claim therefore holds.
\end{proof}

Next we consider the leader election process within the ordering routine to simplify subsequent proofs.

\begin{lemma}
	\label{lemma:leader_election}
	After some number of interactions there is a unique leader in every set of agents initially supporting the same color. Afterwards the leader bit of no agent changes.
\end{lemma}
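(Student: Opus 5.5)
The plan is to reduce this to Lemma~\ref{lem:pou_unique_leader}, since the leader-bit dynamics of \prmo{} are exactly those of \po{}. First I will inspect every step of the \prmo{} transition function and check which ones touch the leader bit $ld$. Only step~\ref{prmo:ordering} does, and it does so exactly as lines 1--3 of the \po{} pseudocode. When $a$ and $b$ are leaders with the same initial color $ic$, the initiator $a$ sets $ld_a = 0$. No step ever sets a leader bit from $0$ to $1$. Steps~\ref{prmo:copy_leader} through~\ref{prmo:majority} modify only brakets, ordering states, $out$ and the tie variables. The condition that pending agents are excluded from the other steps does not matter here, because it concerns only non-leaders.

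Next I will fix a color $c$ and let $\#_c$ be the number of agents in $supp(c)$ with $ld = 1$. Initially $\#_c = |supp(c)| \ge 1$ for each non-empty support. By the previous step, $\#_c$ never increases. It decreases by exactly one precisely when two leaders of color $c$ interact. Since that requires two leaders, $\#_c$ can never go from $1$ to $0$, so $\#_c \ge 1$ always holds.

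To show $\#_c$ actually reaches $1$, I will use weak fairness. As long as $\#_c \ge 2$, fix two leaders $a, b \in supp(c)$. Leaders can only disappear, so $\#_c$ cannot increase and a new pair of leaders is never created. The pair $(a,b)$ interacts infinitely often, so some interaction between them eventually happens. If both are still leaders at that moment, $\#_c$ drops. If one of them already stopped being a leader, $\#_c$ already dropped. Either way, $\#_c$ strictly decreases in finite time. Since $\#_c$ is a positive integer, after finitely many interactions $\#_c = 1$ for every non-empty $supp(c)$. Taking the maximum over the at most $k$ colors gives the time $t_1$.

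After $t_1$, a leader bit can only be cleared by the interaction of two leaders of the same color, and none exist any more. Bits are never set to $1$. So no leader bit changes after $t_1$.

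There is no real obstacle. The only care needed is in verifying that the interleaving with \prm{} and the re-initialization steps never touches $ld$: in particular, that ``re-initializes'' resets the braket, $os$, $out$ and the tie variables but not $ld$, and that none of steps~\ref{prmo:copy_leader}--\ref{prmo:majority} can resurrect a leader. Once that is checked, the argument of Lemma~\ref{lem:pou_unique_leader} carries over verbatim.
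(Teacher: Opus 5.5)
Your proposal is correct and follows the paper's route. The paper's proof simply says ``see the proof of Lemma~\ref{lem:pou_unique_leader}'': leader bits never go from $0$ to $1$, two same-color leaders lower $\#_c$ by one, and $\#_c$ cannot drop from $1$ to $0$. Your check that steps (2)--(5) of \prmo{} never touch $ld$, together with your explicit weak-fairness argument that $\#_c$ actually decreases, spells out details the paper leaves implicit.
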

\begin{proof}
	See proof of Lemma~\ref{lem:pou_unique_leader}
\end{proof}

We again say a configuration is \textit{leader stable} if it contains a unique leader per color in the population and the leader bit of no agent can change anymore. Once such a configuration is reached, further important observations can be made about the population.

\begin{lemma}
	\label{lemma:prmo_os}
	After a leader stable configuration is reached, the ordering state of any agent changes only finitely often.
\end{lemma}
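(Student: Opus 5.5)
After a leader stable configuration is reached, I would show separately that the leaders and the non-leaders change their ordering state only finitely often. In both cases the argument is a potential-style one, reusing the function $R$ from the analysis of \po{}.

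\textbf{Leaders.} After leader stability, a leader $a$ changes $os_a$ in only two places:
\begin{itemize}
\item it is set to $I$ in step~\ref{prmo:ordering}, when it meets a stable leader of a different color with the same bra;
\item it then returns to $U$ when it re-initializes in step~\ref{prmo:become_stable}, incrementing its bra.
\end{itemize}
A leader is never set to $CLW$, since it has no leader other than itself. Every leader cycle $U\to I\to U$ therefore corresponds to exactly one increment of the leader's label.

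So it suffices to show that leaders increment only finitely often. I would tie this to the \po{} dynamics via Lemma~\ref{lem:pou_nonincreasing_R} and Lemma~\ref{lem:pou_reducing_R}, applied to the labels (bras) of the leaders. Each increment is triggered by a collision of two leader labels, and the incremented leader keeps its old bra until it actually re-initializes. I would therefore define the potential on the multiset of leader labels with a pending increment already applied, i.e., count a leader in state $I$ as holding label $i_a+1$.

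The delicate point is that several leaders sharing a bra could each be set to $I$. The protocol requires the triggering partner $b$ to be stable, which is exactly what allows one to argue that this deferred-increment view matches a valid \po{} execution step. Under that view, each transition of some leader into $I$ strictly decreases $R$, because it is precisely the \po{} move "increment on collision". Since $R$ takes values in $\mathbb N$, only finitely many such transitions occur.

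\textbf{Non-leaders.} Once leader labels stop changing, I would handle non-leaders. A non-leader $a$ cycles $U\to CLW\to CL\to U$, and each cycle ends with $a$ copying its leader's current bra in step~\ref{prmo:copy_leader}. A new cycle can start only if $i_a\neq i_b$ for its leader $b$. After $a$ copies the now-fixed leader label, equality holds forever, so $a$ makes no further cycles.

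Within one cycle, each state change happens at most once. Transitions $CLW\to CL$ and $CL\to U$ are one-way, and while $a$ is pending step~\ref{prmo:ordering} explicitly does not reset it. Hence each non-leader has only finitely many ordering-state changes.

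\textbf{Main obstacle.} The hardest step is justifying that the deferred increments simulate \po{} faithfully, so that $R$ really decreases. I would first try to prove the invariant that whenever a leader is in state $I$, no other leader of a different color with the same effective label is also in $I$. This should follow from the requirement that $b$ be stable when $a$ is set to $I$. If that invariant fails, the fallback is a cruder lexicographic argument: order leaders by color and show that the minimal colliding label strictly moves toward a free label.
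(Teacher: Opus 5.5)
Your argument is correct, but for the leaders it takes a different route from the paper.

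\textbf{The paper's route.} The paper argues by pigeonhole. After leader stability, a label held by a stable leader stays held by one forever, since a stable leader is sent to $I$ only by a second stable leader with the same bra, which is left untouched. So a leader entering $I$ for the $k$-th time would have left all $k$ labels occupied and would still collide with another leader, giving more than $k$ leaders. This yields an explicit bound of $k-1$ increments per leader.

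\textbf{Your route.} You track effective labels ($i_a$ if $os_a=U$, $i_a+1$ if $os_a=I$) and reuse the potential $R$ of \po{}. The step you call the main obstacle is already settled by your own remark about the stable partner:
\begin{itemize}
\item $os_a$ changes to $I$ in step~\ref{prmo:ordering} only if $os_a=U$ and $os_b=U$. So $a$ and $b$ have equal effective labels, and the transition is literally a \po{} increment.
\item The computation in the proof of Lemma~\ref{lem:pou_reducing_R} only uses that there are at most $k$ leaders, two of which share a label. So $R$ strictly drops.
\item The re-initialization $I\to U$ in step~\ref{prmo:become_stable} keeps the effective label, hence $R$, unchanged. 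No other step touches a leader's bra or ordering state.
\end{itemize}

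\textbf{Comparison.} Your route treats uniformly the leaders that are already in $I$ when leader stability is reached. The paper's statement that a used label never becomes free is literally true only for labels held by a stable leader. Your route also shows directly that the leaders of \prmo{} run a delayed \po{} execution, which is what Lemma~\ref{lemma:prmo_ordering} relies on afterwards. The paper's route is self-contained and gives an explicit bound. Your non-leader argument is essentially the paper's, phrased through a time $T$ after which all leader bras are fixed; just add that only finitely many interactions precede $T$.

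\textbf{Drop the invariant.} The invariant you planned to prove first is false. Take three stable leaders $a,b,c$ of distinct colors, all with bra $d$ and kets different from $d$. This is reachable, e.g.\ once the leader of a fourth color has moved to label $d+1$ and \prm{} has swapped kets with agents of that color. The interactions $(a,c)$ and then $(b,c)$ put $a$ and $b$ into $I$. Since step~\ref{prmo:become_consistent} does not fire, they stay there, so two leaders of different colors in $I$ share the effective label $d+1$. This is harmless: it is just two successive \po{} increments off $c$, after which $a$ and $b$ collide at $d+1$, exactly as can happen in \po{} itself. Neither the invariant nor the lexicographic fallback is needed.
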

\begin{proof}
	Again we call some label $i \in K$ \textit{used} in case there exists a leader that stores the label as its bra and \textit{free} otherwise.
	A leader can never be in ordering state $CL$ or $CLW$.
	To put a leader $a$ into ordering state $os_a = I$, it must interact with some leader $b$ with the same bra and $os_b = U$. In this interaction ordering state and bra of $b$ remain unchanged. Only leaders with ordering state $I$ ever change their bra. Therefore once a label is used by some leader, it will never be free, because two leaders using the label are required to put one of them into ordering state $I$, while ordering state and bra of the other one remain unaffected by the interaction.
	
	Now assume there was some leader $a$ that was changed from ordering state $U$ to ordering state $I$ $k$ many times. Then $a$ must have increased its label $k-1$ times, as a leader can go from ordering state~$I$ to ordering state~$U$ only using step~\ref{prmo:become_stable}. This means $a$ used every label once at some point. As described above no label that was once used by some leader will ever become free again. Therefore all the labels are used. However, $a$ goes into ordering state $I$ a $k^{\text{th}}$ time, meaning there are at least two leaders in the last label $a$ chose. Overall this means for every label there is at least one leader using it and there is a label that at least two leaders use. This means there are more than $k$ leaders, which is a contradiction, because, as shown before, in a leader stable configuration there are as many leaders as there are colors and there are at most $k$ colors. This contradiction implies that leaders only go into ordering state $I$ at most $k-1$ times and therefore only change their bra at most $k-1$ times. Since leaders can only alternate between ordering states $U$ and $I$, they therefore change their ordering state only finitely often.
	
	Some non-leader $a$ goes from ordering state $U$ into $CLW$ only in case the leader $b$ of its color changed its label since $a$ last interacted with it. As just shown leaders change their label at most $k-1$ many times and therefore non-leaders go into state $CLW$ at most $k-1$ times and therefore also only update their label at most $k-1$ times. Non-leaders can go from ordering state $U$ to $CLW$ to $CL$ to $U$ again, but not backwards. They therefore only update their states finitely often.
\end{proof}

The argument above is crucial to now argue that agents do not wait indefinitely to become consistent. This is important because only consistent agents can be re-initialized with a new bra.

\begin{lemma}
	\label{lemma:prmo_consistent}
	Any inconsistent and unstable agent $a$ eventually becomes consistent. 
\end{lemma}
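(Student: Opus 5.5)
Let $a$ be inconsistent and unstable, with braket $\braket{i_a}{j_a}$ where $i_a \neq j_a$. I would first show that the population always contains an agent $b$ with $j_b = i_a$, and then that, by weak fairness, $a$ eventually meets such an agent before its situation can change.

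\textbf{Existence of a matching ket.} By Lemma~\ref{lemma:prmo_invariant}, the number of kets $\ket{i_a}$ equals the number of bras $\bra{i_a}$. That count is at least one, because $a$ itself holds $\bra{i_a}$. Since $a$ holds the ket $\ket{j_a} \neq \ket{i_a}$, some other agent $b \neq a$ holds $\ket{i_a}$.

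\textbf{While $a$ remains inconsistent, its bra and ordering state are frozen.} Only two kinds of step modify a bra: the re-initializations in steps~\ref{prmo:copy_leader} and~\ref{prmo:become_stable}, and both require $a$ to be consistent. Consequently $i_a$ is fixed as long as $a$ is inconsistent. The ordering state of $a$ cannot leave $\{I, CLW\}$ without $a$ first becoming consistent:
\begin{itemize}
\item step~\ref{prmo:become_stable} requires consistency;
\item step~\ref{prmo:ordering} can only reset $CLW$ to $CLW$, or, if $a$ is a leader, set it to $I$.
\end{itemize}
So $a$ stays unstable until it becomes consistent. Its ket, and the identity of the agents holding $\ket{i_a}$, may still change through \prm{} exchanges in step~\ref{prmo:majority}. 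If such an exchange hands $\ket{i_a}$ to $a$, then $a$ is consistent and we are done.

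\textbf{Invoking fairness.} Assume for contradiction that $a$ is never consistent. Then after every interaction some agent other than $a$ holds $\ket{i_a}$, but that agent may change over time, so weak fairness cannot be applied to a single fixed pair directly. I would use the fact that there are only finitely many agents. Some agent $b$ holds $\ket{i_a}$ at infinitely many times. The difficulty is that $a$ and $b$ may interact only at times when $b$ has lost that ket.

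To close this gap I would pass to a stabilized suffix. By Lemma~\ref{lemma:leader_election} and Lemma~\ref{lemma:prmo_os}, after some time the leader bits and ordering states are fixed. From then on:
\begin{itemize}
\item no more re-initializations occur, so all bras are fixed;
\item the only ket movements are exchanges from \prm{} and from step~\ref{prmo:become_consistent};
\item the step~\ref{prmo:become_consistent} exchanges stop, since no agent other than $a$ remains unstable and inconsistent forever;
\item the \prm{} exchanges are finitely many, by the ordinal potential argument of Theorem~\ref{theorem:stabilization}, which still applies because bras are fixed.
\end{itemize}
Hence the brakets eventually freeze, with some fixed agent $b$ holding $\ket{i_a}$ forever. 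Weak fairness forces the pair $(a,b)$ to interact, with $a$ as initiator. Step~\ref{prmo:become_consistent} then makes $a$ consistent, which is the required contradiction.

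The main obstacle is the stabilized-suffix step. Lemma~\ref{lemma:prmo_os} is stated for ordering states, but $a$ can change its state only after becoming consistent, so I would argue by contradiction using finiteness of the changes of all other agents' states. I would also double-check that, once bras are fixed, step~\ref{prmo:become_consistent} exchanges cannot recur infinitely, since each of them makes some unstable agent consistent and only finitely many agents ever become unstable again.
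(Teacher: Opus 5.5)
Your route is the paper's: assume $a$ stays inconsistent, use Lemmas~\ref{lemma:leader_election} and~\ref{lemma:prmo_os} to reach a suffix in which leader bits and ordering states no longer change, conclude that only \prm{} moves kets so the brakets freeze, and let weak fairness force a step~\ref{prmo:become_consistent} interaction. Two steps need repair.

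\textbf{The end of step~\ref{prmo:become_consistent} exchanges.} Your third bullet says these exchanges stop because no agent other than $a$ stays unstable and inconsistent forever. That is the lemma itself applied to other agents, so it is circular; the paper explicitly allows other agents to be stuck as well. The correct reason is simpler. Whenever step~\ref{prmo:become_consistent} fires, its initiator is unstable and consistent when step~\ref{prmo:become_stable} runs in the same interaction. So its ordering state changes, to $U$ via re-initialization or to $CL$. Once ordering states are frozen, step~\ref{prmo:become_consistent} therefore cannot fire at all. This is what the paper means by ``all other steps update the ordering state of at least one agent.'' Your closing remark essentially contains this argument, but it should replace the bullet's justification.

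\textbf{Pending agents.} This is the genuine omission. Pending agents do not take part in step~\ref{prmo:become_consistent}: ``pending agents do otherwise not interact in any of the other steps.'' So your final step fails if the frozen holder $b$ of $\ket{i_a}$ is pending, for instance an agent in $\braket{i_a}{i_a}$ with $os=CL$. The paper avoids this by working with $B$, the set of non-pending agents holding $\ket{i_a}$. You can fix it in either of two ways.
\begin{itemize}
\item Pending agents are consistent, as shown in the proof of Lemma~\ref{lemma:prmo_invariant}. So the bra/ket balance already holds among non-pending agents. Since $a$ is non-pending and holds $\bra{i_a}$ but not $\ket{i_a}$, some non-pending $b \neq a$ holds $\ket{i_a}$ in every configuration, and this set is fixed once brakets and ordering states are frozen.
\item Alternatively, no agent can remain pending in the frozen suffix. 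A pending agent that meets its leader triggers step~\ref{prmo:copy_leader}, which resets its ordering state to $U$.
\end{itemize}
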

\begin{proof}
	Let $a$ be an inconsistent and unstable agent. This means the braket of $a$ contains $\braket{i_a}{j_a}$ with $i_a \neq j_a$.
	For contradiction assume~$a$ remains inconsistent throughout the computation.
	By Invariant~\ref{lemma:global_braket_invariant}, there exists some agent $b$ with braket $\braket{i_b}{j_b}$, such that $i_b\neq j_b = i_a$.
	For every configuration let $B$ be the set of non-pending agents with $i_a$ as ket. We will now argue that under a weakly fair scheduler $a$ eventually interacts with an agent from $B$.
	
	By weak fairness $a$ interacts with all other agents infinitely often. Also, as long as $a$ is inconsistent, $B$ always contains some other agent.
	According to Lemma~\ref{lemma:prmo_os} no agent changes its ordering state infinitely often.
	Clearly this also holds in case we assume~$a$ (and potentially other agents) to remain inconsistent throughout the computation, because in case~$a$ is a non-leader it cannot make other agents unstable and in case~$a$ is a leader, it cannot make leaders of different color unstable (since for that it would need to be stable) and can make non-leaders of its color unstable only once (since it does not change its~$bra$).
	This means after some number of interactions either $a$ interacted with some agent in $B$ or $a$ remains inconsistent, but none of the other agents change their ordering state anymore.
	All agents then only interact using step~\ref{prmo:majority} of \prmo{}, as all other steps update the ordering state of at least one agent.
	By correctness of \prm{} therefore eventually the braket of all agents stabilizes, meaning no agents exchange their kets anymore (even in case pending vertices exist as they are consistent and can safely be ignored).
	At this point weak fairness guarantees that $a$ interacts with some agent in $B$. Step~\ref{prmo:become_consistent} will be executed, making $a$ consistent, which produces a contradiction. No other step of \prmo{} can change $os_a$ before $a$ is consistent.
\end{proof}

With all the above arguments in place we can now argue that the sub-protocol \po{} actually converges within \prmo{}.

\begin{lemma}
	\label{lemma:prmo_ordering}
	After some number of interactions for all pairs of agents $a, b \in A$ it holds that $\bra{i_a} = \bra{i_b} \iff ic_a = ic_b$. All agents are in ordering state $os = U$. Both bra and $os$ of no agent subsequently changes.
\end{lemma}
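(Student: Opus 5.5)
By \Cref{lemma:leader_election} and \Cref{lemma:prmo_os} there is an interaction index $t_0$ after which the configuration is leader stable and no agent changes its ordering state any more. The plan is to analyse the computation after $t_0$ and show that the leaders' bras run exactly the \po{} dynamics and settle on pairwise distinct labels. After that, every non-leader copies its leader's label and ends in ordering state $U$.

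First I would show that after $t_0$ every agent is in $os = U$. Suppose an agent sits forever in $I$ or $CLW$.
\begin{itemize}
\item If it is inconsistent, \Cref{lemma:prmo_consistent} makes it consistent eventually.
\item Once consistent, its next interaction triggers step~\ref{prmo:become_stable}, which changes its ordering state (to $U$ if it was in $I$, to $CL$ if it was in $CLW$).
\item An agent in $CL$ eventually meets its leader by weak fairness, and step~\ref{prmo:copy_leader} re-initializes it with $os = U$.
\end{itemize}
Each of these is a change of ordering state after $t_0$, a contradiction. So from $t_0$ on all agents are in $U$, and by step~\ref{prmo:copy_leader} and step~\ref{prmo:become_stable} no bra is ever rewritten again, since these are the only steps that touch bras (step~\ref{prmo:majority} and step~\ref{prmo:become_consistent} only move kets).

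Next I would establish distinctness of the leaders' labels. Suppose two leaders $a,b$ of different colors share a bra after $t_0$. Both are stable, so by weak fairness they eventually interact, and step~\ref{prmo:ordering} sets $os_a = I$. This contradicts the stability of ordering states after $t_0$. Hence the leaders' bras are pairwise distinct.

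Then I would show every non-leader carries its leader's label. Suppose a non-leader $a$ has $i_a \neq i_b$, where $b$ is its leader. Since $a$ is not pending, meeting $b$ sets $os_a = CLW$, again a change of ordering state after $t_0$. So $i_a = i_b$. Combining the three facts gives $\bra{i_a} = \bra{i_b} \iff ic_a = ic_b$, with all agents in $U$ and no bra or $os$ changing afterwards.

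The delicate point is the very first step: using \Cref{lemma:prmo_os} legitimately requires that its finiteness argument (leaders increment at most $k-1$ times, hence non-leaders enter $CLW$ at most $k-1$ times) applies in this combined protocol. I would check that the only way a leader leaves $I$ is by actually incrementing its bra via step~\ref{prmo:become_stable}, which in turn relies on \Cref{lemma:prmo_consistent} to guarantee the agent becomes consistent. I also need that labels once used by some leader are never freed, which holds because the partner leader in step~\ref{prmo:ordering} is stable and untouched. With these observations the rest is a direct fairness argument.
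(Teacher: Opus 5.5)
Your proof is correct, but it follows a different route from the paper's.

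\textbf{The two routes.} The paper argues by simulation. Step~\ref{prmo:ordering} puts a leader into $I$ (a non-leader into $CLW$) exactly when \po{} would increment (copy) its label. \Cref{lemma:prmo_consistent}, together with steps~\ref{prmo:become_stable} and~\ref{prmo:copy_leader}, ensures each such operation is eventually carried out. The correctness of \po{} (\Cref{thrm:ordering}, with its potential function $R$) then yields a valid ordering, after which no agent can enter $I$ or $CLW$ again.

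You never use \Cref{thrm:ordering}. You take termination from \Cref{lemma:prmo_os} and freeze all ordering states after $t_0$. Weak fairness then shows that each possible violation would force a further change of some $os$: a non-$U$ agent, two leaders sharing a bra, or a non-leader disagreeing with its leader. Your route uses only facts proved about \prmo{} itself. It therefore sidesteps the paper's informal identification of \prmo{}'s delayed, stability-guarded increments with a genuine run of \po{}. The paper's route is more modular and presents the final labels as the output of \po{}.

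\textbf{First correction.} The claim ``once consistent, its next interaction triggers step~\ref{prmo:become_stable}'' is too quick. An agent made consistent by the \prm{} step can be made inconsistent again as a responder before it next acts in the role of $a$. The fix is easy. After $t_0$, any firing of steps~\ref{prmo:copy_leader}--\ref{prmo:become_stable} changes some $os$ (step~\ref{prmo:become_consistent} is immediately followed by step~\ref{prmo:become_stable}), and so does any effective update in step~\ref{prmo:ordering}. Hence only step~\ref{prmo:majority} acts, and by \Cref{theorem:stabilization} it exchanges kets only finitely often. Once brakets are frozen, a consistent agent in $I$ or $CLW$ fires step~\ref{prmo:become_stable} in its next interaction as $a$. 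A permanently inconsistent one contradicts \Cref{lemma:prmo_consistent}.

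\textbf{Second correction.} In your ``delicate point'', do not justify \Cref{lemma:prmo_os} through \Cref{lemma:prmo_consistent}. The latter's proof uses the former, so this would be circular. It is also unnecessary: a leader stuck in $I$ forever contributes only finitely many changes, and the counting only needs that leaving $I$ entails an increment. Since \Cref{lemma:prmo_os} is already proved for \prmo{}, simply cite it.
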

\begin{proof}
	Step~\ref{prmo:ordering} puts a leader into ordering state~$I$ exactly in case \po{} would increment its label, namely in case there is another (stable) leader with of different color using the same bra. Whenever an agent is put into ordering state $I$, then it either is consistent or by Lemma~\ref{lemma:prmo_consistent} eventually will be. At that point step~\ref{prmo:become_stable} carries out the increment modulo $k$ and puts the agent back into ordering state $U$.
	Step~\ref{prmo:ordering} puts a non-leader into ordering state $CLW$ exactly in case \po{} would copy the label of its leader, namely in case its leader changed its bra since they last interacted (and the agent is not already trying to carry out a copy operation). Again either the agent already is consistent or it eventually will be, at which point step~\ref{prmo:become_stable} and then step~\ref{prmo:copy_leader} ensure it copies its leaders label and sets its ordering state to $U$.
	
	By correctness of the ordering protocol \po{}, therefore eventually for two agents $a, b \in A$ it holds that $\bra{i_a} = \bra{i_b}$ if and only if they store the same initial color. On the last increment or copy operation all agents set $os = U$. With all agents of the same initial color sharing the same bra, no agent can subsequently be put into ordering state $CLW$ and therefore also not into ordering state $CL$. With all agents of different color having different bras, no agent can be put into ordering state $I$. Therefore all agents store ordering state $U$.
\end{proof}

Having shown that an ordering on the colors is eventually established, we now arrive at the final theorem.

\thmunordered*
\begin{proof}
	According to Lemma~\ref{lemma:prmo_ordering} after some number $t_1$ of interactions, the ordering process converges, $os = U$ for all agents and $\bra{i_a} = \bra{i_b} \iff ic_a = ic_b$ . At that point only step~\ref{prmo:majority} of \prmo{} is executed, meaning only \prm{} is used.
	Because Invariant~\ref{lemma:global_braket_invariant} of \prm{} is maintained, according to Lemma~\ref{lemma:prmo_invariant}, \prm{} converges for the final distribution of bras.
	This means after some number $t_2 \geq t_1$ of interactions all agents agree on the bra in relative majority and therefore on the initial color in relative majority. As described before the output can be adapted to contain a color instead of a label.
	
	For some agent $a$, \prmo{} stores $ic_a$, $\braket{i_a}{j_a}$, $out_a$, $ld_a$, $os_a$ and some number of bits to handle reporting, sharing or breaking of ties. The values $ic_a$, $i_a$, $j_a$ and $out_a$ can take $k$ values each. All values $ld_a$, $os_a$ and the bits for breaking, sharing and reporting ties can take a constant number of different values. This overall gives $O(k^4)$ possible states. Removing $out$ and adding a single bit storing whether an agent is in relative majority or not, this can be reduced to $O(k^3)$ for versions with binary win-lose output.
\end{proof}
}

\section{The Ranking protocol}
\label{sec:ranking}
\sloppy
In this section we discuss how to adapt the \textsc{Circles} protocol to solve the ranking problem under a globally faire scheduler.

For the rest of this section, we denote by $x_i$, for each $i\in [0,k-1]$, the number of agents with color $i$ in the input.

\begin{definition}
    In the \emph{ranking problem}, each agent is given as an input a color in $[0,k-1]$. We assume that no two colors have the same number of agents initially, and that all colors are represented at least once. 
    At the end of the protocol, each agent with input $i$ should output a value $j\in [0,k-1]$, such that $\card{\{\ell \in [0, k-1]: x_\ell > x_i\}}=j$, which corresponds to the ranking of the color in the input, that is, the majority color outputs $0$, the second most popular color outputs $1$, and so on.
\end{definition}

\thmranking*

\begin{remark}
    This algorithm can be easily modified to work for the case where the colors are unordered, with the same number of states.
\end{remark}

Let us first discuss the intuition and observations that lead us to such a protocol. 
First, in the \textsc{Circles} protocol, we notice that, in the very specific case where color $0$ is the most popular one, followed by $1$, then $2$, and so on until $k-1$, then the structure of the configuration at stabilization is very straightforward: each set of the Greedy Independent Sets is of the form $[0, i]$ for $i \in [0,k-1]$. 
In turn, this means that all of the brakets are of the form $\braket{i}{i+1 \bmod k}$ or $\braket{i}{0}$ for $i \in [0,k-1]$.
In fact, we can show that this is a if and only if situation:

\begin{lemma}\label{lem:orderisrecognizable}
     In the \textsc{Circles} protocol, the input is such that $x_0>x_1>\dots>x_{k-1}$ if and only if, after stabilization, all of the brakets are of the form $\braket{i}{i+1 \bmod k}$ or $\braket{i}{0}$ for $i \in [0,k-1]$.
\end{lemma}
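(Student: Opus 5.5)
Both directions rest on Lemma~\ref{lem:majowins}: after stabilization the multiset of brakets equals $\bigcup_{p=1}^q f(G_p)$. The whole question therefore reduces to a combinatorial statement about the greedy independent sets $G_1 \supseteq G_2 \supseteq \dots \supseteq G_q$ and the circle braket sets $f(G_p)$ of Definition~\ref{def:braketsets}. Throughout I use the assumption of the ranking setting that every color appears at least once. Hence $G_1 = [0,k-1]$, and $G_p = \{i : x_i \ge p\}$. This follows from the construction in Definition~\ref{def:greedy_independent_sets}: at step $p$ every color with remaining count is taken once. It is also the observation used in the proof of Lemma~\ref{lemma:majority_color}.

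\emph{($\Rightarrow$)} Assume $x_0 > x_1 > \dots > x_{k-1}$. Then $G_p = \{i : x_i \ge p\}$ is a prefix $[0, m_p]$ of the colors, where $m_p$ is the largest $i$ with $x_i \ge p$. Sorting $G_p$ in increasing order gives $0,1,\dots,m_p$. By Definition~\ref{def:braketsets}, $f(G_p) = \{\braket{0}{1}, \dots, \braket{m_p-1}{m_p}, \braket{m_p}{0}\}$; when $m_p = 0$ this is the single braket $\braket{0}{0}$. Every such braket is of the form $\braket{i}{i+1 \bmod k}$ or $\braket{i}{0}$, and Lemma~\ref{lem:majowins} finishes this direction.

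\emph{($\Leftarrow$)} This is the direction with real content. Assume that after stabilization every braket has the form $\braket{i}{i+1 \bmod k}$ or $\braket{i}{0}$. By Lemma~\ref{lem:majowins} this holds for every braket in every $f(G_p)$. Fix $p$ and write $G_p = \{g_0 < g_1 < \dots < g_m\}$. The braket $\braket{g_m}{g_0}$ has ket $g_0$. Since $g_m \ne g_0$ when $m \ge 1$, we need either $g_0 = 0$ or $g_0 = g_m + 1 \bmod k$; the latter forces $g_m = k-1$ and $g_0 = 0$ anyway. When $m = 0$ the braket is $\braket{g_0}{g_0}$, which forces $g_0 = 0$. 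So $0 \in G_p$ for every $p$. Each consecutive braket $\braket{g_l}{g_{l+1}}$ with $l < m$ has ket $g_{l+1} \neq 0$, so $g_{l+1} = g_l + 1$. Hence every $G_p$ is a prefix $[0, m_p]$. From $G_p = \{i : x_i \ge p\}$ it follows that $x_i \ge p$ implies $x_{i-1} \ge p$ for all $p$, that is, $x_0 \ge x_1 \ge \dots \ge x_{k-1}$.

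The main obstacle is upgrading this to strict inequalities. Under the ranking problem's standing assumption that no two colors have the same count, non-strict monotonicity immediately gives strictness, and I expect that is what makes the lemma true as stated. Without that assumption the converse fails: with two colors of equal count, the brakets $\braket{0}{1}$ and $\braket{1}{0}$ satisfy the condition. I would therefore state explicitly that the distinct-counts assumption of the ranking problem is used at exactly this point. I would also check the small corner cases $k = 1$ and $m_p = 0$ separately, to make sure the $\braket{i}{0}$ form covers $\braket{0}{0}$.
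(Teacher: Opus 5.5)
Your proof is correct. The forward direction matches the paper's, but your converse takes a different route.

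The paper never uses the greedy independent sets for the converse; it simply counts kets. For $j\in[1,k-1]$ the only admissible braket with ket $j$ is $\braket{j-1}{j}$. By Lemma~\ref{lemma:global_braket_invariant} there are exactly $x_j$ kets $\ket{j}$, so all of them sit on agents with bra $j-1$, and $x_{j-1}\ge x_j$. This counting argument is shorter and needs neither stabilization nor Lemma~\ref{lem:majowins}: it works in any configuration whose brakets all have the stated form.

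Your argument instead uses Lemma~\ref{lem:majowins} to show that every $G_p$ is a prefix $[0,m_p]$, and reads monotonicity off $G_p=\{i : x_i\ge p\}$. That costs the stabilization hypothesis, but it gives the more structural conclusion that every circle is a prefix of the colors.

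Both routes only yield $x_{j-1}\ge x_j$. The paper then writes $x_{j-1}>x_j$ without comment, which is justified only by the ranking problem's standing assumption that all counts are distinct. You are right to invoke that assumption explicitly. Your example ($x_0=x_1$ with brakets $\braket{0}{1}$ and $\braket{1}{0}$) shows the converse genuinely fails without it, so your write-up is more precise than the paper's on this point.
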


\begin{proof}[Proof of \Cref{lem:orderisrecognizable}]
    The forwards implication is straightforward, as discussed above. For the backwards implication, note that if the ket $\ket{j}$ is only present in brakets $\braket{j-1 \bmod k}{j}$, then there are at least as many agents with bra $\bra{j-1 \bmod k}$ as agents with bra $\bra j$. 
    Therefore $x_{j-1}>x_j $ for all $j \in [1,k-1]$, and the result holds.
\end{proof}

Thus the idea is to recycle the \textsc{Circles} protocol to be in the situation where the input is the initial order of the colors. 
For that, we now introduce a $O(k^4)$ protocol where the bra and ket consist now of 2 different values: $\bra{i, \rk{i}}$ and $\ket{\rk{i}, i}$ for every $i\in [0,k-1]$. 
$\rk{i}\in [0,k-1]$ represents the (belief of) ranking of the color $i$. 
Now, when applying the \textsc{Circles} protocol, the idea is that instead of computing the weights according to the colors value, we compute them according to the rankings' value. Formally, for every $i,j \in [0,k-1]$

\begin{equation}\label{eq:newweight}
  	w(\braket{i, \rk{i}}{\rk{j},j}) =
  	\begin{cases}
  		k & \textit{if } \rk{i} = \rk{j}\\
  		(\rk{j} - \rk{i}) \bmod k& \text{otherwise}
  	\end{cases}
  \end{equation}

It remains to discuss how the value $\rk{i}$ is assigned. 
We start by assigning $\rk{i} \leftarrow i$. 
Moreover, every agent is assigned a token which makes it a leader for its color, that is, this agent is in charge of assigning the ranking of the color to the corresponding color, and broadcast any changes that it decides. 
When two tokens for the same color meet, one of them has to be deleted. When two tokens for different colors $i< j$ but same ranking $\rk{i}=\rk{j}$ meet, one of them increases its ranking by one $\rk{j}\leftarrow \rk{j}+1$. 
Eventually this ensures that there is a unique leader for each color, and all the rankings are different.

Finally, rules should be implemented to ensure that whenever the ranking assigned is not correct, agents are able to change their rankings. 
We note that if the rankings are not correct, but are unique, and the \textsc{Circles} protocol has stabilized, then there exists two agents of the form $\braket{i, \rk{i}}{\rk{j}, j}$ and $\braket{i, \rk{i}}{\rk{l}, \ell}$ with $\rk{j}\neq \rk{l}$ and $\rk{j}, \rk{l} \neq 0$, or there must exist an agent $\braket{i, \rk{i}}{\rk{i}, i}$ with $\rk{i} \neq 0$. 
We now give the lemma that hints at how agents can determine that the current ranking is wrong.

\begin{lemma}
    In the \textsc{Circles} protocol, after stabilization, if the initial colors do not satisfy $x_0>x_1>\dots>x_{k-1}$, and $x_i \neq 0 \quad \forall i$, then either:
    \begin{enumerate}
        \item There exist an agent $\braket{i}$ with $i \neq 0$, or
        \item there exist two agents of the form $\braket{i}{j}$ and $\braket{i}{\ell}$ with $j \neq \ell$ and $j, \ell \neq 0$.
    \end{enumerate}
\end{lemma}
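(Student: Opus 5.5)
We argue by contraposition, working with the stable configuration described by Lemma~\ref{lem:majowins}: after stabilization the multiset of brakets equals $\bigcup_{p=1}^q f(G_p)$. Since every $x_i\neq 0$, the first greedy independent set is $G_1=[0,k-1]$, so the configuration contains $\braket{i}{i+1}$ for every $i\in[0,k-2]$ and $\braket{k-1}{0}$. We will also use the nesting $G_1\supseteq G_2\supseteq\dots\supseteq G_q$, established in the proof of Lemma~\ref{lemma:majority_color}. Assume that neither (1) nor (2) holds. The goal is to show that every braket is of the form $\braket{i}{i+1 \bmod k}$ or $\braket{i}{0}$. Lemma~\ref{lem:orderisrecognizable} then gives $x_0>x_1>\dots>x_{k-1}$, which is the desired contradiction.

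\textbf{Step 1 (singleton sets).} If some $G_p$ is a singleton $\{g\}$, then $f(G_p)=\{\braket{g}{g}\}$. Since (1) fails, $g=0$. The braket $\braket{0}{0}$ has ket $0$, so it is allowed.

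\textbf{Step 2 (internal links of a larger set).} Now take $G_p=\{g_0<\dots<g_m\}$ with $m\ge 1$, and an index $l<m$. The configuration contains $\braket{g_l}{g_{l+1}}$ from $f(G_p)$ and $\braket{g_l}{g_l+1}$ from $f(G_1)$. Both kets are nonzero, because $g_{l+1}>g_l\ge 0$ and $g_l+1\le k-1$. Since (2) fails, $g_{l+1}=g_l+1$. Hence $G_p$ is an interval $[g_0,g_m]$, and all its internal brakets have the allowed shape.

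\textbf{Step 3 (the wrap-around braket $\braket{g_m}{g_0}$).} If $g_0=0$ this braket is allowed, so assume $g_0\neq 0$.
\begin{itemize}
\item If $g_m<k-1$, then $f(G_1)$ supplies $\braket{g_m}{g_m+1}$ with nonzero ket $g_m+1$. Comparing it with $\braket{g_m}{g_0}$, the failure of (2) forces $g_0=g_m+1$, which contradicts $g_0<g_m$.
\item If $g_m=k-1$, we are left with $G_p=[g_0,k-1]$ for some $g_0\ge 1$, and the braket $\braket{k-1}{g_0}$ is not of an allowed form.
\end{itemize}

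\textbf{Main obstacle.} The only difficulty is this last case: an interval $G_p=[g_0,k-1]$ with $g_0\ge1$. Nesting forces every later $G_{p'}$ to be a subinterval of it, so $0$ has already been exhausted. I expect to rule this case out by exhibiting a second bra with two distinct nonzero kets, or a nonzero $\braket{i}{i}$. The natural candidate is bra $k-1$. However, its other kets are $0$ (from $G_1$) and kets from sets $[0,\cdot]$, which are also $0$. So the failure of (2) alone does not seem to exclude it.

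Before completing the argument I would test small cases. For $k=3$ with $x_0=1$ and $x_1=x_2=2$, the stable brakets are $\braket01,\braket12,\braket20,\braket12,\braket21$. This appears to satisfy neither (1) nor (2), even though $x_0>x_1>x_2$ fails. So I expect this step to require either a strengthened form of condition (2), for instance allowing the ket $0$ only from bra $k-1$ and otherwise requiring ket $i+1$, or an additional case that detects a braket $\braket{k-1}{g_0}$ with $g_0\neq 0$. I would resolve this before finishing. With such a strengthening, Steps~1--3 close the contrapositive directly.
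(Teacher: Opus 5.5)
Your proposal is not yet a proof: Step~3 leaves the case $G_p=[g_0,k-1]$ with $g_0\ge 1$ open, and you suggest changing condition (2) instead of closing it. Your counterexample ($k=3$, $x_0=1$, $x_1=x_2=2$) is correct for the statement read literally. However, it relies on a tie, and that points to the missing idea. The lemma sits in the ranking section, whose standing assumption (in the definition of the ranking problem) is that no two colors have the same number of agents. The paper's proof uses this assumption implicitly twice: when it says that if every $\braket{i}{i}$ has $i=0$ then $0$ is the majority color, and when it negates $x_0>\dots>x_{k-1}$ as ``some $x_j<x_{j+1}$''.

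With distinct counts, your obstacle disappears immediately:
\begin{itemize}
\item The unique majority color $\mu$ yields a braket $\braket{\mu}{\mu}$ in the stable configuration, by \Cref{lemma:majority_color} and \Cref{lem:majowins}.
\item The failure of (1) then forces $\mu=0$.
\item Since $\mu$ lies in every $G_p$, we get $g_0=0$ for every greedy set, so the bad case of Step~3 never arises.
\end{itemize}
Your Steps~1 and~2 then put every braket in the form $\braket{i}{i+1}$ or $\braket{g_m}{0}$, and \Cref{lem:orderisrecognizable} finishes the argument. Note that its backward direction also needs distinctness: without it, you only get $x_{j-1}\ge x_j$.

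Your proposed repair would not work. Neither a stronger condition (2) nor an extra disjunct detecting $\braket{k-1}{g_0}$ with $g_0\neq 0$ handles the case $k=2$, $x_0=x_1=1$. There the stable brakets are $\braket{0}{1}$ and $\braket{1}{0}$. These pass both of your suggested checks and have exactly the shapes produced by a sorted input, yet $x_0>x_1$ fails. So ties are the real obstruction, and the fix is to invoke the no-tie assumption rather than alter the conclusion.

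Once that assumption is in place, your contrapositive is a valid but longer alternative to the paper's direct argument. The paper takes the least $\ell$ with $x_\ell<x_{\ell+1}$; since $0$ is the majority color, $\ell\ge 1$ and $x_{\ell-1}>x_\ell$. It then exhibits $\braket{\ell-1}{\ell}$ from $f(G_1)$ together with $\braket{\ell-1}{\ell+1}$ from $f(G_{x_\ell+1})$. This is the same comparison of two links sharing a bra that drives your Step~2, and it avoids the detour through \Cref{lem:orderisrecognizable}.
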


\begin{proof}
    Assume all agents of the form $\braket{i}$ satisfy $i=0$, that is, $0$ is the majority color. 
    And assume that there exists a $j$ such that $x_j < x_{j+1}$, and let $\ell$ be the smallest such $j$.
    In particular, $\ell \neq 0$ and $x_{\ell -1}>x_\ell$.
    Therefore, all of the kets $\ket \ell$ are in agents of the form $\braket{\ell-1}{\ell}$, and at least one of such agents exists since $x_\ell \neq 0$.
    Moreover, the first greedy independent set that does not contain $\ell$ contains both $\ell -1$ and $\ell+1$.
    Therefore there exists an agent of the form  $\braket{\ell-1}{\ell+1}$.
\end{proof}

In our case, this means that whenever two agents of the form $\braket{i, \rk{i}}{\rk{j}, j}$ and $\braket{i, \rk{i}}{\rk{l}, \ell}$ with $\rk{j}\neq \rk{l}$ and $\rk{j}, \rk{l} \neq 0$ meet, they need to exchange $\rk{j}$ and $\rk{l}$. 
This can only be allowed if they respectively hold the tokens for $j$ and $\ell$.
Moreover, $\braket{i, \rk{i}}{\rk{i}, i}$ must be able to claim the ranking $\rk{i}=0$.
We therefore enforce the token of color $j$ to be attached to a ket $\ket{\rk{j}, j}$.
The token should seek the agent $\braket{i, \rk{i}}{\rk{j}, j}$ with minimal $\rk{i}$ to stick to, that is, whenever $\braket{i, \rk{i}}{\rk{j}, j}^*$  meets $\braket{\ell, \rk{l}}{\rk{j}, j}$ (where the asterisk denotes the token on the color of the ket), the token should be transferred to $\braket{\ell, \rk{l}}{\rk{j}, j}$ only if $\rk{l} < \rk{i}$.
We can then enforce that whenever $\braket{i, \rk{i}}{\rk{i}, i}^*$ meets $\braket{j, \rk{j}}{0, \ell}^*$, they exchange their rankings, that is $ \rk{l} \leftarrow \rk{i}, \rk{i}\leftarrow 0$; and when $\braket{i, \rk{i}}{\rk{j}, j}^*$ and $\braket{i, \rk{i}}{\rk{l}, \ell}^*$ meet, they also exchange their rankings, that is $\rk{j}, \rk{l}\leftarrow \rk{l}, \rk{j}$.

With all of these rules, we are done describing the protocol, and it remains to show that it is correct against a globally fair scheduler.
For that, we show below that there is only one configuration that is stable, namely the one where all the assigned rankings of the colors is correct, and where the \textsc{Circles} protocol is stabilized.
We then show that this configuration is reachable from all other valid configurations (that is, configurations reachable from the initial configuration), and this is enough to conclude.

Before that, let us describe more formally the protocol.
In this description, whenever two agents meet, if multiple cases apply, select only one case: the first case that does not leave the agents unchanged.
The $\circ$ symbol denotes the fact that the existence or not of a token is not important:

\begin{enumerate}
    \item Initialization:
    \begin{itemize}
        \item Each agent in color $i$ starts in the state $\braket{i, \rk{i}=i}{\rk{i}=i, i}^*$
    \end{itemize}
    \item Token deletion: \label{step:deletion}
    \begin{itemize}
        \item If $\braket{j, \rk{j}}{\rk{i}, i}^*$ meets $\braket{\ell, \rk{l}}{\rk{i}', i}^*$, with $\rk{j}\le \rk{l}$, then $\braket{\ell, \rk{l}}{\rk{i}', i}^*$ becomes $\braket{\ell, \rk{l}}{\rk{i}, i}$
    \end{itemize}
    \item Ranking uniqueness: \label{step:uniqueness}
    \begin{itemize}
        \item If $\braket{j, \rk{j}}{\rk{i}, i}^*$ meets $\braket{\ell, \rk{l}}{\rk{u}, u}^*$, with $\rk{i}= \rk{u}$ and $i <u$, then $\braket{\ell, \rk{l}}{\rk{u}, u}^*$ becomes $\braket{\ell, \rk{l}}{\rk{u}+1, u}^*$ (in other words, $\rk{u} \leftarrow \rk{u}+1$).
    \end{itemize}
    \item Ranking broadcast: \label{step:broadcast}
    \begin{itemize}
        \item If $\braket{i, \rk{i}}{\rk{j}, j}^*$ meets $\braket{u, \rk{u}}{\rk{j}', j}$, then $\braket{u, \rk{u}}{\rk{j'}, j}$ becomes $\braket{u, \rk{u}}{\rk{j}, j}$.
        \item If $\braket{i, \rk{i}}{\rk{j}, j}^*$ meets $\braket{j, \rk{j}'}{\rk{u}, u}^\circ$, then $\braket{j, \rk{j'}}{\rk{u}, u}^\circ$ becomes $\braket{j, \rk{j}}{\rk{u}, u}^\circ$
    \end{itemize}
    \item \textsc{Circles:} \label{circles}
    \begin{itemize}
        \item Recall the weight function from \Cref{eq:newweight}.
        \item If agent $\braket{i,\rk i}{\rk j,j}^\circ$ meets $\braket{u,\rk u}{\rk v,v}^\circ$ with $\min\{w(\braket{i,\rk i}{\rk v,v}), w(\braket{u,\rk u}{\rk j,j})\} < \min\{w(\braket{i,\rk i}{\rk j,j}), w(\braket{u,\rk u}{\rk v,v})\}$, then $\braket{i,\rk i}{\rk j,j}^\circ$ becomes $\braket{i,\rk i}{\rk v,v}^\circ$ and $\braket{u,\rk u}{\rk v,v}^\circ$ becomes $\braket{u,\rk u}{\rk j,j}^\circ$ (If there are any tokens, they are exchanged to stay with their respective kets).
    \end{itemize}
    \item Token transfer: \label{transfer}
    \begin{itemize}
     \item If $\braket{i, \rk{i}}{\rk{i}, i}$ meets $\braket{\ell, \rk{l}}{\rk{i}, i}^*$, then $\braket{\ell, \rk{l}}{\rk{i}, i}^*$ becomes $\braket{\ell, \rk{l}}{\rk{i}, i}$ and $\braket{i, \rk{i}}{\rk{i}, i}$ becomes $\braket{i, \rk{i}}{\rk{i}, i}^*$
        \item If $\braket{j, \rk{j}}{\rk{i}, i}$ meets $\braket{\ell, \rk{l}}{\rk{i}, i}^*$, with $j \neq i, \ell \neq i$ and $\rk{j}< \rk{l}$, then $\braket{\ell, \rk{l}}{\rk{i}, i}^*$ becomes $\braket{\ell, \rk{l}}{\rk{i}, i}$ and $\braket{j, \rk{j}}{\rk{i}, i}$ becomes $\braket{j, \rk{j}}{\rk{i}, i}^*$
    \end{itemize}
    \item Ranking swap: \label{swap}
    \begin{itemize}
        \item If $\braket{i, \rk{i}}{\rk{i},i}^*$ meets $\braket{u, \rk{u}}{0, v}^*$ with $v \neq i$, $\rk{i} \neq 0$, then $\braket{i, \rk{i}}{\rk{i},i}^*$ becomes $\braket{i, 0}{0,i}^*$ and $\braket{u, \rk{u}}{0, v}^*$ becomes $\braket{u, \rk{u}}{\rk{i}, v}^*$
        \item If $\braket{i, \rk{i}}{\rk{j},j}^*$ meets $\braket{i, \rk{i}}{\rk{l}, \ell}^*$ with $j, \ell \neq i$, then $\braket{i, \rk{i}}{\rk{j},j}^*$ becomes $\braket{i, \rk{i}}{\rk{l},j}^*$ and $\braket{i, \rk{i}}{\rk{l}, \ell}^*$ becomes $\braket{i, \rk{i}}{\rk{j}, \ell}^*$.
    \end{itemize}
    \item Output:
    \begin{itemize}
        \item An agent in state $\braket{i, \rk{i}}{\rk{j}, j}$ outputs $\rk{i}$.
    \end{itemize}
\end{enumerate}

We now define the configuration that we show below is stable:

\begin{definition}
    Let $X$ be the following configuration: \begin{enumerate}
        \item There exists exactly $k$ tokens, one for each $i \in [0,k-1]$. Each token is on a ket of different color.
        \item The tokens are associated to $k$ different rankings, defined by: $i \in [0,k-1]$, $\mathbf r(i):=\card{\{\ell \in [0, k-1]: x_\ell > x_i\}}$.
        \item  Every bra and ket are of the form $\bra{i, \mathbf r(i)}$ for $i \in [0,k-1]$ or $\ket{\mathbf r(j), j}$ for $j \in [0,k-1]$
        \item The \textsc{Circles} protocol is stabilized, that is, all kets are of the form either $\braket{i, \mathbf r (i)}{\mathbf r(j) = \mathbf r(i)+1, j}^\circ$ or $\braket{i, \mathbf r (i)}{\mathbf r(j) = 0, j}^\circ$
        \item All tokens are on brakets of the form $\braket{i, \mathbf r (i)=0}{\mathbf r(i) = 0, i}^*$ or $\braket{i, \mathbf r (i)}{\mathbf r(j) = \mathbf r(i) +1, j}^*$.
    \end{enumerate}
     \end{definition}

\begin{lemma}[\appref{lem:Xstable}]\label{lem:Xstable}
    $X$ is a stable configuration.
\end{lemma}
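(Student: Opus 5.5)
The plan is to check directly that no transition rule, applied to any pair of agents in $X$, changes the multiset of states. We go through the rules in the order the protocol lists them. The facts used throughout come straight from the definition of $X$:
\begin{enumerate*}
\item there is exactly one token per ket color;
\item the rankings $\mathbf r(i)$ are pairwise distinct, since the ranking problem assumes no ties;
\item every bra and ket of color $i$ carries $\mathbf r(i)$;
\item every braket has the form $\braket{i,\mathbf r(i)}{\mathbf r(i)+1,j}$ or $\braket{i,\mathbf r(i)}{0,j}$.
\end{enumerate*}

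The token rules are short case checks.
\begin{itemize}
\item \emph{Token deletion} needs two tokens on kets of the same color, which fact (1) excludes.
\item \emph{Ranking uniqueness} needs two tokens whose kets have equal rankings but different colors, which fact (2) excludes.
\item \emph{Ranking broadcast} only copies a ranking the receiving agent already stores, by fact (3), so it changes nothing.
\item \emph{Ranking swap}, first bullet, needs a token on $\braket{i,\rk i}{\rk i,i}$ with $\rk i\neq 0$. Fact (4) forces $\mathbf r(i)+1=\mathbf r(i)$ or $\mathbf r(i)=0$ for such a braket, so this case cannot arise.
\item \emph{Ranking swap}, second bullet, needs two tokens on brakets with the same bra $i$ and ket colors $j\neq\ell$, both different from $i$. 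By condition 5 each token sits on $\braket{i,\mathbf r(i)}{\mathbf r(i)+1,\cdot}$, or on $\braket{i,0}{0,i}$ which is excluded here because the ket color equals $i$. The ket ranking $\mathbf r(i)+1$ then determines the ket color uniquely by fact (2), so $j=\ell$, a contradiction.
\end{itemize}

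For \emph{token transfer}, second bullet, the token of color $i$ sits (condition 5) on a braket whose bra has rank $\mathbf r(i)-1$, or $\mathbf r(i)=0$ on $\braket{i,0}{0,i}$. Any other braket with ket $\ket{\mathbf r(i),i}$ and bra color different from $i$ has, by fact (4), bra rank $\mathbf r(i)-1$, or $k-1$ when $\mathbf r(i)=0$. So no strictly smaller $\rk j$ exists and the rule never fires.

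For the first bullet, a braket $\braket{i,\rk i}{\rk i,i}$ exists only when $\mathbf r(i)=0$. The token holder with ket $\ket{0,i}$ must then itself be $\braket{i,0}{0,i}$, because a bra of rank $k-1$ is not allowed to carry a token by condition 5. The two agents are therefore in identical states apart from the token. Moving the token permutes states between agents but leaves the configuration, a multiset of states, unchanged. I would make explicit that stability is meant at the level of configurations, or else note that one can equally well read the rule as not firing between equal brakets.

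The step I expect to need most care is the \textsc{Circles} rule with the new weights \eqref{eq:newweight}. Brakets of the form $\braket{\cdot,a}{a+1,\cdot}$ have weight $1$, the minimum possible, so no exchange involving one of them can strictly decrease the minimum. The remaining brakets have ket rank $0$ and weight $k-a$, or $k$ when $a=0$. Since exactly one color has rank $0$, two such brakets share the same ket, and exchanging kets leaves both brakets unchanged. So the strict inequality required by the rule never holds. An alternative is to identify $X$, via the relabeling $i\mapsto \mathbf r(i)$, with the stabilized \prm{} configuration of Lemma~\ref{lem:majowins} for an input sorted as in Lemma~\ref{lem:orderisrecognizable}. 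That, however, only gives stability under weak fairness, not the absence of any enabled exchange, so I would rely on the direct weight argument.

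Finally, the output of each agent depends only on its bra ranking, which none of the above changes, so $X$ is stable.
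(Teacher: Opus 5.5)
Your proof is correct and is essentially the paper's rule-by-rule check. The only local differences are that you verify the \textsc{Circles} rule directly from the weights (a weight-$1$ braket cannot be improved, and two brakets with ket rank $0$ carry the same ket) instead of citing Lemma~\ref{lem:orderisrecognizable}, and that you handle the first token-transfer bullet via the multiset view, where the paper simply asserts that tokens are not passed between identical brakets.

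One side claim is false but harmless: in $X$ every circle is closed by a braket whose ket has rank $0$ and whose bra carries that circle's largest rank, so such bras can have any rank (including $0$), not only $k-1$. Wherever you use it, condition~5 alone already forces the rank-$0$ token onto $\braket{i,0}{0,i}$, whose bra rank $0$ is minimal, so your conclusions stand.
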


\appendixproof{lem:Xstable}{\begin{proof}
    Rule~\ref{step:deletion} cannot modify any agents as there are no two tokens on kets with the same color.
    Rule~\ref{step:uniqueness} cannot modify any agents as there are no two tokens associated to the same ranking.
    Rule~\ref{step:broadcast} cannot modify any agents as for every $i$, all bras and kets have the same ranking $\mathbf r(i)$.
    Rule~\ref{circles} cannot modify any agents by \Cref{lem:orderisrecognizable}.
    Rule~\ref{transfer} cannot modify any agents as the token for $i$ with $\mathbf r(i) = 0$ is on a braket $\braket{i, \mathbf r (i)=0}{\mathbf r(i) = 0, i}^*$ which minimizes the ranking in the bra, while for all $j \neq i$, we have that the kets containing $j$ are all of the same form: $\braket{\ell, \mathbf r (\ell)}{\mathbf r(j) = \mathbf r(\ell) +1, j}$, since a token is not transferred between agents of same braket.
    Finally, Rule~\ref{swap} cannot modify any agents, as there are two tokens on brakets with bra of ranking $0$ one of which is of the form $\braket{i, \mathbf r (i)=0}{\mathbf r(i) = 0, i}^*$, while all others are on brakets of bras of different weights. The two tokens on the same bra cannot interact because one of them has the same bra and ket, while any other pair of tokens cannot interact because they have different bras.
\end{proof}}

We finally show that $X$ is reachable from every configuration.

\begin{proposition}\label{prop:Xreachable}
    $X$ is reachable from any valid configuration.
\end{proposition}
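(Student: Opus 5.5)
We will exhibit, from any valid configuration $C$, an explicit finite sequence of interactions leading to $X$. Since we only need reachability, we are free to choose the scheduling. The sequence is built in phases, and each phase ends in a configuration satisfying an invariant that later phases preserve. Throughout we use the analogue of \Cref{lemma:global_braket_invariant}: for every color $i$ the number of bras of color $i$ equals the number of kets of color $i$. We also use that every color has at least one token somewhere, since deletion needs two tokens of the same ket color.

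\emph{Phase 1 (one leader per color, unique rankings).} First, repeatedly pair two tokens on kets of the same color and apply Rule~\ref{step:deletion} until exactly one token per color remains. Next, apply Rule~\ref{step:uniqueness} to pairs of tokens sharing a ranking. Here we reuse the potential argument of \Cref{thrm:ordering}: the $k$ tokens take values in $k$ rankings, so incrementing eventually makes all rankings distinct. Then use Rule~\ref{step:broadcast} to let each token overwrite every ket of its color and every bra of its color. Afterwards all bras and kets of color $i$ carry the same ranking $\rk i$, and the map $i\mapsto\rk i$ is a bijection.

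\emph{Phase 2 (stabilize \textsc{Circles}).} With the rankings frozen, the brakets behave exactly like \prm{} on relabelled colors $\rk i$. By \Cref{theorem:stabilization} and \Cref{lem:majowins}, and by scheduling every improving pair, we reach the greedy-independent-set structure with respect to the labels $\rk\cdot$. Tokens follow their kets, so the invariant of Phase 1 is kept. We then use Rule~\ref{transfer} to move each token onto the braket with its own ket color of minimal bra ranking, or onto $\braket{i,\rk i}{\rk i,i}$ when such a braket exists.

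\emph{Phase 3 (sorting by swaps).} If $\rk$ is not the true ranking $\mathbf r$, the lemma preceding the definition of $X$, applied to the relabelled instance, gives one of two situations. Either there is an agent $\braket{i,\rk i}{\rk i,i}$ with $\rk i\neq0$, which by Phase 2 carries the token of $i$; or there are two kets of distinct nonzero-rank colors $j,\ell$ attached to the same bra $i$, and by the minimal-bra placement both carry their tokens. In the first case, we bring the token of the color currently ranked $0$ to meet this agent and apply Rule~\ref{swap}. In the second case we apply the second swap rule directly. After the swap we rerun Phases 1 and 2: broadcast the new rankings, restabilize \textsc{Circles}, and re-place the tokens.

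To terminate, we attach a potential to the ranking permutation that strictly decreases with each swap. The first swap sends the majority color of the current circles (the one with the most agents among the singleton circle) to rank $0$. The second swap exchanges $\rk j$ and $\rk\ell$ where, by the lemma's proof, $x_\ell<x_{\ell+1}$ at adjacent ranks, so it resolves an inversion between the ranks and the counts. The number of inversions of $\rk$ relative to the order of the $x_i$ therefore decreases, possibly paired lexicographically with whether the rank-$0$ color is the true majority. When it reaches zero, $\rk=\mathbf r$, and \Cref{lem:orderisrecognizable} together with the token placement gives exactly items 1--5 of $X$.

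The main obstacle is this termination argument. We must check carefully that the pair $(j,\ell)$ produced by the lemma really is an inversion, and in particular that the swap exchanges adjacent ranks in the wrong order rather than creating new inversions. We must also check that the tokens are actually positioned where Rule~\ref{swap} requires them. If the inversion count does not directly decrease, we would fall back on a lexicographic potential: first the rank of the true majority color, then inversions among the remaining colors, mimicking a selection sort.
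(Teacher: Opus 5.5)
Your Phases 1 and 2 reproduce the paper's chain of configurations $B$, $C$, $D$, $E_{-1}$. Phase 3, however, has a genuine gap. In case (2) you take the pair supplied by the lemma preceding the definition of $X$ and claim that, after minimal-bra placement, both of its kets carry their tokens. This is false in general.

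Let $c_t$ denote the color currently ranked $t$, and take $(x_{c_0},x_{c_1},x_{c_2},x_{c_3})=(10,5,3,8)$. After stabilization the greedy independent sets are $\{c_0,c_1,c_2,c_3\}$ (three times), $\{c_0,c_1,c_3\}$ (twice), $\{c_0,c_3\}$ (three times) and $\{c_0\}$ (twice). The lemma's proof gives $\ell=2$, i.e.\ the brakets with bra $c_1$ and kets $c_2$ and $c_3$. But kets of $c_3$ also sit behind bra $c_0$, so minimal-bra placement puts the token of $c_3$ there. Since Rule~\ref{transfer} only moves that token towards strictly smaller bra rankings, it cannot be brought to bra $c_1$ either; it may already sit behind $c_0$ when \textsc{Circles} stabilizes. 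So the second swap rule cannot fire on the lemma's pair. Your termination argument relies on exactly this adjacent pair ($x_\ell<x_{\ell+1}$ at adjacent ranks), so it is about a swap you cannot perform.

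What is missing is an explicit description of where the tokens sit. Assume $c_0$ is the true majority; otherwise your case (1) applies, and it is fine. The smallest greedy independent set containing $c_t$ is $G_{x_{c_t}}$. Hence, for $t\ge 1$, minimal-bra placement puts the token of $c_t$ on bra $c_{s(t)}$, where $s(t)=\max\{s<t : x_{c_s}>x_{c_t}\}$.

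Two consequences follow.
\begin{itemize}
\item If two tokens share a bra, with kets $c_{t_1},c_{t_2}$ and $t_1<t_2$, then $s(t_2)<t_1<t_2$ forces $x_{c_{t_1}}<x_{c_{t_2}}$. So the swap exchanges an inverted pair and strictly lowers the inversion count. The first swap rule also lowers it.
\item Such a pair exists whenever the ranking is wrong. Let $\ell$ be the first index with $x_{c_\ell}<x_{c_{\ell+1}}$. Then $s(t)=t-1$ for $t\le \ell$, while $s':=s(\ell+1)\le \ell-1$. So the tokens of $c_{s'+1}$ and $c_{\ell+1}$ share the bra $c_{s'}$. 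In the example above these are the tokens of $c_1$ and $c_3$ behind $c_0$.
\end{itemize}
With this choice of pair, your inversion potential goes through.

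The paper avoids the potential altogether with the selection sort of \Cref{lem:finallemma}, which is the fallback you sketch but with the needed token-position fact supplied. Suppose ranks $0,\dots,m-1$ are already correct, with $m\ge 1$. Let $u$ be the color with $\mathbf r(u)=m$ and $v$ the color currently ranked $m$. Both tokens sit on the bra of rank $m-1$: every greedy independent set containing $u$ or $v$ contains that color, which is $v$'s predecessor everywhere and $u$'s predecessor in $G_{x_u}$. So one swap fixes rank $m$, and $k$ rounds suffice.
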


The rest of this section is dedicated to proving this proposition. We do so by taking any valid configuration, and giving a series of intermediate configurations, each reachable from the previous configuration, until we reach $X$. Let $A$ be any valid configuration.

\begin{lemma}
    $A$ satisfies the following conditions:
    \begin{enumerate}
        \item for every $i$, there exists exactly $x_i$ many bras and $x_i$ many kets of the form $\bra{i, \cdot}$ and $\ket{\cdot, i}$ respectively.
        \item for every $i$, there exist at least one ket of the form $\ket {\cdot, i}^*$
    \end{enumerate}
\end{lemma}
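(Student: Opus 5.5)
Both properties are invariants, so we prove each by induction on the number of interactions from the initial configuration to $A$. A valid configuration is by definition reachable from the initial configuration, so it suffices to show that each property holds initially and is preserved by every rule of the protocol.

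\emph{Initially.} Each of the $x_i$ agents of color $i$ is in state $\braket{i, i}{i, i}^*$. Hence there are exactly $x_i$ bras $\bra{i,\cdot}$ and $x_i$ kets $\ket{\cdot, i}$. Moreover, since every color is represented, there is at least one ket $\ket{\cdot,i}^*$.

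\emph{Property (1).} We check that no rule changes the color component of any bra or ket, or moves a color between agents other than by swapping kets.
\begin{itemize}
\item Token deletion only removes a token.
\item Ranking uniqueness and ranking broadcast only rewrite ranking components $\rk{\cdot}$; the color components $j$, $u$ stay untouched.
\item The \textsc{Circles} rule swaps two whole kets, colors included, which preserves the multiset of ket colors. Bras are never exchanged.
\item Token transfer moves only tokens.
\item In the first ranking swap, $\braket{i,\rk i}{\rk i,i}$ becomes $\braket{i,0}{0,i}$ and $\ket{0,v}$ becomes $\ket{\rk i,v}$. In the second, $\ket{\rk j, j},\ket{\rk l,\ell}$ become $\ket{\rk l, j},\ket{\rk j,\ell}$. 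Both alter only rankings.
\end{itemize}
So the counts $x_i$ of bras and kets of color $i$ are invariant.

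\emph{Property (2).} This is the main point, since tokens can be destroyed. We show the stronger invariant that every token is attached to a ket of its own color, and tokens of color $i$ are destroyed only while another token of color $i$ survives. Going through the rules:
\begin{itemize}
\item Token deletion requires two tokens on kets $\ket{\cdot,i}$ of the \emph{same} color $i$, and removes only one of them. So if at least one token of color $i$ existed before, at least one exists after.
\item In the \textsc{Circles} rule, tokens travel with their kets, as stated in the rule, so each token stays on a ket of the same color.
\item Token transfer moves a token from one ket $\ket{\rk i, i}$ to another ket $\ket{\rk i,i}$ of the same color $i$.
\item The ranking swaps, uniqueness and broadcast keep tokens in place and change no colors, so a starred ket $\ket{\cdot,i}^*$ remains starred with color $i$. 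In the first swap both participants keep their stars.
\item No rule creates a token with a different color, and no other rule deletes one.
\end{itemize}

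The only delicate step is confirming that token deletion never removes the last token of a color. This holds because the rule's premise requires two starred kets of the same color $i$ and only the second loses its star. The "first applicable case" convention is harmless here, since each rule application individually preserves both invariants. Induction over the finite interaction sequence leading to $A$ then gives both properties.
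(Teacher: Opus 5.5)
Your proof is correct and follows the same route as the paper. In both, the color components of bras and kets are only ever exchanged, never altered, and tokens behave like leader-election tokens among the kets $\ket{\cdot,i}$: token deletion needs two tokens of the same color $i$ and removes just one, so the last token of a color is never lost. The only difference is that you carry out the rule-by-rule check explicitly, where the paper states it in one line.
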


\begin{proof}
    Both statement are straightforward, as the first and second elements of the bras and ket respectively are never changed, only exchanged, and as the token is passed around in a leader election fashion over the kets of the form $\ket{\cdot, i}$.
\end{proof}

\begin{lemma}
    There exists a configuration $B$ reachable from $A$ with the following conditions: 
    \begin{enumerate}
        \item for every $i$, there exists exactly $x_i$ many bras and $x_i$ many kets of the form $\bra{i, \cdot}$ and $\ket{\cdot, i}$ respectively.
        \item for every $i$, there exist \textbf{exactly} one ket of the form $\ket {\cdot, i}^*$.
    \end{enumerate}
\end{lemma}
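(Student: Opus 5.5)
Starting from $A$, I would build an explicit finite sequence of interactions that merges all tokens of a color into one, and check two things along the way: condition (1) is preserved by every rule, and the count of tokens of each color never drops to zero.

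Condition (1) holds in $A$ by the preceding lemma. I would then note that every rule of the protocol keeps it. Rule~\ref{step:deletion} only deletes a token. Rules~\ref{step:uniqueness}, \ref{step:broadcast} and~\ref{swap} only rewrite ranking components $\rk{\cdot}$. Rule~\ref{circles} exchanges whole kets together with their tokens. Rule~\ref{transfer} moves a token between two kets of the same color. None of these changes the color components of bras or kets, so the multiset of color components, and hence (1), is invariant along any path. The same case check shows that the number of tokens on kets $\ket{\cdot,i}$ never increases. It also never becomes zero: only Rule~\ref{step:deletion} removes a token, and it needs two tokens of color $i$ to be present and leaves one of them. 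So condition (2) of the preceding lemma holds in every reachable configuration.

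For the construction, I fix a color $i$ that has at least two tokens, on kets of agents $a=\braket{j,\rk{j}}{\rk{i},i}^*$ and $b=\braket{\ell,\rk{l}}{\rk{i}',i}^*$. After relabeling so that $\rk{j}\le\rk{l}$ (possible since $\le$ is total), the deletion rule applies to this pair. The main obstacle is the convention that an interaction applies only the \emph{first} rule that changes something, so deletion might be preempted.

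\begin{itemize}
\item Rules earlier in the list cannot preempt it, since deletion is the first rule after initialization.
\item If one ordered role assignment of $a,b$ happens to trigger an earlier case, I would use the other one. Since deletion only needs $\rk{j}\le\rk{l}$, which can always be arranged by choosing the initiator, the scheduler can choose the pair and the roles so that Rule~\ref{step:deletion} is the rule that fires.
\item If the rule is symmetric in roles, the only subtlety is equality $\rk{j}=\rk{l}$. Either agent then loses its token, which is fine.
\end{itemize}

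This interaction reduces the number of tokens of color $i$ by exactly one and changes nothing else about colors. The total number of tokens is finite and strictly decreases, and by the argument above no color ever drops to zero tokens. Iterating this step over all colors with surplus tokens therefore reaches, after finitely many interactions, a configuration $B$ with exactly one token per color that still satisfies (1). Since each step is a single legal interaction, $B$ is reachable from $A$.
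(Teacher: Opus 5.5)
Your proposal is correct and follows the paper's route. The paper's proof just invokes the token-deletion rule together with the fact that the color components of kets are only ever exchanged. You spell out the details the paper leaves implicit: condition (1) is invariant, a color's token count can never drop to zero, and deletion, being the first rule, always fires on a suitably ordered pair of same-color tokens. The paper also cites the ranking-uniqueness rule here, but it is not needed for this lemma, and you rightly do without it.
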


\begin{proof}
    Again, this is straightforward using Rules~\ref{step:deletion} and~\ref{step:uniqueness}, and the fact that the second elements of kets are only exchanged.
\end{proof}

\begin{lemma}
    There exists a configuration $C$ reachable from $B$ with the following conditions: 
    \begin{enumerate}
        \item for every $i$, there exists exactly $x_i$ many bras and $x_i$ many kets of the form $\bra{i, \cdot}$ and $\ket{\cdot, i}$ respectively.
        \item for every $i$, there exist exactly one ket of the form $\ket {\cdot, i}^*$. Moreover, the pairs formed by these kets form a bijection of $[0,k-1]$ to $[0,k-1]$, that is to say, the rankings of the tokens are unique.
    \end{enumerate}
\end{lemma}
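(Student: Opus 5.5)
Starting from $B$, I would only schedule interactions between the $k$ token-holding agents and fire Rule~\ref{step:uniqueness} repeatedly, mirroring the potential argument for \po{} (\Cref{thrm:ordering}). Condition (1) is preserved automatically, since no rule changes the color components of bras or kets; it only moves them or changes rankings. Condition (2)'s first half is also preserved. No rule creates tokens. Rule~\ref{step:deletion} is never scheduled, because it only fires when two tokens sit on kets of the same color, which does not happen in $B$. Rule~\ref{step:uniqueness} modifies only the ranking on the token's own ket, not the token's position.

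The first thing to check is that, while two tokens of colors $i<u$ share a ranking $\rk{i}=\rk{u}$, letting them meet actually applies Rule~\ref{step:uniqueness} and not an earlier rule. Rule~\ref{step:deletion} needs equal ket colors, and here $i \neq u$. So Rule~\ref{step:uniqueness} is the first applicable case that changes the agents. It sets $\rk{u}\leftarrow \rk{u}+1 \bmod k$ on the token's ket, which is the \po{} increment of a leader's label. Only token rankings enter condition (2), so I can ignore the non-token kets, whose rankings Rule~\ref{step:broadcast} would later overwrite.

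Next I would reuse the potential $R$ from the proof of \Cref{thrm:ordering}, with leaders replaced by tokens and labels by token rankings. There are exactly $k$ tokens and $k$ possible rankings. Lemma~\ref{lem:pou_reducing_R} then shows that if two tokens share a ranking, one increment strictly decreases $R$. Lemma~\ref{lem:pou_valid_ordering_Req0} shows that $R=0$ exactly when all token rankings are distinct.

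Since we only need reachability, there is no adversarial interleaving to worry about: I choose the sequence. I repeatedly pick a colliding pair and increment, and $R$ strictly decreases each time, so after finitely many steps $R=0$. At that point the $k$ rankings are pairwise distinct among $k$ values, hence a bijection of $[0,k-1]$, which is configuration $C$.

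The main obstacle is ensuring no other rule fires during these meetings. For instance, two tokens whose brakets share a bra could match the pattern of Rule~\ref{swap}. This cannot happen, though, because a colliding pair always matches Rule~\ref{step:uniqueness}, which comes earlier in the rule order and changes the agents.

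A smaller subtlety is the increment $\rk{u}+1$: I would read it as taken modulo $k$, as in \po{}. Otherwise rankings could leave $[0,k-1]$.
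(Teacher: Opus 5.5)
Your proof is correct and follows the paper's route: the paper simply observes that from $B$ one can schedule only interactions that fire Rule~\ref{step:uniqueness} until the token rankings are pairwise distinct. You additionally justify that such a finite schedule exists by carrying over the potential $R$ from the \po{} analysis (tokens as leaders, token rankings as labels), which supplies the termination detail the paper leaves implicit.
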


\begin{proof}
    From configuration $B$, we can select the pairs of agents to interact that only use Rule~\ref{step:uniqueness} to get the desired configuration.
\end{proof}

\begin{lemma}
    There exists a configuration $D$ reachable from $C$ with the following conditions: 
    \begin{enumerate}
        \item for every $i$, there exists exactly $x_i$ many bras and $x_i$ many kets of the form $\bra{i, \cdot}$ and $\ket{\cdot, i}$ respectively.
        \item for every $i$, there exist exactly one ket of the form $\ket {\cdot, i}^*$. Moreover, the pairs formed by these kets form a bijection of $[0,k-1]$ to $[0,k-1]$, that is to say, the rankings of the tokens are unique.
        \item The rankings of all bras and kets are consistent with the rankings assigned by the tokens.
    \end{enumerate}
\end{lemma}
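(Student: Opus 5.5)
From configuration $C$ I would schedule only Ranking broadcast interactions (Rule~\ref{step:broadcast}). No other rule fires along the way, because every interaction I choose has an applicable broadcast rule, and the protocol applies the first case that changes the agents.

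The key observation concerns what is already fixed in $C$. For each color $j$ there is exactly one token-carrying ket $\ket{\rk{j}, j}^*$, and the map $j \mapsto \rk{j}$ read off the tokens is a bijection. So the target ranking of each color is already determined, and it suffices to overwrite every non-token ket $\ket{\rk{j}', j}$ with $\rk{j}$ and every bra $\bra{j, \rk{j}'}$ with $\rk{j}$.

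The procedure follows the two cases of the broadcast rule:
\begin{enumerate*}
\item For every agent $b$ whose ket is $\ket{\rk{j}', j}$ with $\rk{j}' \neq \rk{j}$ and no token, let $b$ interact with the holder of the token of $j$. By the first bullet of Rule~\ref{step:broadcast}, $b$'s ket becomes $\ket{\rk{j}, j}$.
\item For every agent $b$ with bra $\bra{j, \rk{j}'}$ with $\rk{j}' \neq \rk{j}$, let $b$ interact with the holder of the token of $j$. By the second bullet, $b$'s bra becomes $\bra{j, \rk{j}}$.
\end{enumerate*}

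I will check three things.
\begin{enumerate*}
\item[(a)] The earlier rules do not fire first. Token deletion needs two tokens of the same color, and ranking uniqueness needs two tokens with equal ranking. Neither occurs in $C$, and broadcast never touches tokens or token rankings, so both conditions persist.
\item[(b)] Each broadcast interaction changes only the targeted field. It touches neither the color components nor the tokens, so conditions (1) and (2) of $C$ are preserved.
\item[(c)] The process terminates. Each step strictly decreases the number of bra/ket fields that are inconsistent with the token rankings, so after finitely many steps that number is zero, which is condition (3).
\end{enumerate*}

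The main obstacle is a corner case in step 2: the agent $b$ to be updated may itself hold the token of $j$ on its ket, that is, $b$ is $\braket{j,\rk{j}'}{\rk{j},j}^*$. An agent cannot interact with itself. I see two ways to handle this:
\begin{enumerate*}
\item[(i)] Argue that this case can be skipped or resolved by another rule: first move the token elsewhere via a Token transfer, or note that if $x_j = 1$ then this agent is the only $\bra{j,\cdot}$.
\item[(ii)] More simply, observe that the ordering of rule priorities lets a token holder $\braket{i,\rk i}{\rk j, j}^*$ update the bra of any agent with bra color $j$. The only problematic agent is the holder itself, with $i = j$.
\end{enumerate*}
In that sub-case I would use a Token transfer to another ket $\ket{\cdot,j}$ when one exists ($x_j \ge 2$), then broadcast back. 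When $x_j = 1$, I would argue that the holder's bra ranking is irrelevant until later configurations, or that it is fixed jointly with the Circles step that reaches $X$. I expect this corner case to require the most care.
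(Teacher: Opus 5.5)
Your broadcast-only schedule is exactly the paper's proof ("each agent with a token meets the agents with the corresponding bra or ket"). The obstruction you flag is real, and the paper's proof silently ignores it: Ranking broadcast can never rewrite the bra of an agent $\braket{j,\rk{j}'}{\rk{j},j}^*$ that carries the token of its own bra color. However, you leave this case open, and neither remedy works. The second bullet of Token transfer requires the holder's bra color to differ from the ket color ($\ell\neq i$). Such a holder can therefore only pass its token through the first bullet, to an agent of the exact form $\braket{j,\rk{j}}{\rk{j},j}$ carrying the same ket, and no such agent need exist even when $x_j\ge 2$. For $x_j=1$, calling the bra ranking "irrelevant" contradicts condition (3), which $D$ itself must satisfy.

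In fact no schedule can close the gap, because the statement fails for some configurations meeting the hypotheses on $C$. Take $k=3$, $x_0=2$, $x_1=1$, $x_2=3$, with agents $a_1,a_2$ of color $0$, $b$ of color $1$, and $c_1,c_2,c_3$ of color $2$.

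First let $a_1$ meet $c_1$; the Circles rule yields $\braket{0,0}{2,2}^*$ and $\braket{2,2}{0,0}^*$. Next let $b=\braket{1,1}{1,1}^*$ meet $c_1$. Deletion, uniqueness, broadcast and transfer do not apply, and the Circles rule cannot fire because $c_1$ has weight $1$. So the first Ranking swap gives $b=\braket{1,0}{0,1}^*$ and $c_1=\braket{2,2}{1,0}^*$. Then let $b$ meet $a_2=\braket{0,0}{0,0}^*$; Ranking uniqueness gives $b=\braket{1,0}{1,1}^*$. Finally let $a_2$ meet $c_1$, so $c_1$ loses its token and becomes $\braket{2,2}{0,0}$, and let the color-$2$ tokens meet so that only the one on $a_1$ survives. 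The resulting configuration
\[
\braket{0,0}{2,2}^*,\quad \braket{0,0}{0,0}^*,\quad \braket{1,0}{1,1}^*,\quad \braket{2,2}{0,0},\quad \braket{2,2}{2,2},\quad \braket{2,2}{2,2}
\]
satisfies both conditions of $C$, with token rankings $0,1,2$.

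But $b$ is frozen:
\begin{itemize}
\item Its weight is $(1-0)\bmod 3=1$, so the Circles rule never moves its ket.
\item It carries the only bra and the only ket of color $1$, so neither broadcast nor Token transfer can touch it.
\item Neither Ranking swap applies to it.
\item The ranking of token $1$ could only change through Ranking uniqueness if token $0$ reached ranking $1$.
\end{itemize}
That last event never happens. By induction the rankings of tokens $0$ and $2$ stay $\{0,2\}$: Ranking uniqueness needs two tokens with equal ranking, the second Ranking swap would need two agents with bra color $1$, and the first Ranking swap can only exchange the rankings of tokens $0$ and $2$.

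Hence $b$'s bra ranking $0$ never equals token $1$'s ranking $1$, and neither $D$ nor $X$ is reachable. Since this configuration is itself valid, choosing $C$ differently in the previous lemma does not help either. The lemma needs a protocol change, e.g.\ letting an agent $\braket{j,\rk{j}'}{\rk{j},j}^*$ overwrite its own bra ranking with $\rk{j}$ whenever it interacts. With that rule, your broadcast schedule suffices.
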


\begin{proof}
    From configuration $C$, rules~\ref{step:deletion} and~\ref{step:uniqueness} do not change any agent anymore. Therefore, from $C$, it suffices for each agent with a token to meet the agents with the corresponding bra or ket to assign them the correct ranking.
\end{proof}

\begin{lemma}
    There exists a configuration $E_{-1}$ reachable from $D$ with the following conditions: 
    \begin{enumerate}
        \item for every $i$, there exists exactly $x_i$ many bras and $x_i$ many kets of the form $\bra{i, \cdot}$ and $\ket{\cdot, i}$ respectively.
        \item for every $i$, there exist exactly one ket of the form $\ket {\cdot, i}^*$. Moreover, the pairs formed by these kets form a bijection of $[0,k-1]$ to $[0,k-1]$, that is to say, the rankings of the tokens are unique.
        \item The rankings of all bras and kets are consistent with the rankings assigned by the tokens.
        \item The \textsc{Circles} protocol is stabilized on this particular ranking, that is, Rule~\ref{circles} cannot modify anymore any pair of agents.
    \end{enumerate}
\end{lemma}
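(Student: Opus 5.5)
From $D$, the plan is to schedule only interactions that fire Rule~\ref{circles}. The first check is that no rule with higher priority can interfere, so that such interactions really do execute Rule~\ref{circles} and leave conditions (1)--(3) intact.

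\textbf{Earlier rules are inactive.} In $D$, Rule~\ref{step:deletion} cannot fire, because each color has exactly one token-carrying ket. Rule~\ref{step:uniqueness} cannot fire, because the token rankings are pairwise distinct. Rule~\ref{step:broadcast} changes nothing, because every bra and ket already carries the ranking assigned by its color's token.

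\textbf{Rule~\ref{circles} preserves (1)--(3).} It only swaps whole kets $\ket{\rk{j},j}$ between agents, and the token travels with its ket. Hence the counts in (1) stay fixed. Exactly one token per color remains, with unchanged rankings, so (2) holds. All bras and kets keep their consistent rankings, so (3) holds. Since the first three rules depend only on these invariants, they stay inactive throughout the process.

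\textbf{Termination of the Circles phase.} Once condition (3) holds, the weight in \Cref{eq:newweight} depends only on the rankings. The map $i\mapsto \rk{i}$ is a bijection (condition (2)), so the weights are exactly those of the original \prm{} protocol run on the relabelled colors $\rk{i}$. I then reuse the ordinal potential $g$ from the proof of Theorem~\ref{theorem:stabilization}. Each application of Rule~\ref{circles} strictly decreases the minimum weight of the two brakets involved, so $g$ strictly decreases. Consequently any sequence of Rule~\ref{circles} applications is finite.

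\textbf{The scheduling argument.} As long as some pair of agents would trigger Rule~\ref{circles}, we schedule that pair. Because the earlier rules are inactive, that interaction executes Rule~\ref{circles} and not an earlier case. This process must stop after finitely many steps, and it stops in a configuration $E_{-1}$ where Rule~\ref{circles} applies to no pair, which is condition (4). Every intermediate configuration was obtained by actual protocol interactions, so $E_{-1}$ is reachable from $D$.

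\textbf{Main obstacle.} The delicate point is the first-applicable-case semantics: a chosen pair must not be captured by an earlier rule. This is exactly why the inactivity of Rules~\ref{step:deletion}--\ref{step:broadcast} has to be established at every step, and this is guaranteed because Rule~\ref{circles} preserves invariants (1)--(3).
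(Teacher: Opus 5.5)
Your proof is correct and follows the paper's argument: Rules~\ref{step:deletion}--\ref{step:broadcast} cannot modify any agent in $D$, so the \textsc{Circles} part can be driven to stabilization without disturbing conditions (1)--(3). The only difference is minor. You get termination directly from the ordinal potential of Theorem~\ref{theorem:stabilization}, scheduling only pairs that trigger Rule~\ref{circles}, and you check explicitly that Rule~\ref{circles} preserves (1)--(3). The paper instead invokes convergence of \prm{} under a weakly fair scheduler and replays that schedule, skipping non-modifying interactions.
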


\begin{proof}
    From configuration $D$, Rules~\ref{step:deletion} to~\ref{step:broadcast} cannot modify any agent anymore.
    Since the \textsc{Circles} protocol works against a weakly fair scheduler, there exists a sequence of interactions that lead to its stabilization. 
    Following this sequence, by skipping any interaction that wouldn't modify any agents, we reach $E_{-1}$.
\end{proof}

\begin{lemma}[\appref{lem:finallemma}]\label{lem:finallemma}
    For every $\ell \in [0, k-1]$, there exists a configuration $E_{\ell}$ reachable from $E_{\ell-1}$ with the following conditions: 
    \begin{enumerate}
        \item for every $i$, there exists exactly $x_i$ many bras and $x_i$ many kets of the form $\bra{i, \cdot}$ and $\ket{\cdot, i}$ respectively.
        \item for every $i$, there exist exactly one ket of the form $\ket {\cdot, i}^*$. Moreover, the pairs formed by these kets form a bijection of $[0,k-1]$ to $[0,k-1]$, that is to say, the rankings of the tokens are unique.
        \item The rankings of all bras and kets are consistent with the rankings assigned by the tokens.
        \item The \textsc{Circles} protocol is stabilized on this particular ranking, that is, Rule~\ref{circles} cannot modify anymore any pair of agents.
        \item $E_\ell$ contains the kets $\ket{\mathbf r(j), j}^*$ for all $i \le \ell$, that is, all rankings up to $\ell$ are correct.
    \end{enumerate}
\end{lemma}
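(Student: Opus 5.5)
The plan is induction on $\ell$. The base object is $E_{-1}$, and each step produces $E_\ell$ from $E_{\ell-1}$ by a finite sequence of \emph{improving swaps}. Each swap decreases the current ranking of the colour $c$ with $\mathbf r(c)=\ell$. After each swap we re-run the arguments that gave $D$ and $E_{-1}$: broadcast the new rankings with Rule~\ref{step:broadcast}, then restabilize \textsc{Circles} by following a stabilizing interaction sequence and skipping idle interactions. This restores conditions (1)--(4).

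\textbf{Invariant.} In $E_{\ell-1}$ the rankings $0,\dots,\ell-1$ are held by the correct colours, namely the $\ell$ most popular ones. So $c$ currently has some ranking $m\ge \ell$. If $m=\ell$ we set $E_\ell=E_{\ell-1}$. Otherwise let $d$ be the colour currently ranked $\ell$; then $x_d<x_c$. A swap only touches the rankings of $c$ and of the colour it exchanges with, both of which are $\ge\ell$. Hence the correct rankings below $\ell$ are preserved, and iterating terminates because $m$ strictly decreases.

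\textbf{Case $\ell=0$.} After stabilization, by Lemma~\ref{lem:majowins} and Lemma~\ref{lemma:majority_color} read in ranking-relabelled form, the last greedy independent set is $\{c\}$. So an agent $\braket{c,\rk c}{\rk c,c}$ exists. By Rule~\ref{transfer}(first item) we can move $c$'s token onto it. The token of $d$, which has ranking $0$, sits on some ket $\ket{0,d}$. Letting these two tokens meet triggers the first item of Rule~\ref{swap}, giving $\rk c\leftarrow 0$ and $\rk d\leftarrow m$.

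\textbf{Case $\ell\ge1$.} Let $p$ be the colour ranked $\ell-1$; note $x_p>x_c>x_d$. The argument mimics the proof of the preceding (unnumbered) lemma on wrong rankings, with colours relabelled by current ranking.
\begin{itemize}
\item Every ket of $d$ lies in a braket $\braket{p,\ell-1}{\ell,d}$, since there are fewer $d$'s than $p$'s.
\item The first greedy set (in ranking order) that omits $d$ still contains $p$ and $c$.
\item Choose $e$ to be the colour of smallest ranking larger than $\ell$ in that set. Then there is an agent $\braket{p,\ell-1}{\rk e,e}$ with $\rk e\neq\ell$ and $\rk e\ne 0$.
\end{itemize}
Next use Rule~\ref{transfer}(second item) to drive $d$'s token onto a $\braket{p}{d}$ agent and $e$'s token onto a $\braket{p}{e}$ agent. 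The two tokens then meet under the second item of Rule~\ref{swap}, which exchanges the rankings of $d$ and $e$. After rebroadcasting and restabilizing, either $c$'s ranking has dropped (if $e=c$), or we repeat with the new configuration. To guarantee progress I would use the potential $\sum_{i}|\rk i-\mathbf r(i)|$ restricted to colours of true rank $\ge\ell$, or more simply the lexicographic position of the ranking permutation. I would show that each such swap strictly improves it towards placing $c$ at rank $\ell$.

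\textbf{Main obstacle.} The hardest step is justifying that $e$'s token can actually be parked on a braket whose bra is $p$. Token transfer moves a token towards the \emph{minimal} bra ranking among agents holding that ket. Kets of $e$ may also appear with bras of ranking smaller than $\ell-1$, for instance the wrap-around brakets $\braket{\cdot}{0,\cdot}$ or brakets with bras of higher-count colours. My first attempt would be to choose $e$ minimal as above and argue from the stabilized greedy structure that every ket of $e$ sits after $p$ or after a colour of ranking between $\ell$ and $\rk e$. Since only $d$ lies strictly in that range and $d$ never precedes $e$ in those circles, $p$ would indeed be the minimal bra. If this fails, I would instead swap between $p$'s bra and whichever minimal bra holds $e$'s token, choosing the relevant greedy set so that both $d$ and $e$ follow the same bra.
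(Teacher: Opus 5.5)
Your $\ell=0$ case is the paper's argument. For $\ell\ge1$ you take a detour. You pick $e$ as the successor of $p$ in the circle of $G_{x_d+1}$ (the construction from the unnumbered ``wrong rankings'' lemma), swap $d$ with $e$, and iterate. The paper instead uses the circle of $G_{x_c}$, which consists exactly of the colours currently ranked $0,\dots,\ell-1$ together with $c$. In that circle $c$ directly follows $p$, so $\braket{p,\ell-1}{\rk c,c}$ exists and $c$'s own token can be parked on it by Rule~\ref{transfer}. Then one application of the second item of Rule~\ref{swap} with $d$'s token, which already sits on some $\braket{p,\ell-1}{\ell,d}$, gives $c$ the ranking $\ell$. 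This needs no iteration and no potential function.

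Your route is sound in substance, but two points need fixing.

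\emph{Termination.} $m$ does not decrease when $e\neq c$. Also, $\sum_i|\rk i-\mathbf r(i)|$ need not strictly decrease: with $\ell=1$, $\rk e=2$, $\mathbf r(e)=3$, $\mathbf r(d)=4$, the contribution of $d$ and $e$ is $4$ both before and after the swap. Use instead the count of the colour holding ranking $\ell$: it strictly increases, since $x_e>x_d$, and it is bounded by $x_c$. Your lexicographic order also works.

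\emph{The ``main obstacle''.} It is not one. Since $e$ is not among the $\ell$ most popular colours, $x_e<x_p$, so every circle containing $e$ contains all colours ranked $0,\dots,\ell-1$. Hence the bra preceding any $\ket{\cdot,e}$ has ranking in $[\ell-1,\rk e)$. No wrap-around can occur, because the colour ranked $0$ is present in every such circle. This alone makes $\bra{p,\ell-1}$ the minimal bra on $e$'s kets. Your auxiliary claims, that $d$ is the only colour ranked in $[\ell,\rk e)$ and that $d$ never precedes $e$, are false in general. Lower-count colours, and $d$ itself, can precede $e$ in circles $G_s$ with $s\le x_d$. But those claims are not needed.
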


\appendixproof{lem:finallemma}{\begin{proof}
    In the particular configuration $E_{\ell-1}$, none of the rules~\ref{step:deletion} to~\ref{circles} apply. It thus remains to show how Rules~\ref{transfer} and~\ref{swap} are useful.

    We know that in configuration $E_{\ell-1}$, the $\ell-1$ most popular colors have the correct assigned ranking, that is, $\ket{\mathbf r(j), j}^*$ exists in the configuration for all $j \le \ell-1$, and the \textsc{Circles} protocol has stabilized.
    Let $u$ be the $\ell$-th most popular color, and $v$ be the one assigned rank $\ell$ in $E_{\ell-1}$. 
    If $u=v$, then we are already in configuration $E_\ell$ and there is nothing to prove. We thus assume that $u \neq v$.

    Let $i_1, \dots, i_{\ell-1}$ be the $\ell-1$ most popular colors.
    Since the \textsc{Circles} protocol is stabilized, we know that there is a greedy independent set with 
    the elements $i_1, \dots, i_{\ell-1}, u$.
    \begin{itemize}
        \item If $\ell=0$, then this is simply one element, and there is a braket $\braket{u, \rk{u}}{\rk{u}, u}^\circ$. Through Rule~\ref{transfer}, this braket can get the token for $u$, and through Rule~\ref{swap}, can obtain a newly assigned ranking of $0$.
        \item Otherwise, if $\ell \neq 0$, since $x_{i_{\ell-1}}>x_u$, we have that every greedy independent set containing $u$ also contains $i_{\ell-1}$. 
    Therefore, coupled with the fact that there exists a greedy independent set with $i_1, \dots, i_{\ell-1}, u$, the bra with smallest ranking associated to a ket of the form $\ket{\cdot, u}$ is $\bra{i_{\ell-1}, \ell-1}$. Through Rule~\ref{transfer}, this braket $\braket{i_{\ell-1}, \ell-1}{\cdot, u}$ can get the token for $u$.
    Similarly, we have that every greedy independent set containing $v$ also contains $i_{\ell-1}$. 
    Therefore, coupled with the fact that there exists a greedy independent set with $i_1, \dots, i_{\ell-1}, v, \dots$, the bra with smallest ranking associated to a ket of the form $\ket{\ell, v}$ is $\bra{i_{\ell-1}, \ell-1}$. Through Rule~\ref{transfer}, this braket $\braket{i_{\ell-1}, \ell-1}{\ell, v}$ can get the token for $v$.
    And thus through Rule~\ref{swap}, $\braket{i_{\ell-1}, \ell-1}{\cdot, u}^*$ and $\braket{i_{\ell-1}, \ell-1}{\ell, v}^*$ meet and exchange rankings, with $u$ getting the ranking $\ell = \mathbf r(u)$.
    \end{itemize}
    
    The tokens who have had their rankings changed can now broadcast the new value of their rankings through Rule~\ref{step:broadcast}, and then any steps needed to stabilize the \textsc{Circles} protocol can happen through Rule~\ref{circles}.    
\end{proof}}

As $E_{k-1}=X$, this concludes the proof of \Cref{prop:Xreachable}. 

\begin{proof}[Proof of \Cref{thm:ranking}]
    Since $X$ is a stable configuration, by \Cref{lem:Xstable}, and is reachable from any other configuration by \Cref{prop:Xreachable}, we know that $X$ is the only stable configuration, and is always reached under a globally fair scheduler.
\end{proof}

\section{Future work}
\label{sec:future_work}
Our protocol solves relative majority always-correctly with $O(k^3)$ states, narrowing the gap with the $\Omega(k^2)$ lower bound.
Its versatility further allows computing the full ranking, making it a promising building block for more complex distributed tasks.
During our work, both \prm{} and an independent protocol based on a different logic hit $\Omega(k^3)$ states as a hard barrier.
We therefore believe the most fruitful direction is strengthening the $\Omega(k^2)$ lower bound.
Such improvements are rare in the population protocol model, and we hope this work motivates the development of new lower bound techniques with broader impact on population protocols.

\newcommand\blfootnote[1]{%
  \begingroup
  \renewcommand\thefootnote{}\footnote{#1}%
  \addtocounter{footnote}{-1}%
  \endgroup
}

\begin{acks}
 Funded by the European union. Views and opinions expressed are however those of the author(s) only and do not necessarily reflect those of the European Union or the European Research Council Executive Agency. Neither the European Union nor the granting authority can be held responsible for them. This project has received funding from the European Research Council (ERC) under the European Union's Horizon 2020 research and innovation programme (MoDynStruct, No. 101019564)  \includegraphics[height=8pt]{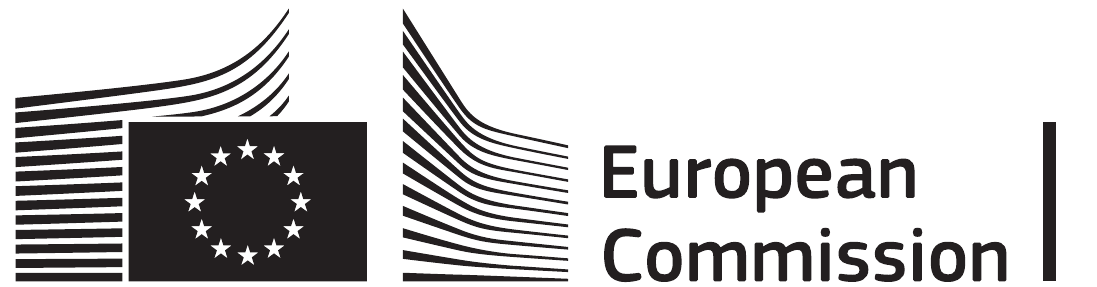} and the Austrian Science Fund (FWF) grant \href{https://www.doi.org/10.55776/I5982}{DOI 10.55776/I5982}.
For open access purposes, the author has applied a CC BY public copyright license to any author-accepted manuscript version arising from this submission.
\end{acks}
\balance
\bibliographystyle{ACM-Reference-Format}
\bibliography{references_stripped}

\section{Missing Proof Details}

\appendixProofs

\end{document}